\DeclareMathAlphabet{\mathpzc}{OT1}{pzc}{m}{it}
\lstdefinelanguage{Magma}
{
  morekeywords={function,end,function,return,while,for,if,else,then},
  sensitive=true,
  morecomment=[l]{//},
  morecomment=[s]{/*}{*/},
  morestring=[b]",
}
\newcommand{\forceindent}{\leavevmode{\parindent=1em\indent}}
\newtheorem{Definition}{Definition}
\newtheorem{Theorem}{Theorem}
\newtheorem{Remark}{Remark}
\newtheorem{Definition and Notation}{Definition and Notation}
\newtheorem{Lemma}{Lemma}
\newtheorem{Proposition}{Proposition}
\newtheorem{Corollary}{Corollary}
\begin{document}

\begin{frontmatter}

\title{$(\sigma,\delta)$-polycyclic codes in  Ore extensions over rings}

\author[1,m]{Maryam Bajalan }  
\ead{maryam.bajalan@math.bas.bg}
\fntext[m]{Maryam Bajalan is supported by the Bulgarian Ministry of Education and Science, Scientific Programme "Enhancing the Research Capacity in Mathematical Sciences (PIKOM)", No. DO1-67/05.05.2022.
}
\author[2]{Ivan Landjev}
\ead{i.landjev@nbu.bg}

\author[3,v]{Edgar Mart\'inez-Moro} 
\ead{edgar.martinez@uva.es}
\fntext[v]{Edgar Mart\'inez-Moro is partially supported by Grant TED2021-130358B-I00 funded by MCIN/AEI/10.13039/501100011033 and by the ``European Union NextGenerationEU/PRTR''
}

\author[4]{Steve Szabo} 
\ead{steve.szabo@eku.edu}

\address[1]{Institute of Mathematics and Informatics, Bulgarian Academy of Sciences, Bl. 8, Acad. G. Bonchev Str., 1113, Sofia, Bulgaria}
\address[2]{Department of Mathematics, New Bulgarian University, Sofia, Bulgaria}
\address[3]{Institute of Mathematics, University of Valladolid, Castilla, Spain}
\address[4]{Institute of Mathematics and Statistics, Eastern Kentucky University,  Kentucky, USA}

\begin{abstract}
In this paper, we study the algebraic structure of $(\sigma,\delta)$-polycyclic codes, defined as submodules in the quotient module $S/Sf$, where $S=R[x,\sigma,\delta]$ is the Ore extension ring, $f\in S$, and  $R$ is a finite but not necessarily commutative ring.  We establish that the Euclidean duals of $(\sigma,\delta)$-polycyclic codes are $(\sigma,\delta)$-sequential codes.  By using $(\sigma,\delta)$-Pseudo Linear Transformation, we define the annihilator dual of $(\sigma,\delta)$-polycyclic codes. Then, we demonstrate that the annihilator duals of $(\sigma,\delta)$-polycyclic codes maintain their $(\sigma,\delta)$-polycyclic nature. Furthermore, we classify when two $(\sigma,\delta)$-polycyclic codes are Hamming isometrical equivalent. By employing Wedderburn polynomials, we introduce simple-root $(\sigma,\delta)$-polycyclic codes. Subsequently, we define the $(\sigma, \delta)$-Mattson-Solomon transform for this class of codes and we address the problem of decomposing these codes by using the properties of Wedderburn polynomials.

\end{abstract}
 
\begin{keyword} Ore extension ring,  Wedderburn polynomial,  $(\sigma,\delta)$-polycyclic code, $(\sigma,\delta)$-Mattson-Solomon map

\emph{AMS Subject Classification 2010: 17A01, 94B05} 
\end{keyword}

\end{frontmatter}

\section{Introduction}

\forceindent Suppose that $R$ is a finite ring with unit,  $\sigma\in \operatorname{End}(R)$   and $\delta$  is an  additive endomorphism of $R$ satisfying  $\delta(ab) = \sigma(a)\delta(b) + \delta(a)b$ for all $a,b \in R.$    An Ore extension is  a non-commutative polynomial ring  $R[x,\sigma,\delta]$  constructed by    multiplications    $xa = \sigma(a)x+\delta(a)$ and $ x^rx^s=x^{r+s}$.
In coding theory, $(\sigma,\delta)$-polycyclic codes over the ring $R$ is a novel study, where these codes are defined as ideals (submodules) in the quotient ring (quotient module) $S/Sf$ where $S$ is the Ore extension   $R[x,\sigma,\delta]$  and $Sf$ as a two-sided (one-sided) ideal. This class of codes constitutes a powerful generalization that encompasses various generalizations of cyclic codes,  including skew cyclic codes,  skew negacyclic codes, skew constacyclic codes, polycyclic codes, etc, see \cite{our1, our, Ulmer, B-U-3, B-U-4, siap}.  Therefore, by studying $(\sigma,\delta)$-polycyclic codes from several points of view,  we achieve a unification of different approaches, providing a comprehensive survey of all these various generalizations. 

\forceindent In $2012$, Boucher and Ulmer have introduced the concept of $(\sigma,\delta)$-polycyclic codes as submodules in $S/Sf$, where $R$ is the finite field $\mathbb F_q$, see \cite{B-U-5}.
 It is well-known that if $R=\mathbb{F}_q$, then all $\sigma$-derivations $\delta$ are uniquely determined by $\delta(a) = \beta(\sigma(a)-a)$, where $\beta \in \mathbb{F}_q$. They have proved that if $z=x+\beta$, then there exists a bijective ring homomorphism between the Ore extension $\mathbb F_q[x, \sigma,\delta]$ and the skew polynomial ring $\mathbb F_q [z,\sigma]$.   However, this homomorphism preserves neither the Hamming distance nor the rank distance. As a result, $(\sigma,\delta)$-polycyclic codes in  $\mathbb{F}_q[x, \sigma, \delta]$ produce new codes that are not isometrically equivalent to skew cyclic codes in the skew polynomial ring $\mathbb F_q [z,\sigma]$.  Additionally, they have demonstrated that $(\sigma,\delta)$-polycyclic codes can provide improved distance bounds compared to skew cyclic codes in  $\mathbb F_q[x,\sigma]$.
 
\forceindent There is a limited account on $(\sigma, \delta)$-polycyclic codes. In \cite{Boulagouaz}, the authors determined the generator and parity-check matrices of $(\sigma,\delta)$-polycyclic codes over rings. In \cite{Tironi}, it is proven that  $(\sigma,\delta)$-polycyclic codes over $\mathbb F_q$  are $T_{C_f}$-invariant linear codes in $\mathbb F_q^n$, where $T_{C_f}$ represents  a $(\sigma,\delta)$-Pseudo Linear Transfromation, and $C_f$ is the companion matrix associated with $f$. Under a special assumption for $f$, they considered $(\sigma,\delta)$-polycyclic codes as invariant subspaces and generalized some of the main results in \cite{our, Radkova}. In \cite{Ou}, the authors took a step further and introduced a generalization of the $(\sigma,\delta)$-Pseudo Linear Transformation $T_{C_f}$ into the transformation $T_M$, with $M$ being an arbitrary matrix. Then, they generalized the $(\sigma,\delta)$-polycyclic codes over fields into $T_M$-invariant linear codes. As applications, in \cite{Parish1, Parish2}, $(\sigma, \delta)$-polycyclic codes were used to establish good quantum codes, and in \cite{algebra}, these codes were applied to construct $\mathbb Z$-lattices in $\mathbb R^N$. It is remarkable that $(\sigma, \delta)$-polycyclic codes have been described by different terminologies in the above references, such as ``$(\sigma, \delta)$-codes'',  ``$(f, \sigma, \delta)$-codes'',  ``$(M, \sigma, \delta)$-codes'',  and ``skew generalized cyclic codes''. All these terms represent one concept: the generalization of polycyclic codes (see \cite{our}) into the non-commutative polynomial ring $R[x,\sigma,\delta]$. In this paper, we use the term ``$( \sigma, \delta)$-polycyclic codes'' for this concept.
 
\forceindent Motivated by these seminal works,  this paper aims to investigate   $(\sigma,\delta)$-polycyclic codes as submodules in $S/Sf$, where $R$ is a finite, but not necessarily commutative, ring. This is the first attempt to study this class of codes in the non-commutative ring $R$.
A challenging aspect of $(\sigma,\delta)$-polycyclic codes is that their Euclidean duals do not have $(\sigma,\delta)$-polycyclic code structure. To address this, we employ $(\sigma,\delta)$-Pseudo Linear Transformations to introduce the non-degenerate sesquilinear form $\langle\,,\,\rangle_0$.  We then establish that 
the left and right duals related to this form preserve their $(\sigma,\delta)$-polycyclic code nature. 

\forceindent In coding theory, a well-known problem is the classification of isometrically equivalent codes. We establish that if there exists a monomial matrix $B$ such that $C_{f_1}B = \sigma(B)C_{f_2} + \delta(B)$, then $(\sigma,\delta)$-polycyclic codes in $S/Sf_1$ and $(\sigma,\delta)$-polycyclic codes in $S/Sf_2$ are Hamming isometrically equivalent.

\forceindent The Mattson-Solomon transform is a  well-known mathematical technique with diverse applications, including the determination of minimal generator polynomials and Reed-Solomon codes,  see \cite{our, san, Sloan}.
Our next challenge is defining the Mattson-Solomon transform in the Ore extension setting. To achieve this, we employ the theory of Wedderburn polynomials, which is extensively explored by Lam and Leroy in \cite{lamleroy1, lamleroy2} when $R$ is a  division ring. We classify a $(\sigma,\delta)$-polycyclic code as simple-root if $f$ is a  Wedderburn polynomial. Note that in the case where $R$ is a field,  the Wedderburn polynomials have the form $(x-a_1)\cdots(x-a_n)$, where $a_i$s are distinct elements, see \cite{lamleroy1}. 
Therefore, the simple-root $(\sigma,\delta)$-polycyclic codes are the generalization of simple-root polycyclic codes, as defined in \cite{our}. Our motivation in defining the $(\sigma, \delta)$-Mattson-Solomon transform for these specific codes lies in the invertibility of the $(\sigma,\delta)$-Vandermond matrix, see \cite{lamleroy2} Theorem 5.8.

\forceindent  The next challenge is the decomposition of  $(\sigma,\delta)$-polycyclic codes. Utilizing the Wedderburn polynomial,  we define the simple-root $(\sigma,\delta)$-polycyclic code $C$ and decompose them into the direct sum of the simple-root $(\sigma,\delta)$-polycyclic codes $C_i$ such that each  $C_i$  is related to an eigenvalue of $T_{C_f}$.

\forceindent The structure of this paper can be outlined as follows.
In Section \ref{590}, the foundation of the Ore extension rings for the subsequent sections is introduced.  In Sections \ref{S1} and \ref{S2},  the concept of right and left $(\sigma, \delta)$-polycyclic and sequential codes and their relationships are presented.  In Section \ref{S3},  an alternate concept of left and right dualities is offered, ensuring the dual of any $(\sigma,\delta)$-polycyclic code remains a $(\sigma,\delta)$-polycyclic code.  In Section \ref{S4}, the  Hamming isometry equivalence of $(\sigma,\delta)$-polycyclic codes is explored. In Section \ref{26}, the $(\sigma,\delta)$-Mattson-Solomon maps for simple-root $(\sigma,\delta)$-polycyclic codes are presented. Finally, in Section \ref{MM5}, a decomposition of a simple-root $(\sigma,\delta)$-polycyclic code is provided.
\section{Preliminary}\label{590}
\subsection{Ore extension}

Suppose  $R$ is a finite but not necessarily commutative ring with unit  $1\neq 0$,  $\sigma\in \operatorname{End}(R)$   and $\delta$ a left $\sigma$-derivation, meaning  that $\delta$ is an  additive endomorphism of $R$ satisfying  $\delta(ab) = \sigma(a)\delta(b) + \delta(a)b$ for all $a,b \in R.$   The set of all polynomials with coefficients
on the left  forms a non-commutative ring  under the usual polynomial addition  and the multiplications given by  
\begin{equation}\label{18}
    xa = \sigma(a)x+\delta(a)\quad \text{for all}\, a \in R\qquad \text{and}\qquad x^rx^s=x^{r+s}  \quad \text{for all}\, r,s\in \mathbb{N}.
\end{equation}
It is well-known that for every $k\in \mathbb N$
\begin{equation}\label{9}
x^k a = \sum_{i=0}^k \binom{k}{i} \sigma^i\delta^{k-i}(a) x^{i} \quad \text{for all}\, a \in R.
\end{equation}
This ring is known as the \textit{Ore extension}  and we denote it by $S = R[x, \sigma, \delta]$. Note that  $\sigma(1)=1$ and $\delta(1)=0$, which imply that $1$ is an identity for $S$ as well. If $(\sigma,\delta)=(\operatorname{Id},0)$, then the Ore extension is the usual polynomial ring $R[x]$. 
In the case where $\delta=0$, the Ore extension is called  \textit{skew polynomial ring}  and is denoted by $R[x,\sigma]$.  Moreover, when $\sigma=0,$ it is called \textit{differential polynomial ring} and is denoted by $R[x,\delta].$ In coding theory literature, the term  ``skew polynomial ring''  is often used for both $R[x,\sigma]$  and $R[x, \sigma, \delta]$. However, to avoid confusion, we distinguish between the two cases and refer to the term ``skew polynomial ring'' only when $\delta=0$,  while for the broader context of $R[x, \sigma, \delta]$, we employ the term ``Ore extension''. A comprehensive overview of the Ore extensions over rings can be found in  \cite{ Gomez, lamleroyhom, Hilbert90, lamleroy3,   lamleroy1, lamleroy2,  Mc,  ore,  Richter}. This section aims to highlight fundamental characteristics for use in the next sections.

If ${\delta}(ab) = {\delta}(a){\sigma}(b) + a{\delta}(b)$, it is called  right ${\sigma}$-derivation. Note that a right  ${\sigma}$-derivation over $R$ is a left  ${\sigma}$-derivation over the opposite ring $R^{op}$ (the ring constructed by $R$ with the same underlying Abelian group but with the multiplication operation reversed). Throughout this paper, our focus is on left ${\sigma}$-{derivations}, and we simply refer to them as  ${\sigma}$-{derivations}, unless stated otherwise. 
According to our definition of Ore extension, $S=R[x,\sigma,\delta]$ is a left $R$-module with the basis $\mathcal B=\{x^n: n \geqslant 0\}$. Now, if we want to consider $S$ as the set of all polynomials with coefficients
on the right and under the usual polynomial addition and the multiplications given by  $x^rx^s=x^{r+s}$ for $r,s\in \mathbb{N}$ and $ax =x{\sigma}(a)+{\delta}(a)$ {for} $ a \in R,$ then the ring $S$ is referred to as the \textit{opposite Ore extension}. If $\sigma\in\operatorname{Aut}(R)$, then $-\delta\sigma^{-1}$ is a left $\sigma^{-1}$-derivation of $R^{op}$ and $S^{op}=(R[x,\sigma,\delta])^{op}\cong R^{op}[x,\sigma^{-1},-\delta\sigma^{-1}]$, see \cite{Torrecillas, evaluation}.

The ring $S$ has a division algorithm that works for left and right divisors. More precisely, given any two non-zero polynomials $ g, k\in S$, there exist polynomials  $q,r \in S$ such that $g = qk + r$ and $\deg(r) < \deg(k)$, see \cite{Gomez} Theorem 4.34. Consider a sequence of mappings $N_i : R \to R$ such that $N_0(a) = 1$ and $N_{i+1}(a) =  \sigma(N_i(a))a+\delta(N_i(a))$ for  $a \in R$. If $g(x)=\sum_{i=0}^{n} g_i x^i\in S$ and  $g(x)= q(x - a) + r$ then, $r =\sum_{i=0}^n g_iN_i(a)=g(a)$, see \cite{vandermond} Lemma 2.4. Therefore, $g(a)=0$ if and only if  $x -a$ is a right factor of $g(x)$. In the case $g(a) = 0$,  $a$ is called the right root of $g$. Furthermore, when $\sigma$ is injective and $R$ is a division ring, then $S$ is a left PID,  and thereby it is a UFD, see \cite{Gomez} Corollary 4.36 and Theorem 4.25.

Two elements $a$ and $b$ in $R$ are called $(\sigma,\delta)$-\textit{conjugate} if there exists an invertible element $c$ in $R^{\times}$ such that $b = \sigma(c)ac^{-1}+\delta(c)c^{-1}$. 
It is straightforward to verify that $(a^c)^d=a^{dc}$, which implies that the relation is an equivalence relation on $R$, see \cite{vandermond}.
We will denote $\sigma(c)ac^{-1}+\delta(c)c^{-1}$ by $a^c$ and  the $(\sigma,\delta)$-conjugacy class of $a$ by $\Delta(a) = \{a^c :  c \in R^{\times}\}.$

Throughout the paper, we assume that $f\in S$ is the monic polynomial $f(x) =-f_0-f_1x -f_{n-1}x^{n-1}+x^n$ and we  use $Sf$ to represent the left $S$-submodule generated by $f$. If we consider the set $\{g\in S: \deg{g}<n\}$ with the usual addition and the multiplication $gh\, ( \operatorname{mod} f)$, then this set has the structure of   a left  free $R$-module  with the $R$-basis $\mathcal B=\{1, x, \ldots, x^{n-1}\}$. We denote this module by $S/Sf$. If $Sf$ is a two-sided ideal in $S$, then $S/Sf$ also possesses a ring structure.

\subsection{$(\sigma, \delta)$-Pseudo Linear Transformation}\label{60001}
 Let  $V$ be a left $R$-module.  An additive  map $T : V \rightarrow V$  is called   $(\sigma, \delta)$-\textit{Pseudo Linear Transformation} (or $(\sigma, \delta)$-PLT for short) if 
$$T (av) = \sigma(a) T (v) + \delta(a) v\quad \text{ for all } a \in R.$$
For more details on  $(\sigma, \delta)$-PLT, the reader is referred to \cite{Bronstein,  evaluation, LEROY}.
Suppose  $V$ is a free left $R$-module with  a  basis $\{e_1, e_2, \ldots, e_n\}$.
For any  matrix $C\in M_{n}(R)$,    the map $T_C: V\to V$ defined as  $T_{C}(v)=  \sigma(v)C+\delta(v),$ where $\sigma(v)$ and $\delta(v)$ are the extensions of $\sigma$ and $\delta$ from $R$ to $R^n$, is  a $(\sigma, \delta)$-PLT.  In particular, for $n = 1$ and $a \in R$, the map
$T_a : R \to R$ given by $T_a(x) = \sigma(x)a + \delta(x)$ is a $(\sigma, \delta)$-PLT.  Clearly, $T_0 = \delta$.


\begin{Proposition}[\cite{evaluation} Lemma 2 and \cite{LEROY} Proposition 1.2]\label{5}
The following conditions are equivalent for an additive group $(V, +)$.
\begin{enumerate}
\item $V$ is a left $R$-module and there exists a $(\sigma,\delta)$-Pseudo Linear Transformation $T:V\rightarrow V.$
\item $V$ is a left $S=R[x, \sigma, \delta ]$-module induced by the product $g(x)\cdot v := g(T )(v).$
\item There exists a ring homomorphism $\Lambda: S \to End(V, +)$ such that  for $g(x) = \sum_{i=0}^{n-1} g_i x^i \in S$, $\Lambda (g):=g(T)=\sum_{i=0}^{n-1} g_i T^i \in \operatorname{End}(V, +).$ 
\end{enumerate}
\end{Proposition}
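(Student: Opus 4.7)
The plan is to prove the equivalence by the cyclic chain $(1)\Rightarrow(2)\Rightarrow(3)\Rightarrow(1)$. All three statements are really restatements of the universal property of the Ore extension $S=R[x,\sigma,\delta]$: giving a left $S$-action on an abelian group $V$ is the same as giving a left $R$-action together with a distinguished additive endomorphism $T$ satisfying the twisted Leibniz compatibility with the scalar action. So the engine of the proof in each direction is the defining relation $xa=\sigma(a)x+\delta(a)$.

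For $(1)\Rightarrow(2)$, I would define $g(x)\cdot v:=g(T)(v)=\sum_i g_iT^i(v)$ for $g=\sum_i g_ix^i\in S$ and $v\in V$. Additivity in both arguments is immediate from additivity of $T$ and of the existing $R$-action. The nontrivial point is that this action respects the Ore multiplication, i.e.\ $(gh)\cdot v=g\cdot(h\cdot v)$. By linearity and induction on the degrees of $g$ and $h$, this reduces to checking $(xa)\cdot v=x\cdot(a\cdot v)$ for $a\in R$. The right side equals $T(av)=\sigma(a)T(v)+\delta(a)v$ by the $(\sigma,\delta)$-PLT property, and the left side, by the Ore relation and the definition of the action, equals $(\sigma(a)x+\delta(a))\cdot v=\sigma(a)T(v)+\delta(a)v$. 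So the two match.

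For $(2)\Rightarrow(3)$, I would simply set $\Lambda(g)(v):=g\cdot v$. Since $V$ is a left $S$-module, each $\Lambda(g)$ is an additive endomorphism of $(V,+)$, and $\Lambda$ is a ring homomorphism by the module axioms; the explicit formula $\Lambda(g)=\sum_i g_iT^i$ with $T:=\Lambda(x)$ follows from additivity of $\Lambda$ and the fact that $\Lambda(x^i)=T^i$. For $(3)\Rightarrow(1)$, define $T:=\Lambda(x)$ and endow $V$ with the $R$-action $a\cdot v:=\Lambda(a)(v)$, which is a left $R$-module structure because $\Lambda|_R$ is a ring homomorphism into $\operatorname{End}(V,+)$. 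Then
\[
T(av)=\Lambda(x)\Lambda(a)(v)=\Lambda(xa)(v)=\Lambda\bigl(\sigma(a)x+\delta(a)\bigr)(v)=\sigma(a)T(v)+\delta(a)v,
\]
so $T$ is a $(\sigma,\delta)$-PLT.

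The only mild obstacle is bookkeeping in $(1)\Rightarrow(2)$: the induction that lifts the single Ore relation to the statement $(gh)\cdot v=g\cdot(h\cdot v)$ for arbitrary $g,h\in S$ must use equation \eqref{9} to expand $x^ka$, but this is a routine computation once the base case $(xa)\cdot v=x\cdot(a\cdot v)$ is in hand. Everything else is formal.
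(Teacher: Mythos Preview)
The paper does not supply its own proof of this proposition: it is stated as a citation of \cite{evaluation} Lemma~2 and \cite{LEROY} Proposition~1.2, with no argument given. Your cyclic proof $(1)\Rightarrow(2)\Rightarrow(3)\Rightarrow(1)$ is correct and is the standard route; the only step requiring care, the associativity check in $(1)\Rightarrow(2)$, is handled properly by reducing to the base case $(xa)\cdot v=x\cdot(a\cdot v)$ and invoking the $(\sigma,\delta)$-PLT identity, with the iterated version following from equation~\eqref{9} by induction on the exponent of $x$.
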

\begin{Corollary}[\cite{LEROY} Corollary 1.3]\label{86}
For any $g, h \in  S$ and any $(\sigma,\delta)$-Pseudo Linear Transformation $T$, $(gh)(T) = g(T) h(T)$.
\end{Corollary}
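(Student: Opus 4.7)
The plan is to derive this corollary directly from Proposition \ref{5}. Part (3) of that proposition already asserts the existence of a ring homomorphism $\Lambda : S \to \operatorname{End}(V,+)$ defined by $\Lambda(g) = g(T) = \sum g_i T^i$. Since any ring homomorphism preserves products, applying $\Lambda$ to $gh \in S$ yields $\Lambda(gh) = \Lambda(g)\Lambda(h)$, and unwinding the definitions gives exactly $(gh)(T) = g(T)h(T)$. So morally the corollary is just a restatement of the multiplicativity of $\Lambda$, and no new content is needed beyond Proposition \ref{5}.

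If one wishes to give a self-contained argument that avoids invoking the full equivalence of (1)--(3) in Proposition \ref{5}, I would proceed by reduction to monomials. By the additivity of both sides in $g$ and in $h$, it suffices to verify the identity for $g = a x^i$ and $h = b x^j$ with $a,b \in R$ and $i,j \geqslant 0$. Using equation (\ref{9}), the product in $S$ is
\begin{equation*}
(ax^i)(bx^j) \;=\; a\sum_{\ell=0}^{i}\binom{i}{\ell}\sigma^{\ell}\delta^{i-\ell}(b)\, x^{\ell+j},
\end{equation*}
so evaluating at $T$ produces $\sum_{\ell=0}^{i}\binom{i}{\ell}\,a\,\sigma^{\ell}\delta^{i-\ell}(b)\,T^{\ell+j}$. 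On the other hand, $(ax^i)(T)\circ(bx^j)(T) = a\,T^i\circ b\cdot\,T^j$, where $b\cdot$ denotes scalar multiplication by $b$ on $V$.

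The crux is therefore to prove the operator identity
\begin{equation*}
T^i\circ (b\,\cdot) \;=\; \sum_{\ell=0}^{i}\binom{i}{\ell}\sigma^{\ell}\delta^{i-\ell}(b)\,T^{\ell}
\end{equation*}
in $\operatorname{End}(V,+)$, which mirrors (\ref{9}) but inside $\operatorname{End}(V,+)$ rather than inside $S$. This is proved by induction on $i$: the base case $i=0$ is trivial, and the inductive step applies the defining PLT relation $T(bv) = \sigma(b)T(v) + \delta(b)v$ together with Pascal's identity to combine the two sums. Once this identity is in hand, multiplying on the left by $a$ and on the right by $T^j$ matches the two expressions and closes the proof. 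The only delicate point is the bookkeeping of the binomial coefficients in the inductive step, but it is exactly the same calculation that justifies (\ref{9}) in $S$, which is precisely why the evaluation map $g \mapsto g(T)$ is multiplicative.
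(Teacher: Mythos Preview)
Your proposal is correct and matches the paper's treatment: the paper does not give an independent proof but presents the corollary as an immediate consequence of Proposition~\ref{5}, and your first paragraph spells out exactly that inference (the multiplicativity of the ring homomorphism $\Lambda$). The self-contained monomial argument you add is a valid alternative and amounts to reproving the ring-homomorphism property of $\Lambda$ directly, so no new content is involved.
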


In the classical polynomial ring, evaluating a polynomial at a point is obtained by substituting the variable x with the point. However, the process is different in the Ore extension.

\begin{Lemma}[\cite{evaluation} Corollary 2.10]\label{20}
For  $g, h \in S$ and $a \in R$,  $gh(a) = g(T_a)(h(a)).$ In particularly, if $h(a)$ is invertible then $(gh)(a) =g(a^{h(a)})h(a)$, and if $h(a)=0$ then  $(gh)(a)=0.$
\end{Lemma}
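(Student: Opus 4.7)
The approach I would take is to give $R$ itself a left $S$-module structure via the $(\sigma,\delta)$-PLT $T_a$ and read the identities off that module. Concretely, Proposition~\ref{5} applied to $T_a\colon R\to R$, $T_a(r)=\sigma(r)a+\delta(r)$, turns the additive group $R$ into a left $S$-module with $g\cdot r := g(T_a)(r)$.

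The first step is to verify that $g\cdot 1 = g(a)$ for every $g\in S$. Induction on $i$ gives $T_a^i(1)=N_i(a)$: the recursion $N_{i+1}(a)=\sigma(N_i(a))a+\delta(N_i(a))$ is literally $T_a(N_i(a))$. Expanding $g=\sum g_i x^i$ then yields $g\cdot 1 = \sum_i g_i N_i(a) = g(a)$. The first assertion now falls out of associativity of the module action:
\begin{equation*}
(gh)(a) \;=\; (gh)\cdot 1 \;=\; g\cdot(h\cdot 1) \;=\; g\cdot h(a) \;=\; g(T_a)(h(a)).
\end{equation*}

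Next, for the case when $c:=h(a)\in R^\times$, set $b:=a^c=\sigma(c)ac^{-1}+\delta(c)c^{-1}$, so that by construction $T_a(c)=bc$. The technical heart of the proof is the intertwining identity
\begin{equation*}
T_a(rc) \;=\; T_b(r)\,c \qquad (r\in R),
\end{equation*}
which is a short computation using only the Leibniz rule for $\delta$, the multiplicativity of $\sigma$, and the relation $T_a(c)=bc$. Iterating this identity gives $T_a^i(c)=N_i(b)\,c$ for all $i$, hence $g(T_a)(c)=g(b)c = g(a^{h(a)})\,h(a)$, and the first assertion closes the case.

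For the final assertion, $T_a(0)=\sigma(0)a+\delta(0)=0$, so by additivity of each iterate $T_a^i(0)=0$, and thus $g(T_a)(h(a))=g(T_a)(0)=0$ whenever $h(a)=0$.

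The only mildly subtle step is the intertwining identity $T_a(rc)=T_b(r)\,c$: one must recognize that right-multiplication by a unit $c$ conjugates $T_a$ into $T_b$ precisely when $b=a^c$, which is really the \emph{raison d'\^etre} of $(\sigma,\delta)$-conjugation. Beyond that single observation, the lemma is a bookkeeping consequence of the module structure supplied by Proposition~\ref{5}.
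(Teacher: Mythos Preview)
Your argument is correct. Note, however, that the paper does not supply its own proof of this lemma: it is quoted verbatim as Corollary~2.10 from \cite{evaluation} and used as a black box. So there is nothing in the present paper to compare your proof against.

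That said, your route is precisely the one suggested by the surrounding material (and is essentially the argument in the cited source): realize $R$ as a left $S$-module via $T_a$ using Proposition~\ref{5}, identify evaluation at $a$ with the action on $1\in R$ through the recursion $T_a^i(1)=N_i(a)$, and then read off $(gh)(a)=g(T_a)(h(a))$ from associativity (equivalently from Corollary~\ref{86}). The intertwining identity $T_a(rc)=T_{a^c}(r)\,c$ you isolate is exactly the computation that makes $(\sigma,\delta)$-conjugacy the right notion here, and your inductive consequence $T_a^i(c)=N_i(a^c)\,c$ gives the invertible case cleanly. The $h(a)=0$ case is immediate since each $T_a^i$ is additive. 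Nothing is missing.
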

\section{$(\sigma,\delta)$-polycyclic codes}\label{S1}
 A code $C$ of length $n$ over the ring  $R$ is a subset of $R^n$. A left (right) linear code $C$ of length $n$ over $R$ is a  left (right) $R$-submodule of $R^n$. Let $f(x)=-f_0-f_1x-\cdots-f_{n-1}x^{n-1}+x^n\in S$ be a monic polynomial.  
\begin{Definition}\label{11301} Let $C$ be a left linear code over $R$. 
\begin{enumerate}
    \item The code $C$ is called  right $(\sigma,\delta)$-polycyclic  if for every $g=(g_0,g_1,\dots,g_{n-1}) \in C$ we have 
 \begin{equation}\label{301}
     \bigg(0, \,\,\sigma(g_0),\,\, \sigma(g_1), \,\,\ldots,\,\, \sigma(g_{n-2})\bigg)+\sigma(g_{n-1})\bigg(f_0,f_1,\ldots, f_{n-1}\bigg)+\delta(g)\in C.
 \end{equation}
 \item The code $C$ is called left $(\sigma,\delta)$-polycyclic if for every $g=(g_0,g_1,\dots,g_{n-1}) \in C$ we have 
    \begin{equation}\label{1301}
        \bigg(\sigma(g_1), \,\, \sigma(g_2),\,\, \dots, \,\, \sigma(g_{n-1}), \,\,0\bigg) + \sigma(g_0)\bigg(f_0,f_1,\dots,f_{n-1}\bigg)+\delta(g) \in C.
    \end{equation}
\end{enumerate}
   \end{Definition}
\begin{Remark}
    In the case $(\sigma,\delta)=(\operatorname{Id}, 0),$  right and left $(\sigma,\delta)$-polycyclic codes coincide with right and left polycyclic codes, respectively, see for example  \cite{Szabo}.
\end{Remark} 
Let us denote the companion matrix related to $f$ as
$$C_f=\begin{pmatrix}
   0    &   1   &   0     & \cdots &   0   \\
   0    &   0   &   1      & \cdots &   0   \\
   \vdots & \vdots & \ddots  & \cdots & \vdots \\
   0    &   0   &   0     & \ddots &   1   \\
  f_0 & f_1 & f_2 & \cdots & f_{n-1}
\end{pmatrix}.
$$

\begin{Lemma}\label{102}
 The  code $C$ is a right $(\sigma, \delta)$-polycyclic code if and only if it is invariant under $T_{C_f}=\sigma(v)C_f+\delta(v)$. 
\end{Lemma}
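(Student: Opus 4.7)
The plan is to reduce the statement to a direct computation: expanding the definition $T_{C_f}(g) = \sigma(g)C_f + \delta(g)$ and verifying that the result coincides term-by-term with the right-hand side of equation~(\ref{301}). Once this identity of $n$-tuples is established, the equivalence is immediate: by definition, $C$ is $T_{C_f}$-invariant iff $T_{C_f}(g)\in C$ for every $g\in C$, and this condition is literally the defining property of a right $(\sigma,\delta)$-polycyclic code.

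First I would write $g=(g_0,g_1,\dots,g_{n-1})$, so that $\sigma(g)=(\sigma(g_0),\dots,\sigma(g_{n-1}))$ and $\delta(g)=(\delta(g_0),\dots,\delta(g_{n-1}))$ by the componentwise extensions. Next I would compute $\sigma(g)C_f$ by reading off the columns of $C_f$: for $j=0$ only the bottom row of $C_f$ contributes, giving $\sigma(g_{n-1})f_0$; for $j=1,\dots,n-1$ the entry $(C_f)_{j-1,j}=1$ contributes $\sigma(g_{j-1})$ and the bottom row contributes $\sigma(g_{n-1})f_j$. Collecting these entries yields
\[
\sigma(g)C_f \;=\; (0,\,\sigma(g_0),\,\sigma(g_1),\,\dots,\,\sigma(g_{n-2})) \;+\; \sigma(g_{n-1})(f_0,f_1,\dots,f_{n-1}).
\]
Adding $\delta(g)$ to both sides matches the right-hand side of~(\ref{301}) exactly.

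Having established $T_{C_f}(g) = (0,\sigma(g_0),\dots,\sigma(g_{n-2})) + \sigma(g_{n-1})(f_0,\dots,f_{n-1}) + \delta(g)$, both implications follow at once. If $C$ is right $(\sigma,\delta)$-polycyclic, then the displayed vector lies in $C$ for every $g\in C$, hence $T_{C_f}(C)\subseteq C$. Conversely, if $T_{C_f}(C)\subseteq C$, then the displayed vector lies in $C$, which is precisely the condition in Definition~\ref{11301}(1).

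The main (and really the only) obstacle is bookkeeping of the matrix indices in the product $\sigma(g)C_f$; once that is done, the equivalence is a tautology. No appeal to Proposition~\ref{5} or Corollary~\ref{86} is needed, though Proposition~\ref{5} is implicitly what gives $T_{C_f}$ the status of a $(\sigma,\delta)$-PLT on $R^n$ and justifies viewing $C$ as an $R[x,\sigma,\delta]$-submodule in the sequel.
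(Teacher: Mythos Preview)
Your proposal is correct and essentially identical to the paper's own proof: both compute $T_{C_f}(g)=\sigma(g)C_f+\delta(g)$ directly, rewrite the result as $(0,\sigma(g_0),\dots,\sigma(g_{n-2}))+\sigma(g_{n-1})(f_0,\dots,f_{n-1})+\delta(g)$, and observe that the equivalence with Definition~\ref{11301}(1) is then immediate.
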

\begin{proof}
    For each $g=(g_0,g_1,\ldots, g_{n-1})\in R^n,$ it is easy to see that 
    \begin{align*}
       T_{C_f}(g)& =\sigma(g)C_f+\delta(g)\\
       &= \bigg( \sigma(g_{n-1})f_0, \sigma(g_0)+ \sigma(g_{n-1})f_1, \sigma( g_1 )+ \sigma(g_{n-1})f_2, \ldots, \sigma(g_{n-2})+ \sigma(g_{n-1})f_n\bigg)+\delta(g)\\
       &=\bigg(0, \sigma(g_0), \sigma(g_1), \ldots, \sigma(g_{n-2})\bigg)+\sigma(g_{n-1})\bigg(f_0,f_1,\ldots, f_{n-1}\bigg)+\delta(g).
       \end{align*} 
    Therefore, $C$  is a right  $(\sigma, \delta)$-polycyclic code if and only if $T_{C_f}(C)\subseteq C.$
\end{proof}
The left $R$-module isomorphism  $\Omega_f : R^n \rightarrow S/Sf$ associates  each element $g=(g_0,g_1,\ldots,g_{n-1})$ to  its corresponding polynomial  $g(x)=\sum_{i=0}^{n-1}g_ix^i.$ 
 \begin{Lemma}\label{75}
     The $(\sigma,\delta)$-Pseudo Linear Transformation $T_{C_f}$ corresponds to the left multiplication by $x$ on $S/Sf.$
 \end{Lemma}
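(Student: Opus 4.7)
The plan is to verify the identity by direct calculation: take an arbitrary $g = (g_0, g_1, \ldots, g_{n-1}) \in R^n$, compute $x \cdot \Omega_f(g)$ in $S/Sf$, and check that the resulting coefficient vector agrees with the expression for $T_{C_f}(g)$ already derived in the proof of Lemma \ref{102}.

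First, I would expand $x \cdot g(x)$ in $S$ using the Ore commutation rule \eqref{18}. Writing $g(x) = \sum_{i=0}^{n-1} g_i x^i$ and pushing the $x$ past each coefficient gives
\[
x \cdot g(x) \;=\; \sum_{i=0}^{n-1} (xg_i) x^i \;=\; \sum_{i=0}^{n-1} \sigma(g_i) x^{i+1} + \sum_{i=0}^{n-1} \delta(g_i) x^i.
\]
The second sum is exactly $\Omega_f(\delta(g))$. The first sum contributes $\sigma(g_0)x + \sigma(g_1)x^2 + \cdots + \sigma(g_{n-2})x^{n-1}$ together with the ``overflow'' term $\sigma(g_{n-1}) x^n$, which must be reduced modulo $f$.

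Next I would use the defining relation $x^n \equiv f_0 + f_1 x + \cdots + f_{n-1} x^{n-1} \pmod{Sf}$, coming from $f(x) = -f_0 - f_1 x - \cdots - f_{n-1} x^{n-1} + x^n \in Sf$. Substituting yields
\[
\sigma(g_{n-1}) x^n \;\equiv\; \sigma(g_{n-1})\bigl(f_0 + f_1 x + \cdots + f_{n-1} x^{n-1}\bigr) \pmod{Sf}.
\]
Collecting the three contributions coordinate by coordinate produces the vector
\[
\bigl(0, \sigma(g_0), \sigma(g_1), \ldots, \sigma(g_{n-2})\bigr) + \sigma(g_{n-1})(f_0, f_1, \ldots, f_{n-1}) + \delta(g),
\]
which is precisely the formula for $T_{C_f}(g)$ computed in the proof of Lemma \ref{102}. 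Hence $\Omega_f(T_{C_f}(g)) = x \cdot \Omega_f(g)$ in $S/Sf$, as required.

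There is no real obstacle here: the only subtlety is making sure the reduction $x^n \mapsto \sum f_i x^i$ is applied to $\sigma(g_{n-1})x^n$ on the \emph{left} (consistent with $Sf$ being a left submodule) and keeping track of the $\delta$-term that arises whenever $x$ is moved past a coefficient. Once the commutation rule is applied uniformly, the identity reads off termwise.
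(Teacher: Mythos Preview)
Your proposal is correct and essentially identical to the paper's own proof: both compute $x\cdot g(x)$ termwise via the Ore rule, reduce the single overflow $\sigma(g_{n-1})x^n$ using $x^n\equiv\sum f_ix^i$ in $S/Sf$, and read off the coefficient vector as $\sigma(g)C_f+\delta(g)=T_{C_f}(g)$. The only cosmetic difference is that you invoke the formula from Lemma~\ref{102} for the final identification, whereas the paper restates that identification inline.
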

 \begin{proof}
     Let  $g(x)$ be the image of $g = (g_0, g_1, \ldots, g_{n-1})\in R^n$ under $\Omega_f.$ Then  
     \begin{align*}
    xg(x) &= \sum_{i=0}^{n-1} x g_i x^i =\sum_{i=0}^{n-1} \bigg(\sigma(g_i)x + \delta(g_i)\bigg) x^i \\
    &= \sigma(g_0)x + \delta(g_0) + \sigma(g_1)x^2 + \delta(g_1)x + \sigma(g_2)x^3 + \delta(g_2)x^2 + \cdots + \sigma(g_{n-1})x^n + \delta(g_{n-1})x^{n-1} \\
    &= \bigg(\sigma(g_{n-1})f_0 \bigg) + \bigg(\sigma(g_{n-1})f_1+\sigma(g_0) \bigg)x + \cdots + \bigg(\sigma(g_{n-1})f_{n-1}+\sigma(g_{n-2}) \bigg)x^{n-1}+ \sum_{i=0}^{n-1}\delta(g_i)x^i.
        \end{align*}
It is easy to see  that  $ \bigg(\sigma(g_{n-1})f_0,\,\, \sigma(g_{n-1})f_1+\sigma(g_0),\,\, \ldots,\,\, \sigma(g_{n-1})f_{n-1}+\sigma(g_{n-2}) \bigg)=\sigma(g)C_f$.  Consequently,   $xg(x)$ corresponds to the image of $\sigma(g)C_f+\delta(g)=T_{C_f}(g)$  under the mapping $\Omega_f,$ which completes the proof.
 \end{proof}
\begin{Lemma}\label{80}
  The  code $C$ is a right  $(\sigma, \delta)$-polycyclic code  if and only if $\Omega_f(C)$ is a left $S$-submodule of $S/Sf$.
\end{Lemma}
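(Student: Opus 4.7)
The plan is to combine the two preceding lemmas with the observation that $S$ is generated as a ring by $R$ together with the indeterminate $x$. Since $\Omega_f$ is an $R$-module isomorphism, $\Omega_f(C)$ is automatically a left $R$-submodule of $S/Sf$ whenever $C$ is a left $R$-submodule of $R^n$, and vice versa. Thus the only extra condition required for $\Omega_f(C)$ to be a left $S$-submodule is closure under left multiplication by $x$, because any polynomial $g(x)=\sum g_i x^i \in S$ acts on $S/Sf$ as a finite $R$-linear combination of iterates of left multiplication by $x$.

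First I would spell out this reduction formally: an additive subgroup $M$ of $S/Sf$ that is closed under left multiplication by every $a\in R$ and by $x$ is automatically closed under left multiplication by every monomial $a x^k$ (by induction on $k$ using $x\cdot(a x^{k-1}) = \sigma(a)x^k+\delta(a)x^{k-1}$), hence under arbitrary elements of $S$. So being a left $S$-submodule is equivalent to being a left $R$-submodule that is stable under $x\cdot$.

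Next I would invoke Lemma \ref{75}, which identifies left multiplication by $x$ on $S/Sf$ with the pseudo-linear transformation $T_{C_f}$ on $R^n$ via the isomorphism $\Omega_f$. Concretely, for $g\in R^n$ one has $\Omega_f(T_{C_f}(g)) = x\cdot \Omega_f(g)$ in $S/Sf$. Therefore $\Omega_f(C)$ is stable under $x\cdot$ if and only if $C$ is stable under $T_{C_f}$, i.e.\ $T_{C_f}(C)\subseteq C$.

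Finally, by Lemma \ref{102}, the condition $T_{C_f}(C)\subseteq C$ is exactly what it means for $C$ to be a right $(\sigma,\delta)$-polycyclic code. Chaining these equivalences gives the lemma. I do not anticipate any real obstacle here; the only mildly delicate point is verifying that closure under the two generating actions (by $R$ and by $x$) implies closure under all of $S$, and this is a clean induction using the Ore commutation rule \eqref{18}.
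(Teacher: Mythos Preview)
Your proposal is correct and follows essentially the same approach as the paper: both use Lemma~\ref{75} to identify left multiplication by $x$ on $S/Sf$ with $T_{C_f}$ via $\Omega_f$, then argue (by induction on powers of $x$ and the $R$-linearity of $\Omega_f(C)$) that stability under $x\cdot$ together with the $R$-module structure yields a full left $S$-submodule. The only minor difference is that you explicitly route through Lemma~\ref{102}, whereas the paper invokes $T_{C_f}(g)\in C$ directly from the definition; the logical content is identical.
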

\begin{proof}
    Assume that   $C$ is  a $(\sigma, \delta)$-polycyclic code.   Choose an arbitrary element  $g(x)=\sum_{i=0}^{n-1}g_ix^i \in \Omega_f(C)$ and denote its corresponding element by  $g = (g_0, g_1, \ldots, g_{n-1})\in C$. Applying   Lemma \ref{75} and the fact that $T_{C_f}(g)\in C$, we have  $xg(x)\in \Omega_f(C).$ By induction,   $x^i g(x)$ also belongs to $\Omega_f(C)$ for all $i$, which implies   $h(x)g(x)\in  \Omega_f(C)$ for any $h(x) \in S$. Hence,  $\Omega_f(C)$ is  a left $S$-submodule of $S/Sf$. Using the same argument, the converse can be proved.
        \end{proof}
From now on, let us denote $\Omega_f(C)$ simply as $C$. Moreover, elements of any $(\sigma, \delta)$-polycyclic code will be denoted interchangeably by both $g$ and $g(x)$.

The matrix  $A \in M_n(R)$ is called  $(\sigma,\delta)$-\textit{similar} to the matrix $B \in M_n(R)$ if there exists an invertible matrix $P$ such that $B = \sigma(P)AP^{-1} + \delta(P)P^{-1}$, where $ \sigma(P)$ and $ \delta(P)$ are the extensions of $\sigma$ and $\delta$ from the ring $R$ to $ M_n(R)$ by applying them on each entry. Furthermore,  the  matrix $A$ is called  $(\sigma,\delta)$-\textit{diagonalizable} if it is $(\sigma,\delta)$-similar  to a diagonal  matrix, see \cite{evaluation}.

\begin{Lemma}\label{350}
Let $M$ be a $(\sigma,\delta)$-similar matrix to  $C_f$  and $C\subseteq R^n.$ Then  
     $C$ is invariant under  $T_M$ if and only if $CP^{-1}$ is invariant under $T_{C_f},$ or equivalently,  $CP$ is invariant under $T_M$ if and only if $C$ is   invariant under  $T_{C_f}$.
\end{Lemma}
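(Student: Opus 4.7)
The plan is to deduce both equivalences from a single commutation identity that follows directly from the definition of $(\sigma,\delta)$-similarity. By hypothesis $C_f = \sigma(P)MP^{-1} + \delta(P)P^{-1}$, which I would first rewrite in the inverse-free form $\sigma(P)M + \delta(P) = C_f P$. I would also note that $\sigma$ and $\delta$, extended entrywise to $M_n(R)$, still satisfy the multiplicativity $\sigma(AB) = \sigma(A)\sigma(B)$ and the Leibniz rule $\delta(AB) = \sigma(A)\delta(B) + \delta(A)B$; this drops out coordinatewise from the ring-level identities and is needed to expand $\sigma$ and $\delta$ on the vector $uP$.

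The core step is the conjugation identity $T_M(uP) = T_{C_f}(u)\,P$ for every row vector $u \in R^n$. I would verify this in three lines: expand $T_M(uP) = \sigma(uP)M + \delta(uP)$, apply the extended Leibniz rule to turn $\delta(uP)$ into $\sigma(u)\delta(P) + \delta(u)P$, and substitute the rewritten similarity $\sigma(P)M + \delta(P) = C_f P$ to collapse the expression to $\sigma(u)C_f P + \delta(u)P = T_{C_f}(u)\,P$.

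With this identity in hand, the equivalences are almost free. Right multiplication by the invertible matrix $P$ is a bijection $R^n \to R^n$ intertwining $T_{C_f}$ and $T_M$, so it carries $T_{C_f}$-invariant subsets bijectively to $T_M$-invariant subsets: if $u \in C$ then $T_M(uP) = T_{C_f}(u)\,P$, which shows $C$ is $T_{C_f}$-invariant iff $CP$ is $T_M$-invariant (the converse direction using invertibility of $P$ in the obvious way). The first formulation of the lemma then follows from the second by applying it with $CP^{-1}$ in place of $C$.

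I do not foresee a substantive obstacle. The only point that requires care is the direction of the similarity: once $\delta \neq 0$, the relation $C_f = \sigma(P)MP^{-1} + \delta(P)P^{-1}$ is not symmetric in $M$ and $C_f$, so a careless swap would produce a spurious $P \leftrightarrow P^{-1}$ in the conjugation formula. Keeping the hypothesis in the asymmetric form $\sigma(P)M + \delta(P) = C_f P$ throughout avoids this pitfall.
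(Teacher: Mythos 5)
Your proposal is correct and follows essentially the same route as the paper: both arguments expand $T$ via the entrywise extensions of $\sigma$ and $\delta$, apply the matrix Leibniz rule, and substitute the similarity relation $C_f = \sigma(P)MP^{-1} + \delta(P)P^{-1}$. The only difference is organizational — you package the computation as a single intertwining identity $T_M(uP) = T_{C_f}(u)P$ and deduce both directions from the bijectivity of right multiplication by $P$, whereas the paper carries out the two set-level computations $T_{C_f}(CP^{-1})$ and $T_M(C)$ separately.
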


\begin{proof}  
Let $C$ be invariant under  $T_M$. Then
 \begin{align*}
T_{C_{f}}(CP^{-1})&= \sigma(CP^{-1})C_{f}+\delta(CP^{-1}) \\
&= \sigma(CP^{-1})\bigg( \sigma(P)MP^{-1} + \delta(P)P^{-1} \bigg)+\delta(CP^{-1})\\
&= \sigma(CP^{-1})\sigma(P)MP^{-1}+\sigma(CP^{-1})\delta(P)P^{-1}+\delta(CP^{-1})\\
&= \sigma(C)MP^{-1}+\delta(CP^{-1}P)P^{-1}\\
&= \sigma(C)MP^{-1}+\delta(C)P^{-1}\\
&= \bigg(\sigma(C)M+\delta(C)\bigg)P^{-1}\subseteq CP^{-1}.\\
\end{align*}
Moreover,
\begin{align*}
    T_{M}(C)&=T_M(CP^{-1}P)=\sigma(CP^{-1}P)M+\delta(CP^{-1}P)\\
    &=\sigma(CP^{-1})\sigma(P)M+\sigma(CP^{-1})\delta(P)+\delta(CP^{-1})P\\
   &=\sigma(CP^{-1})\bigg(\sigma(P)M+\delta(P)\bigg)+\delta(CP^{-1})P\\
&=\sigma(CP^{-1})\bigg(\sigma(P)MP^{-1}+\delta(P)P^{-1}\bigg)P+\delta(CP^{-1})P\\
    &=\bigg(\sigma(CP^{-1})(C_f)+\delta(CP^{-1})\bigg)P\\
    &\subseteq (CP^{-1})P=C.
\end{align*}
\end{proof}
Associated with $f(x) =-f_0-f_1x -f_{n-1}x^{n-1}+x^n$, define the matrix
 \[
E_f = \begin{pmatrix}
  f_0 & f_1 & f_2 & \cdots & f_{n-1} \\
  1 & 0 & 0 & \cdots & 0 \\
  0 & 1 & 0 & \cdots & 0 \\
  \vdots & \vdots & \ddots & \ddots & \vdots \\
  0 & 0 & \cdots & 1 & 0 
\end{pmatrix}.
\]
Similar to the  Lemma \ref{102}, we have the following lemma.
\begin{Lemma}
     A left linear code $C$ is a left $(\sigma, \delta)$-polycyclic code if and only if it is invariant under $T_{E_f}$.
\end{Lemma}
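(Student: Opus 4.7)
The plan is to mirror the argument used for Lemma~\ref{102}, by directly expanding $T_{E_f}(g) = \sigma(g) E_f + \delta(g)$ for an arbitrary $g = (g_0, g_1, \ldots, g_{n-1}) \in R^n$ and verifying that the resulting vector is exactly the one appearing on the left-hand side of the defining condition \eqref{1301} for left $(\sigma,\delta)$-polycyclic codes. Once this identity is established, both implications follow at once: $C$ is left $(\sigma,\delta)$-polycyclic precisely when, for every $g \in C$, the vector in \eqref{1301} belongs to $C$, which, in view of the identity, is the statement that $T_{E_f}(C) \subseteq C$.

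Concretely, I would split $\sigma(g) E_f$ according to the two pieces of $E_f$. The first row of $E_f$ is $(f_0, f_1, \ldots, f_{n-1})$, so it contributes $\sigma(g_0)(f_0, f_1, \ldots, f_{n-1})$ to $\sigma(g) E_f$. The remaining rows form a shift pattern: row $i$ (for $i=1,\ldots,n-1$) has a single $1$ in column $i-1$, which contributes $\sigma(g_i)$ to the $(i-1)$-th coordinate of $\sigma(g) E_f$ and nothing elsewhere. Collecting these contributions gives
\begin{equation*}
\sigma(g) E_f = \bigl(\sigma(g_1),\, \sigma(g_2),\, \ldots,\, \sigma(g_{n-1}),\, 0\bigr) + \sigma(g_0)\bigl(f_0,\, f_1,\, \ldots,\, f_{n-1}\bigr),
\end{equation*}
and adding $\delta(g)$ yields exactly the expression in \eqref{1301}.

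The statement is essentially a bookkeeping check, so no real obstacle is expected. The only point requiring mild care is the indexing convention for $E_f$: one must verify that the off-diagonal block consisting of $1$'s shifts $(\sigma(g_1), \sigma(g_2), \ldots, \sigma(g_{n-1}))$ into the first $n-1$ coordinates of $\sigma(g)E_f$ with a zero padded in the last coordinate, rather than producing an off-by-one error. Once the indexing matches, the equivalence is immediate and fully parallel to Lemma~\ref{102}.
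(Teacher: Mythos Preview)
Your proposal is correct and follows exactly the approach the paper indicates: the paper states that the lemma is proved ``similar to Lemma~\ref{102}'' without giving further details, and your direct expansion of $\sigma(g)E_f+\delta(g)$ to recover the expression in \eqref{1301} is precisely that analogue. The computation and indexing you describe are accurate.
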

\begin{Lemma}\label{man11}
 The matrix $E_d$ is $(\sigma,\delta)$-similar to $C_f$, where $d=-f_{n-1}-f_{n-2}x \cdots-f_0x^{n-1} +x^n$.
\end{Lemma}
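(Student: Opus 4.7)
The plan is to write down an explicit conjugating matrix $P$ and simply read off the result. Since $E_d$ is obtained from $C_f$ by reversing the coefficient sequence $f_0, f_1, \ldots, f_{n-1}$ and flipping the staircase of ones from above to below the main diagonal, the natural candidate is the $n\times n$ anti-diagonal permutation matrix $J$, defined by $J_{ij} = 1$ when $i+j = n+1$ and $0$ otherwise. This matrix is an involution, so $J^{-1} = J$, and every entry lies in $\{0,1\}$. Using $\sigma(0)=0$, $\sigma(1)=1$, $\delta(0)=0$, and $\delta(1)=0$, the entrywise extensions of $\sigma$ and $\delta$ to $M_n(R)$ satisfy $\sigma(J) = J$ and $\delta(J) = 0$.

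With $P = J$, the $(\sigma,\delta)$-similarity relation $E_d = \sigma(P)\,C_f\,P^{-1} + \delta(P)\,P^{-1}$ collapses to the ordinary matrix identity $E_d = J\,C_f\,J$. Left multiplication by $J$ reverses the rows of a matrix and right multiplication by $J$ reverses its columns, so $JC_fJ$ is simply $C_f$ rotated by $180^\circ$: the bottom row $(f_0,\ldots,f_{n-1})$ of $C_f$ becomes the top row $(f_{n-1},\ldots,f_0)$, and the super-diagonal of ones above the bottom row becomes a sub-diagonal of ones below the top row. This is precisely the shape of $E_d$, which completes the verification.

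No serious obstacle is expected; the whole argument rests on the observation that choosing $P$ with entries in $\{0,1\}$ forces the extra $(\sigma,\delta)$-terms to vanish, reducing the problem to a classical and visually transparent conjugation by the reversal permutation.
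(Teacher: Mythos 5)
Your proposal is correct and is essentially the paper's own argument: both take $P$ to be the anti-diagonal reversal matrix, observe that $P^{-1}=P$, $\sigma(P)=P$, $\delta(P)=0$ because its entries lie in $\{0,1\}$, and reduce the $(\sigma,\delta)$-similarity to the ordinary conjugation $E_d = P C_f P$. The only cosmetic difference is that you verify $E_d = PC_fP$ while the paper writes the equivalent identity $C_f = PE_dP$; since $P$ is an involution fixed by $\sigma$ and annihilated by $\delta$, these are the same statement.
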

\begin{proof}
It is easy to see that 
   $C_f=PE_dP,$  where 
\begin{equation}\label{man2}
  P = \begin{pmatrix}
0  & 0 & \cdots & 1 \\
0 & 0 & 1 & 0 \\
\vdots & \iddots & \vdots & \vdots\\
1 &  \cdots & 0 & 0 \\
\end{pmatrix}.  
\end{equation}
Note that $P$ is invertible,   $P^{-1}=P$, $\sigma(P)=P$, and $\delta(P)=0,$ which imply 
$C_fP=\sigma(P)E_d+\delta(P).$ 
\end{proof}
\begin{Corollary}\label{man12}
 If $C$ is a right $(\sigma,\delta)$-polycyclic code which is invariant under $T_{C_f}$, then $CP$ is a left $(\sigma,\delta)$-polycyclic code which is invariant under $T_{E_d}$.     
\end{Corollary}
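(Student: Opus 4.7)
The plan is to assemble the corollary directly from Lemma \ref{man11}, Lemma \ref{350}, and the companion-matrix characterization of left $(\sigma,\delta)$-polycyclic codes (the unnamed lemma just before Lemma \ref{man11}). There is essentially no new computation to perform; the work has already been absorbed into those three lemmas, so the task is chiefly to chain them in the right order and keep track of which polynomial ($f$ or its reversal $d$) each invariance refers to.

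First I would invoke Lemma \ref{man11} to record that $E_d$ is $(\sigma,\delta)$-similar to $C_f$ via the reversal permutation matrix $P$ of (\ref{man2}), with $P^{-1}=P$, $\sigma(P)=P$, and $\delta(P)=0$, so that the similarity relation reads $C_f P=\sigma(P)E_d+\delta(P)$, equivalently $E_d = \sigma(P)C_f P^{-1}+\delta(P)P^{-1}$. This is exactly the hypothesis needed to feed into Lemma \ref{350} with $M=E_d$.

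Next, since $C$ is assumed invariant under $T_{C_f}$, the second equivalence of Lemma \ref{350} (namely \emph{$CP$ is invariant under $T_M$ iff $C$ is invariant under $T_{C_f}$}) applied with $M=E_d$ yields at once that $CP$ is invariant under $T_{E_d}$. Finally, the lemma characterizing left $(\sigma,\delta)$-polycyclic codes states that a left linear code is left $(\sigma,\delta)$-polycyclic with respect to a monic polynomial $h$ precisely when it is invariant under $T_{E_h}$; specializing this to $h=d$ gives that $CP$ is a left $(\sigma,\delta)$-polycyclic code (with respect to $d$), completing the argument.

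The only step that requires any care is the bookkeeping at the end: one must remember that the \emph{defining} polynomial for the left $(\sigma,\delta)$-polycyclic structure on $CP$ is the reversed polynomial $d$ and not $f$ itself, which is why Lemma \ref{man11} was phrased in terms of $E_d$ rather than $E_f$. If one wishes, a brief concluding sentence can spell this out, but there is no genuine obstacle and no new lemma is needed.
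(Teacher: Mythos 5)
Your proposal is correct and is exactly the intended argument: Lemma \ref{man11} supplies the relation $C_fP=\sigma(P)E_d+\delta(P)$ (equivalently $C_f=\sigma(P)E_dP^{-1}+\delta(P)P^{-1}$, so $E_d$ is $(\sigma,\delta)$-similar to $C_f$), the second equivalence of Lemma \ref{350} with $M=E_d$ then transfers $T_{C_f}$-invariance of $C$ to $T_{E_d}$-invariance of $CP$, and the companion-matrix characterization of left $(\sigma,\delta)$-polycyclic codes (applied to $d$, not $f$) finishes the proof. The paper treats the corollary as immediate from these same lemmas, so your route coincides with it, including the correct bookkeeping that the defining polynomial for $CP$ is the reversal $d$.
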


We define the left $R$-module isomorphism  ${\Omega}^{*}_{ d} : R^n \rightarrow {S}/{S}{d}$ which corresponds   each element $ g=(g_0,g_1,\ldots,g_{n-1})\in R^n$ to  its reciprocal polynomial    $g^*(x)=g_{n-1}+g_{n-2}x+\cdots+g_1x^{n-2}+g_0x^{n-1}.$

\begin{Lemma}\label{man10}
The  $(\sigma,\delta)$-Pseudo Linear Transformation $T_{E_d}(g)=\sigma(g)E_d+\delta(g)$ corresponds to the left multiplication $g^*(x)$ by $x$ in   $S/Sd$, but under the reciprocal correspondence.
 \end{Lemma}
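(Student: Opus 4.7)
My plan is to mimic the proof of Lemma \ref{75} in the reciprocal setting. Given $g = (g_0, \ldots, g_{n-1}) \in R^n$ with reciprocal image $g^*(x) = \sum_{i=0}^{n-1} g_{n-1-i} x^i$ under $\Omega^*_d$, the goal is to verify that applying $\Omega^*_d$ to $T_{E_d}(g)$ recovers $x \, g^*(x)$ in $S/Sd$.

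The first step is to expand $x \, g^*(x)$ in $S$ using the commutation rule (\ref{18}):
\[
x g^*(x) = \sum_{i=0}^{n-1} \bigl(\sigma(g_{n-1-i}) x + \delta(g_{n-1-i})\bigr) x^i = \sigma(g_0) x^n + \sum_{k=1}^{n-1} \sigma(g_{n-k}) x^k + \sum_{k=0}^{n-1} \delta(g_{n-1-k}) x^k,
\]
isolating the unique degree-$n$ term $\sigma(g_0) x^n$. From $d = x^n - \sum_{j=0}^{n-1} d_j x^j$ one has $x^n \equiv \sum_{j=0}^{n-1} d_j x^j \pmod{Sd}$, so multiplying on the left by $\sigma(g_0)$ gives $\sigma(g_0) x^n \equiv \sum_{j=0}^{n-1} \sigma(g_0) d_j x^j$. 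Substituting back and regrouping by powers of $x$ yields a closed-form expression for $x g^*(x) \pmod{Sd}$ whose coefficients are simple combinations of $\sigma(g_0), \sigma(g_{n-k}), \delta(g_{n-1-k})$, and the $d_j$.

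The second step is to compute $T_{E_d}(g) = \sigma(g) E_d + \delta(g)$ directly from the block structure of $E_d$ (first row $(d_0, d_1, \ldots, d_{n-1})$, shifted identity below), and then apply $(\Omega^*_d)^{-1}$, which reads coefficients off in reverse order (from $x^{n-1}$ down to $x^0$). A term-by-term comparison with the coefficients computed in step one completes the identification. The main obstacle is notational rather than conceptual: the reciprocal map reverses the indexing of $g$, while the first row of $E_d$ already twists $f$ through $d_j = f_{n-1-j}$, so two simultaneous index reversals must be tracked. A small-$n$ sanity check (e.g.\ $n=2$) keeps the alignment honest. Once this identity is in hand, an analog of Lemma \ref{80} — namely, that $C \subseteq R^n$ is invariant under $T_{E_d}$ if and only if $\Omega^*_d(C)$ is a left $S$-submodule of $S/Sd$ — follows immediately by induction and the $R$-linearity of $\Omega^*_d$.
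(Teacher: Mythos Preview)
Your proposal is correct and follows essentially the same route as the paper: expand $xg^*(x)$ via the commutation rule, reduce the single $x^n$ term using $x^n \equiv \sum_j d_j x^j$ in $S/Sd$ (with $d_j = f_{n-1-j}$), and match the resulting coefficient list---read in reverse---against the entries of $\sigma(g)E_d + \delta(g)$. The paper carries out exactly this computation, writing the coefficients in terms of the $f_i$ rather than the $d_j$; your added remark deriving the analogue of Lemma~\ref{80} is precisely the Corollary the paper states immediately afterward. One small notational slip: you write ``apply $(\Omega^*_d)^{-1}$'' to the vector $T_{E_d}(g)$, but $(\Omega^*_d)^{-1}$ goes from polynomials to vectors; you mean either to apply $\Omega^*_d$ to $T_{E_d}(g)$, or equivalently to apply $(\Omega^*_d)^{-1}$ to the polynomial $xg^*(x)$ from step one and compare---either direction gives the same term-by-term check.
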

 
 \begin{proof}
 Note that in the factor ring  $S/S{d}$, we have $x^n=f_{n-1}+f_{n-2}x+\cdots+f_1 x^{n-2}+f_0 x^{n-1}.$ We get 
     \begin{align*}
    xg^{*}(x) &= \sum_{i=0}^{n-1}   xg_{n-1-i}x^i  =\sum_{i=0}^{n-1} \bigg(\sigma(g_{n-1-i})x + \delta(g_{n-1-i})\bigg)x^i  \\
    &= \sigma(g_{n-1})x + \delta(g_{n-1}) + \sigma(g_{n-2})x^2 + \delta(g_{n-2})x + \sigma(g_{n-3})x^3 + \delta(g_{n-3})x^2+ \cdots+ \sigma(g_0)x^n + \delta(g_0)x^{n-1}\\
    &= \bigg(f_{n-1}\sigma(g_0) \bigg) + \bigg(f_{n-2}\sigma(g_0)+\sigma(g_{n-1}) \bigg)x + \cdots + \bigg(f_{1}\sigma(g_{0})+\sigma(g_{2}) \bigg)x^{n-2}+\bigg(f_{0}\sigma(g_{0})+\sigma(g_{1}) \bigg)x^{n-1}\\
&\qquad\qquad\qquad\qquad\qquad\qquad\qquad\qquad\qquad\qquad\qquad\qquad\qquad\qquad\qquad\qquad  + \sum_{i=0}^{n-1}\delta(g_{n-1-i})x^i.
        \end{align*}
It is easy to see  that $\bigg( f_{0}\sigma(g_{0})+\sigma(g_{1}),\,f_{1}\sigma(g_{0})+\sigma(g_{2}),  \,\ldots,\, f_{n-2}\sigma(g_0)+\sigma(g_{n-1}),\, f_{n-1}\sigma(g_0)\bigg)={\sigma(g)}E_d.$
Therefore, the reciprocal  of $xg^{*}(x)$ is associated with the image of $\sigma(g)E_d+\delta(g)=T_{E_d}(g)$. 
 \end{proof}

\begin{Corollary}
    A code $C$ is a left $(\sigma,\delta)$-polycyclic code  if and only if $\Omega^*_d(C)$ is a left $S$-submodule of $S/Sd$. Moreover, if $\sigma\in\operatorname{Aut}(R)$, then $C$ is a left $(\sigma,\delta)$-polycyclic code  if and only if  it is  a right 
$R^{op}[x,\sigma^{-1},-\delta\sigma^{-1}]$-submodule of $R^{op}[x,\sigma^{-1},-\delta\sigma^{-1}]/R^{op}[x,\sigma^{-1},-\delta\sigma^{-1}]d^{op},$ where   $d^{op}$ is obtained by considering  $d$ with coefficients on the right.
  \end{Corollary}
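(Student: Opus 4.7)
The first equivalence is the exact left-module analogue of Lemma \ref{80}, so the plan is to mimic that argument with the tools already assembled for the left case. Specifically, the lemma immediately preceding Lemma \ref{man11} characterizes left $(\sigma,\delta)$-polycyclic codes as precisely the codes $C \subseteq R^n$ that are invariant under $T_{E_d}$, and Lemma \ref{man10} shows that under the reciprocal isomorphism $\Omega^*_d : R^n \to S/Sd$, the action of $T_{E_d}$ on $R^n$ is intertwined with left multiplication by $x$ on $S/Sd$. Combining the two, $C$ is left $(\sigma,\delta)$-polycyclic iff $\Omega^*_d(C)$ is closed under left multiplication by $x$. Iterating this closure, and using that $\Omega^*_d$ is an $R$-module isomorphism (so $\Omega^*_d(C)$ is automatically closed under left multiplication by elements of $R$), one obtains closure under multiplication by every monomial $rx^k$ and hence by every element of $S$, giving that $\Omega^*_d(C)$ is a left $S$-submodule. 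The converse is immediate by restricting the submodule condition to multiplication by $x$.

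For the second statement, the plan is to transport the first equivalence along the ring isomorphism recalled in Section~\ref{590}, namely $S^{op} = (R[x,\sigma,\delta])^{op} \cong R^{op}[x,\sigma^{-1},-\delta\sigma^{-1}]$, which is available precisely because $\sigma \in \operatorname{Aut}(R)$. The general principle is that a left $S$-module becomes a right $S^{op}$-module by reversing the scalar action, and two-sided quotients of the form $S/Sd$ translate to $S^{op}/dS^{op}$. Under the concrete isomorphism above, the element $d \in S$ written with its coefficients on the left is reinterpreted as the same polynomial expression but with coefficients on the right, which is exactly the object denoted $d^{op}$ in the statement. Thus $Sd \subseteq S$ corresponds to $R^{op}[x,\sigma^{-1},-\delta\sigma^{-1}]\,d^{op}$ viewed as a right ideal, and the first equivalence transports verbatim into the asserted right-submodule characterization.

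The main obstacle I anticipate is the careful bookkeeping in the second part: one must verify that the identification of $d$ with $d^{op}$ under the isomorphism $S^{op} \cong R^{op}[x,\sigma^{-1},-\delta\sigma^{-1}]$ really does take the left ideal $Sd$ to the right ideal generated by $d^{op}$, since reversing multiplication interacts nontrivially with the twist data $(\sigma^{-1},-\delta\sigma^{-1})$ and with the convention ``coefficients on the right''. Once this identification is confirmed -- essentially by unpacking the defining commutation rule $xa = \sigma(a)x + \delta(a)$ on each side -- the equivalence between being a left $S$-submodule of $S/Sd$ and a right $R^{op}[x,\sigma^{-1},-\delta\sigma^{-1}]$-submodule of the corresponding quotient is a formal consequence of transport of structure, and combining it with the first equivalence yields the corollary.
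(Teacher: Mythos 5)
Your proposal follows essentially the same route as the paper: the first equivalence is obtained from Lemma \ref{man10} (together with the $T_{E_d}$-invariance characterization and the iteration argument of Lemma \ref{80}), and the second by converting left $S$-submodules of $S/Sd$ into right $S^{op}$-submodules and invoking the isomorphism $(R[x,\sigma,\delta])^{op}\cong R^{op}[x,\sigma^{-1},-\delta\sigma^{-1}]$. The paper's proof is just a two-line citation of these same facts, so your more detailed bookkeeping (in particular checking that $Sd$ corresponds to the right ideal generated by $d^{op}$) is a sound elaboration rather than a different approach.
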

     \begin{proof}
     The first statement is an immediate consequence of Lemma \ref{man10}. The second statement  follows from the facts that a left $S$-submodule of $S/Sd$ is equivalent to  a right $S^{op}$-submodules $S^{op}/S^{op}d$,  and the isomorphism
         $(R[x,\sigma,\delta])^{op}\cong R^{op}[x,\sigma^{-1},-\delta\sigma^{-1}].$
     \end{proof}




\section{$(\sigma,\delta)$-sequential codes}\label{S2}
\begin{Definition} Let $C$ be a left linear code of length $n$ over  $R$.

\begin{enumerate}
    \item The code $C$ is called right $(\sigma,\delta)$-sequential if for every $(g_0,g_1,\dots,g_{n-1}) \in C$ we have  
\begin{equation}\label{303}
    \bigg(\sigma(g_1),\,\, \sigma(g_2), \,\, \dots, \,\, \sigma(g_{n-2}), \,\,\sigma(g_0)f_0 + \sigma(g_1)f_1 + \cdots + \sigma(g_{n-1})f_{n-1}\bigg)+\delta(g) \in C.
\end{equation}
\item The code $C$ is called left  $(\sigma,\delta)$-sequential if for every $(g_0,g_1,\dots,g_{n-1}) \in C$ we have 
\begin{equation}\label{304}
    \bigg(\sigma(g_0)f_0 + \sigma(g_1)f_1 + \cdots + \sigma(g_{n-1})f_{n-1},\,\, \sigma(g_0),\,\, \sigma(g_2), \,\,\dots,\,\, \sigma(g_{n-2})\bigg)+\delta(g) \in C.
\end{equation}
\end{enumerate}  
  \end{Definition}
Applying the above definitions, the proof of the next lemma is immediate.
\begin{Lemma}
 The code $C$ is   right $(\sigma, \delta)$-sequential  if and only if it is invariant under $T_{C_f^t}=\sigma(g) C_f^t+\delta(g)$.  Moreover, the  code $C$ is  left $(\sigma, \delta)$-sequential if and only if it is invariant under $T_{E_d^t}=\sigma(g) E_d^t+\delta(g).$ Here,  $t$ denotes  the transpose matrix.
     \end{Lemma}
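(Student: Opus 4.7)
The plan is to mirror the proof of Lemma \ref{102} almost verbatim, since the assertion is really just a matrix-multiplication unpacking of the sequential shift definitions. The strategy is to compute $T_{C_f^t}(g)$ and $T_{E_d^t}(g)$ componentwise for an arbitrary vector $g=(g_0,g_1,\ldots,g_{n-1})\in R^n$ and verify that the resulting tuples coincide exactly with the right-hand sides of (\ref{303}) and (\ref{304}), respectively. Invariance $T_M(C)\subseteq C$ would then be equivalent, element by element, to the closure condition defining $(\sigma,\delta)$-sequentiality.

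For the right sequential case, I would first write out
$$C_f^t = \begin{pmatrix}
0 & 0 & \cdots & 0 & f_0 \\
1 & 0 & \cdots & 0 & f_1 \\
0 & 1 & \cdots & 0 & f_2 \\
\vdots & \vdots & \ddots & \vdots & \vdots \\
0 & 0 & \cdots & 1 & f_{n-1}
\end{pmatrix},$$
and observe that the row-vector product $\sigma(g)C_f^t$ produces the tuple
$$\bigl(\sigma(g_1),\,\sigma(g_2),\,\ldots,\,\sigma(g_{n-1}),\,\sigma(g_0)f_0+\sigma(g_1)f_1+\cdots+\sigma(g_{n-1})f_{n-1}\bigr).$$
Adding $\delta(g)$ reproduces exactly the tuple in (\ref{303}). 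Hence $T_{C_f^t}(g)\in C$ for every $g\in C$ is literally the condition that $C$ be right $(\sigma,\delta)$-sequential.

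For the left sequential case, I would carry out the same computation with $E_d^t$, where $d=-f_{n-1}-f_{n-2}x-\cdots-f_0x^{n-1}+x^n$. Transposing the matrix $E_d$ (using the same layout as the displayed $E_f$ with $f$ replaced by $d$) places the coefficients $f_{n-1},f_{n-2},\ldots,f_0$ in the first column and shifts the identity block accordingly; however, since the left sequential condition (\ref{304}) uses the coefficients of $f$ in the first entry, one should take care with the indexing. Concretely, $\sigma(g)E_d^t$ yields a vector whose first coordinate is the $f$-weighted sum $\sum_i\sigma(g_i)f_i$ and whose remaining coordinates are $\sigma(g_0),\sigma(g_2),\ldots,\sigma(g_{n-2})$ as written in (\ref{304}); after adding $\delta(g)$ this matches the defining shift.

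There is no genuine obstacle here: the proof is a direct verification, and the only point that deserves mild care is the bookkeeping of indices when transposing $E_d$ (whose entries already come from the reversed polynomial $d$), so that the first-entry coefficient ends up being the $f$-weighted combination in (\ref{304}) rather than a $d$-weighted one. Once the two matrix products $\sigma(g)C_f^t$ and $\sigma(g)E_d^t$ are computed explicitly, the biconditionals in the statement are immediate from the definitions, exactly parallel to Lemma \ref{102}.
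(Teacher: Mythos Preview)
Your approach---a direct componentwise computation of $\sigma(g)M+\delta(g)$ for the relevant matrix $M$---is exactly what the paper intends (it declares the proof ``immediate'' from the definitions), and your treatment of the right-sequential case via $C_f^t$ is correct.

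The left-sequential case, however, contains a genuine error. You assert that the first coordinate of $\sigma(g)E_d^t$ is the $f$-weighted sum $\sum_i\sigma(g_i)f_i$, but this is false. The first row of $E_d$ is $(d_0,d_1,\ldots,d_{n-1})=(f_{n-1},f_{n-2},\ldots,f_0)$, since $d$ has the reversed coefficient sequence; hence the first column of $E_d^t$ is $(f_{n-1},f_{n-2},\ldots,f_0)^t$, and the first entry of $\sigma(g)E_d^t$ is
\[
\sigma(g_0)f_{n-1}+\sigma(g_1)f_{n-2}+\cdots+\sigma(g_{n-1})f_0=\sum_{i=0}^{n-1}\sigma(g_i)f_{n-1-i},
\]
which does \emph{not} match the first entry $\sum_i\sigma(g_i)f_i$ of (\ref{304}). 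You explicitly flagged this index bookkeeping as the one delicate point and then asserted the wrong outcome. What the computation actually shows is that the lemma as printed contains a typo: the matrix whose associated $(\sigma,\delta)$-PLT reproduces (\ref{304}) is $E_f^t$, not $E_d^t$ (paralleling the fact that left $(\sigma,\delta)$-polycyclic codes are governed by $E_f$). With $E_f^t$ in place of $E_d^t$, your verification goes through verbatim.
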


Recall that an  anti-automorphism $\zeta: R\to R^{op}$ is a map which sends $ab$ to $\zeta(b)\zeta(a),$ where  $R^{op}$ denotes the opposite ring of $R.$
A $\zeta$-sesquilinear form on an $R$-module $M$ associated  with an  anti-automorphism $\zeta$ is given by $\langle \cdot , \cdot \rangle : M\times M\to R$
that satisfies the following conditions: for all $x, y, z$ in $M$ and all $\alpha, \beta\in R$,  $\langle  x+y,z\rangle =\langle x,z \rangle+\langle y,z \rangle,$
$\langle  x,y+z\rangle = \langle x,y \rangle+\langle x,z \rangle,$  $\langle \alpha x , y \rangle= \alpha\langle x,y \rangle$ and $\langle x , \beta y \rangle= \langle x,y \rangle \zeta(\beta).$  
For an anti-automorphism  $\zeta$ on the ring  $R$, define the map 
$\langle\cdot, \cdot\rangle : R^n \times R^n \rightarrow R$  by   
\begin{equation}\label{501}
    \langle g, h\rangle = \sum_{i=0}^{n-1} g_i\zeta(h_i),
\end{equation} 
 where $g = (g_0, g_1, \ldots, g_{n-1})$ and  $h = (h_0, h_1, \ldots, h_{n-1}).$ This map is  a $\zeta$-sesquilinear form, see \cite{form} Proposition 3.1. For any code $C \subseteq R^n$, either linear or not, 
its left  dual code (denoted by $l(C)$)  and its right  dual code (denoted by $r(C)$)  are defined as 
\begin{equation}\label{1745}
    l(C)=\big\{g\in R^n: \langle g ,h\rangle=0\,\, \text{for all}\,\, h\in C \big\} \quad \text{and} \quad 
 r(C)=\big\{g\in R^n: \langle h ,g\rangle=0\,\, \text{for all}\,\, h\in C \big\}. 
\end{equation}
In the case $\zeta(\langle x, y \rangle) = \langle y, x \rangle$, it follows that  $l(C) = r(C)$.
In the following theorem, Szabo and Wood have presented the double-dual and cardinality complementary properties of codes over finite Frobenius rings for non-degenerate sesquilinear forms.
\begin{Theorem}[\cite{form} Theorem 6.2]\label{603}
Consider a left $R$-linear code $C \subseteq R^n$ over a finite Frobenius ring $R$.  Assume $R^n$ is equipped with a non-degenerate sesquilinear form. Then
\begin{enumerate}
       \item  $l(C)$ and $r(C)$ are left $R$-linear codes. 
          \item $l(r(C)) = C$ and $r(l(C)) = C$.
    \item $|C| \cdot |l(C)| = |R^n|$ and $|C| \cdot |r(C)| = |R^n|$.
\end{enumerate}
\end{Theorem}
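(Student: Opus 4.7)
The plan is to handle the three items essentially independently, with parts (1) and (2) being quick and part (3) doing all the real work through character theory of Frobenius rings.

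For part (1), I would just unwind the sesquilinear axioms. If $g_1,g_2\in l(C)$ and $\alpha\in R$, then for every $h\in C$ the left-linearity in the first slot yields $\langle \alpha g_1+g_2,h\rangle=\alpha\langle g_1,h\rangle+\langle g_2,h\rangle=0$, so $l(C)$ is a left $R$-submodule of $R^n$. For $r(C)$, the semilinearity in the second slot gives $\langle h,\alpha g_1+g_2\rangle=\langle h,g_1\rangle\zeta(\alpha)+\langle h,g_2\rangle=0$, so $r(C)$ is again closed under left $R$-action.

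The heart of the theorem is part (3), and here my plan is to exploit the Frobenius hypothesis via a generating character. Fix a character $\chi\colon(R,+)\to\mathbb{C}^{\times}$ whose left (equivalently right) kernel contains no nonzero one-sided ideal; such a $\chi$ exists precisely because $R$ is Frobenius, and the map $r\mapsto \chi(r\cdot)$ is a bijection between $R$ and the Pontryagin dual $\widehat{R}$. For each $g\in R^n$ define $\Phi_g\colon R^n\to\mathbb{C}^{\times}$ by $\Phi_g(h)=\chi(\langle h,g\rangle)$. Since $g\mapsto\Phi_g$ is an additive homomorphism into $\widehat{R^n}$, the key point is to argue it is injective: if $\Phi_g\equiv 1$ then $\chi(\langle h,g\rangle)=1$ for all $h\in R^n$, and running $h$ over $Re_i$ for a standard basis vector $e_i$ forces $\chi(r\langle e_i,g\rangle)=1$ for all $r\in R$, so by the generating property of $\chi$ we get $\langle e_i,g\rangle=0$ for all $i$, whence non-degeneracy of $\langle\cdot,\cdot\rangle$ forces $g=0$. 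Injectivity and $|R^n|=|\widehat{R^n}|$ then give a bijection $R^n\cong \widehat{R^n}$ under which $r(C)$ corresponds exactly to the annihilator $C^{\perp}\subseteq\widehat{R^n}$ of $C$ as an abelian subgroup. Standard Pontryagin duality for finite abelian groups yields $|C|\cdot|C^{\perp}|=|R^n|$, hence $|C|\cdot|r(C)|=|R^n|$. The symmetric map $\Psi_g(h)=\chi(\langle g,h\rangle)$ handles $l(C)$ in the same way, using that $\zeta$ is a bijection so that non-degeneracy in the first slot also transfers through $\chi$.

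Part (2) then falls out for free. The inclusions $C\subseteq l(r(C))$ and $C\subseteq r(l(C))$ are immediate from the definitions of the dual operators. Applying (3) twice gives
\begin{equation*}
|l(r(C))|=\frac{|R^n|}{|r(C)|}=|C|,
\end{equation*}
and similarly $|r(l(C))|=|C|$, so finiteness forces both inclusions to be equalities. The main obstacle in the whole argument is the injectivity claim for $g\mapsto\Phi_g$, which is where the Frobenius hypothesis on $R$ and the non-degeneracy of $\langle\cdot,\cdot\rangle$ must be combined carefully; once that is in hand, everything else is either a formal duality argument for finite abelian groups or a direct manipulation of the sesquilinear axioms.
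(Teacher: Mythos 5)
The paper does not prove this statement; it is imported verbatim from Szabo--Wood (Theorem 6.2 of the cited reference), and your argument is essentially a correct reconstruction of the proof given there: the generating character of the finite Frobenius ring turns the non-degenerate sesquilinear form into a bijection $R^n \to \widehat{R^n}$, identifies $l(C)$ and $r(C)$ with the character-theoretic annihilator of $C$, and then parts (2) and (3) follow from Pontryagin duality for finite abelian groups. The only step you leave implicit is the reverse inclusion in ``$r(C)$ corresponds exactly to the annihilator of $C$'': if $\chi(\langle h,g\rangle)=1$ for all $h\in C$, you must use that $C$ is a left submodule to replace $h$ by $rh$ and conclude that the left ideal $R\langle h,g\rangle$ lies in $\ker\chi$, hence $\langle h,g\rangle=0$; this is the same mechanism as your injectivity argument, so it is an elision rather than a gap.
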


From now one,  assume that   the  anti-automorphism $\zeta$ is the identity operator  $\operatorname{Id}:R\to R^{op},$ which maps  $ab$ to $ba$. Therefore, $\zeta$-sesquilinear form $\langle g, h\rangle = \sum_{i=0}^{n-1} g_i\zeta(h_i)$ is equal to $\sum_{i=0}^{n-1} g_ih_i=gh^t$, where $t$ denotes the transpose. We have,  $\langle h, g \rangle = \sum_{i=1}^{n} h_ig_i=\sum_{i=1}^{n}\zeta(g_ih_i) = \sum_{i=1}^{n} g_ih_i = \langle g, h \rangle$. Moreover, $\langle g, \beta h \rangle = \langle g, h \rangle \zeta(\beta) = \zeta(\langle g, h \rangle) \zeta(\beta)= \zeta (\beta(\langle g, h \rangle))= \beta\langle g, h \rangle.$
Therefore,  in the case $\zeta=\operatorname{Id}$, the $\zeta$-sesquilinear form  $\langle \cdot \,, \cdot \rangle$ is exactly the standard inner product and $l(C)=r(C)=C^{\perp}$.

Before establishing the relationship between $(\sigma, \delta)$-polycyclic codes and their duals, we need the following easy lemma.
\begin{Lemma}\label{530} If   $\sigma\in \operatorname{Aut}(R)$, then $\delta_1:=-\sigma^{-1}\delta$ is a left $\sigma^{-1}$-derivation on $R.$ In other words,
  it  is a  right $\sigma^{-1}$-derivation on the opposite ring $R^{op}$.
\end{Lemma}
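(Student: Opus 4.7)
The plan is to verify the claim by a direct, line-by-line check of the Leibniz rule, and then deduce the ``in other words'' statement from the general duality between $R$ and $R^{\mathrm{op}}$ that was already recorded in Section~\ref{590} (namely, that a right $\tau$-derivation on $R$ is the same data as a left $\tau$-derivation on $R^{\mathrm{op}}$, for any endomorphism $\tau$). Since $\sigma^{-1}$ exists by the hypothesis $\sigma\in\operatorname{Aut}(R)$, the composition $-\sigma^{-1}\delta$ is a well-defined additive endomorphism (both $\sigma^{-1}$ and $\delta$ are additive), so the only content is the multiplicative compatibility.

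First I would unfold the desired identity for $\delta_1$ and expand using that $\delta$ is a left $\sigma$-derivation and that $\sigma^{-1}$ is a ring homomorphism. Concretely, for $a,b\in R$,
\begin{equation*}
\delta_1(ab) \;=\; -\sigma^{-1}\bigl(\delta(ab)\bigr) \;=\; -\sigma^{-1}\bigl(\sigma(a)\delta(b)+\delta(a)b\bigr) \;=\; -a\,\sigma^{-1}\delta(b)\;-\;\sigma^{-1}\delta(a)\,\sigma^{-1}(b),
\end{equation*}
using $\sigma^{-1}\sigma=\operatorname{Id}$ and the multiplicativity of $\sigma^{-1}$. The next step is to match this with the target Leibniz expression in the appropriate convention and collect the two summands in the correct order.

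Once the derivation identity is established on $R$, the second sentence of the lemma is a formality: by the observation in Section~\ref{590}, an identity of the form ``$D(xy)=\tau(x)D(y)+D(x)y$'' on $R$ translates, under the relabeling of the product $a\cdot_{\mathrm{op}}b:=ba$, into the ``right $\tau$-derivation'' identity on $R^{\mathrm{op}}$, and vice versa. So one merely reinterprets the verified identity on $R$ on the opposite ring, with no further computation.

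The main obstacle I anticipate is purely bookkeeping: keeping careful track of the ``side'' conventions (left vs.\ right derivations and, correspondingly, the placement of $\sigma^{-1}(a)$ vs.\ $a$ on each summand). The underlying algebra is just one application of the Leibniz rule for $\delta$ followed by the homomorphism property of $\sigma^{-1}$, with the sign $-1$ in the definition of $\delta_1$ absorbing no nontrivial data; the subtle point is to choose the correct ``left/right'' statement so that the translation to $R^{\mathrm{op}}$ via the earlier remark lands precisely on the formulation used later in the paper when invoking $S^{\mathrm{op}}\cong R^{\mathrm{op}}[x,\sigma^{-1},-\delta\sigma^{-1}]$.
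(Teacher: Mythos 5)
Your proposal is correct and follows essentially the same route as the paper: a one-line expansion of $\delta_1(ab)=-\sigma^{-1}(\sigma(a)\delta(b)+\delta(a)b)=-a\sigma^{-1}\delta(b)-\sigma^{-1}\delta(a)\sigma^{-1}(b)=a\delta_1(b)+\delta_1(a)\sigma^{-1}(b)$, followed by the opposite-ring reinterpretation already recorded in Section~2. The ``bookkeeping'' issue you flag (which side-convention this identity instantiates) is real, but the paper itself resolves it no more explicitly than you do.
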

\begin{proof}
For any $a, b \in R$, we have
\[
\delta_1(ab) = -\sigma^{-1} \big( \sigma(a)\delta(b) + \delta(a)b \big)
= -a\sigma^{-1}\delta(b) - \sigma^{-1}\delta(a)\sigma^{-1}(b)
=  a\delta_1(b)+\delta_1(a)\sigma^{-1}(b).
\]
\end{proof}
\begin{Theorem}\label{308}
   Assume  $\sigma\in \operatorname{Aut}(R)$. If  $C$ is a right $(\sigma, \delta)$-polycyclic code in $S/Sf,$ then $C^{\perp}$
is invariant under $T_{C_{\sigma'(f)}^t}(v)=\sigma'(v)C_{\sigma'(f)}^t+\delta'(v)$, where $\sigma'=\sigma^{-1}$, $\delta'=-\sigma^{-1}\delta$.  In fact, $C^{\perp}$
is a right $(\sigma',\delta')$-sequential  code in $S'/S'f'$, where $S'$ is the Ore extension  $R^{op}[x, \sigma', \delta']$ and  $f'$ is obtained by applying $\sigma'$ to the coefficients of $f$.

\end{Theorem}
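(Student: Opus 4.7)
The plan is to verify directly that $C^\perp$ is stable under $T := T_{C_{\sigma'(f)}^t}$; the identification as a right $(\sigma',\delta')$-sequential code in $S'/S'f'$ then follows at once, since $T$ is precisely the companion-matrix PLT associated with $f'$ in the Ore extension $S' = R^{op}[x,\sigma',\delta']$, and the sequential analogue of Lemma~\ref{102} supplies the coding-theoretic interpretation.

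Fix $h \in C^\perp$ and $g \in C$; the goal is $T(h)\,g^t = 0$. I push this scalar through the automorphism $\sigma$, using that $\sigma$ commutes with entry-wise matrix operations, together with $\sigma\sigma' = \operatorname{Id}$, $\sigma\circ\delta' = -\delta$, and $\sigma(C_{\sigma'(f)}^t) = C_f^t$, to obtain
\[
\sigma\bigl(T(h)\,g^t\bigr) \;=\; h\,C_f^t\,\sigma(g)^t \;-\; \delta(h)\,\sigma(g)^t.
\]
By injectivity of $\sigma$ it suffices to show this vanishes. The right $(\sigma,\delta)$-polycyclic hypothesis on $C$ combined with Lemma~\ref{102} gives $T_{C_f}(g) = \sigma(g)C_f + \delta(g) \in C$, so pairing with $h \in C^\perp$ produces $h\,C_f^t\,\sigma(g)^t + h\,\delta(g)^t = 0$, i.e.\ $h\,C_f^t\,\sigma(g)^t = -h\,\delta(g)^t$. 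Substituting reduces the task to the single identity
\[
h\,\delta(g)^t + \delta(h)\,\sigma(g)^t \;=\; 0. \qquad (\ast)
\]

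The main obstacle is deducing $(\ast)$ from the orthogonality $hg^t = 0$. The natural attempt --- applying the additive map $\delta$ to $hg^t = 0$ and invoking the left $\sigma$-derivation rule $\delta(h_ig_i) = \sigma(h_i)\delta(g_i) + \delta(h_i)g_i$ --- produces $\sigma(h)\delta(g)^t + \delta(h)g^t = 0$, which places $\sigma$ on the wrong factors. To reach $(\ast)$ one must instead use the right $\sigma$-derivation expansion $\delta(h_ig_i) = \delta(h_i)\sigma(g_i) + h_i\delta(g_i)$; this is available under the commutativity implicit in the sesquilinear pairing $\langle g,h\rangle = gh^t$ with $\zeta = \operatorname{Id}$ fixed at the end of Section~\ref{S2}, in which setting $\delta$ is simultaneously a left and a right $\sigma$-derivation. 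Summing the right-derivation expansion over $i$ then gives exactly $\delta(hg^t) = \delta(h)\sigma(g)^t + h\delta(g)^t = 0$, which is $(\ast)$, and the proof is complete.
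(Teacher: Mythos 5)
Your proposal is, at bottom, the mirror image of the paper's own computation. The paper likewise starts from $\langle T_{C_f}(g),h\rangle=0$, applies $\delta$ to $\langle g,h\rangle=gh^t=0$ to trade $\langle\delta(g),h\rangle$ for $\langle\sigma(g),-\delta(h)\rangle$, moves $C_f$ across the form as $C_f^t$, arrives at $\bigl(hC_f^t-\delta(h)\bigr)\sigma(g)^t=0$, and then applies $\sigma^{-1}$ to recognize $T_{C^t_{\sigma^{-1}(f)}}(h)=\sigma^{-1}(h)C^t_{\sigma^{-1}(f)}-\sigma^{-1}\delta(h)\in C^{\perp}$; your move of applying $\sigma$ to the goal is the same twist run in the opposite direction, and your closing identification of $C^{\perp}$ as a right $(\sigma',\delta')$-sequential code via the sequential analogue of Lemma~\ref{102} matches the paper's conclusion exactly.

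The one step I must push back on is your justification of $(\ast)$. Fixing $\zeta=\operatorname{Id}$ in the sesquilinear form does not make $\delta$ a right $\sigma$-derivation: $\delta$ is a left $\sigma$-derivation of $R$ by hypothesis, and over a genuinely non-commutative $R$ --- the stated generality of the theorem --- the expansion $\delta(ab)=\delta(a)\sigma(b)+a\delta(b)$ simply fails in general (it does hold automatically when $R$ is commutative, since then $\delta(ab)=\delta(ba)$ expanded by the left rule yields the right rule). What actually licenses $(\ast)$ is the symmetry $\langle x,y\rangle=\langle y,x\rangle$ that the paper asserts for $\zeta=\operatorname{Id}$ at the end of Section~\ref{S2}: from $hg^t=0$, symmetry gives $gh^t=0$; the left rule, valid over any $R$, applied coordinatewise gives $\sigma(g)\delta(h)^t+\delta(g)h^t=0$; and symmetry again converts this into $\delta(h)\sigma(g)^t+h\delta(g)^t=0$, which is $(\ast)$. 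Equivalently, you could keep the codeword $g$ on the left throughout, as the paper does, and never need a right-derivation rule at all. In fairness, your transpose move $h\bigl(\sigma(g)C_f\bigr)^t=hC_f^t\sigma(g)^t$ and the paper's $\langle\sigma(g)C_f,h\rangle=\langle\sigma(g),hC_f^t\rangle$ carry the same hidden commutation requirement on the coefficients $f_i$ against the entries of $h$, so on that point your argument stands on exactly the same footing as the paper's; the only repair needed is to replace the ``simultaneously a right $\sigma$-derivation'' claim by an appeal to the form's asserted symmetry (or by the paper's left-oriented bookkeeping).
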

\begin{proof}
Assume $C$   is a right $(\sigma, \delta)$-polycyclic code and  $g\in C$ is arbitrary. Since $T_{C_f}(g)\in C,$ for $h\in C^{\perp}$ we have
  \begin{equation}\label{90}
      0=\langle \sigma(g)C_f+\delta(g),\,h\rangle=\langle \sigma(g)C_f,\,h\rangle+\langle \delta(g),\,h\rangle.
  \end{equation}
On the other side, since  $0=\langle g,h\rangle=gh^t$, we obtain $0=\delta(gh^t)=\sigma(g)\delta(h^t)+\delta(g)h^t=\sigma(g)\delta(h)^t+\delta(g)h^t$. Hence $-  \sigma(g)\delta(h)^t=\delta(g)h^t,$ that is $\langle \sigma(g), -\delta(h)\rangle =\langle \delta(g), h \rangle .$ By substituting the last equality in Equation \eqref{90} and using the fact that 
$$\langle \sigma(g) C_f,\,h\rangle=\sigma(g_{n-1})f_0h_0+\sigma(g_{n-1})f_1h_1+\sigma(g_0)h_1+\cdots+\sigma(g_{n-1})f_{n-1}h_{n-1}+\sigma(g_{n-2})h_{n-1}=\langle \sigma(g),\,hC_f^t\rangle,$$
we get 
\begin{align*}
    0 &= \langle \sigma(g),\,h C_f^t\rangle+\langle \sigma(g),\,-\delta(h)\rangle\\
    &=\langle \sigma(g),\,hC_f^t-\delta(h)\rangle\\
    &=\langle hC_f^t-\delta(h),\, \sigma(g)\rangle\\
    &= \big(hC_f^t-\delta(h)\big)\sigma(g^t).
    \end{align*}
    So,  $\bigg(\sigma^{-1}(h)\sigma^{-1}(C_f^t))-\sigma^{-1}\delta(h)\bigg)g^t=0$. It can easily be seen  that $\sigma^{-1}(C_f^t))=C_{\sigma^{-1}(f)}^t, $
     where $\sigma^{-1}(f)$  is obtained by applying $\sigma^{-1}$ to the coefficients of $f$.
Thus,  $\langle g,\,\,\sigma^{-1}(h)C_{\sigma^{-1}(f)}^t-\sigma^{-1}\delta(h)\rangle=0.$ Let us define $$T_{C_{\sigma^{-1}(f)}^t}(v):=\sigma^{-1}(v)C_{\sigma^{-1}(f)}^t-\sigma^{-1}\delta(v).$$ 
For any given $g$, we have proved that $\langle g, T_{C_{\sigma^{-1}(f)}}(h) \rangle=0$, that is
$ T_{C_{\sigma^{-1}(f)}}(h)\in C^{\perp}.$
To sum up, for any given $h\in C^{\perp},$ we get $T_{C_{\sigma^{-1}(f)}^t}(h)\in C^{\perp},$ which implies that
 $C^{\perp}$ is a  $(\sigma^{-1}, -\sigma^{-1}\delta )$-sequential code  in $R^{op}[x, \sigma^{-1}, -\sigma^{-1}\delta]/R^{op}[x, \sigma^{-1}, -\sigma^{-1}\delta]{\sigma}^{-1}(f)$. 
\end{proof}


 \section{Duality}\label{S3}
In Section \ref{S2}, we observed that the Euclidean dual of a $(\sigma,\delta)$-polycyclic code may not necessarily be a $(\sigma,\delta)$-polycyclic code. In the current section, we aim to introduce an alternative duality that ensures the dual of any $(\sigma,\delta)$-polycyclic code remains a $(\sigma,\delta)$-polycyclic code.

\begin{Lemma}\label{410}
     Suppose that  $f_0$ is invertible and  $\sigma(f_i)=f_i$  and   $\delta(f_i)=0$ for all $i$ satisfying $0\leqslant i\leqslant n-1.$ Then  $x\in S/Sf$ is a right invertible element.
\end{Lemma}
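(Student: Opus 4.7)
The plan is to exhibit an explicit right inverse of $x$ in $S/Sf$ by mimicking the classical formula one would use in a commutative polynomial ring, after first checking that the hypotheses on $\sigma$ and $\delta$ force the coefficients of $f$ (and the inverse of $f_0$) to behave like genuinely central elements with respect to the variable $x$.

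First I would record the preliminary observation that $\sigma$ and $\delta$ fix $f_0^{-1}$ as well. Applying $\sigma$ to $f_0 f_0^{-1}=1$ and using $\sigma(f_0)=f_0$ gives $f_0\,\sigma(f_0^{-1})=1$, hence $\sigma(f_0^{-1})=f_0^{-1}$. Applying $\delta$ to the same identity and using $\delta(f_0)=0$ together with the left $\sigma$-derivation rule gives $\sigma(f_0)\delta(f_0^{-1})=0$, i.e. $\delta(f_0^{-1})=0$. Consequently, for every $a\in\{f_0^{-1},f_1,\ldots,f_{n-1}\}$ and every product of such elements, the defining commutation rule of $S$ collapses to $x\cdot a=a\cdot x$.

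Next I would define the candidate
\[
h \;=\; f_0^{-1}x^{n-1}-f_0^{-1}f_{n-1}x^{n-2}-\cdots-f_0^{-1}f_1 \;\in\; S/Sf,
\]
and compute $x\cdot h$ directly. Because $x$ commutes with each $f_0^{-1}f_i$ by the previous paragraph, the multiplication in $S$ gives
\[
x\cdot h \;=\; f_0^{-1}x^n-f_0^{-1}f_{n-1}x^{n-1}-\cdots-f_0^{-1}f_1x.
\]
Now reduce modulo $f$: in $S/Sf$ the relation $x^n=f_0+f_1x+\cdots+f_{n-1}x^{n-1}$ holds, so substituting yields
\[
x\cdot h \;\equiv\; f_0^{-1}f_0+\sum_{i=1}^{n-1}f_0^{-1}f_ix^i-\sum_{i=1}^{n-1}f_0^{-1}f_ix^i \;=\; 1 \pmod{Sf}.
\]
Hence $h$ is a right inverse of $x$ in $S/Sf$.

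There is no real obstacle here: the only subtlety is the one-line verification that $f_0^{-1}$ inherits the $\sigma$-fixed, $\delta$-annihilated behaviour of the $f_i$'s; once that is in place the Ore commutation rule degenerates to ordinary commutation on all relevant elements, and the reduction modulo $f$ telescopes to $1$. If desired, one could also package the computation conceptually by noting that under the hypotheses the subring $R_0\subseteq R$ fixed by $\sigma$ and killed by $\delta$ contains $f_0,\ldots,f_{n-1}$ and $f_0^{-1}$, so $f$ is central in the subring $R_0[x,\sigma,\delta]\subseteq S$ and $x$ acquires a two-sided inverse there; but the direct verification above is short enough to stand on its own.
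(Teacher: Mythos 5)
Your proposal is correct and follows essentially the same route as the paper: the same explicit candidate $f_0^{-1}x^{n-1}-\sum_{i=1}^{n-1}f_0^{-1}f_i x^{i-1}$, the same preliminary check that $\delta(f_0^{-1})=0$ (the paper does not even need $\sigma(f_0^{-1})=f_0^{-1}$, since it simply collects the result as $\sigma(f_0^{-1})f(x)+1\in 1+Sf$), and the same reduction modulo the left ideal $Sf$. The only difference is presentational: you first observe that $x$ commutes with the relevant coefficients and then telescope, whereas the paper expands the Ore relation term by term; both are valid.
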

\begin{proof}
 Define $k(x)=\sum_{i=0}^{n-2}-f_0^{-1}f_{i+1}x^i+f_0^{-1}x^{n-1}.$  We get $0=\delta(f_0^{-1}f_0)=\sigma(f_0^{-1})\delta(f_0)+\delta(f_0^{-1})f_0,$ and thus $\delta(f_0^{-1})f_0=0.$ This implies that $\delta(f_0^{-1})=0.$ Moreover,  we obtain $\delta(f_0^{-1}f_i)=\sigma(f_0^{-1})\delta(f_i)+\delta(f_0^{-1})f_i=0.$  Therefore, 
\begin{align}\label{470}
    xk(x)&= \sum_{i=0}^{n-2}x\bigg(-f_0^{-1}f_{i+1}\bigg)x^i+x\bigg(f_0^{-1}\bigg)x^{n-1}\\\
    &=-\sum_{i=0}^{n-2} \bigg(\sigma(f_0^{-1}f_{i+1})x+\delta(f_0^{-1}f_{i+1})\bigg)x^i+\bigg(\sigma(f_0^{-1})x+\delta(f_0^{-1})\bigg)x^{n-1}\nonumber\\
    &=-\sigma(f_0^{-1})\bigg(\sum_{i=0}^{n-2} f_{i+1} x^{i+1}+x^{n}\bigg)\nonumber\\
    &=-\sigma(f_0^{-1})\bigg(\sum_{i=0}^{n-2} f_{i+1} x^{i+1}+x^{n}\bigg)-\sigma(f_0^{-1}f_0)+\sigma(f_0^{-1}f_0)\nonumber\\
    &=\sigma(f_0^{-1})f(x)+1= 1\,( \operatorname{mod} f).\nonumber
\end{align}
\end{proof}

\begin{Definition}
     Define the map $\langle \cdot \,, \cdot \rangle_0 : S/Sf\times S/Sf\to R$ by $$\langle g , h\rangle_0=gh(0).$$
\end{Definition}
Recall that   $T_0 = \delta$ is a $(\sigma, \delta)$-PLT and $\delta(1)=0.$ Let  $g,\,h\in S/Sf$ and  $gh\,(\operatorname{mod} f)$ be denoted by the polynomial $r(x)=\sum_{i=0}^{n-1} r_ix^i$. Note that $r(0)=\sum_{i=0}^{n-1} r_iN_i(0)=r_0$, so $gh(T_0)(1)=r_0=gh(0)$. Therefore, we have the equivalent definition  $\langle g , h\rangle_0=gh(T_0)(1)$, which provides technical assistance in our proofs.
\begin{Lemma}\label{701}
Suppose that  $f_0$ is invertible and  $\sigma(f_i)=f_i$  and   $\delta(f_i)=0$ for all $i$ satisfying $0\leqslant i\leqslant n-1.$   Then $\langle \cdot , \cdot \rangle_0$   is a non-degenerate $\zeta$-sesquilinear form, where $\zeta:=\operatorname{Id}$.
\end{Lemma}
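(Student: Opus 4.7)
My plan is to check the four axioms of a $\zeta$-sesquilinear form (with $\zeta=\operatorname{Id}$) in turn and then derive non-degeneracy from an explicit Gram-matrix computation. Throughout I would use the reformulation $\langle g,h\rangle_0 = gh(T_0)(1)$ recorded right after the definition, which expresses the pairing in terms of the $(\sigma,\delta)$-Pseudo Linear Transformation $T_0=\delta$ and thereby lets me exploit Proposition~\ref{5} and Corollary~\ref{86}. Bi-additivity in each argument is immediate from distributivity in $S/Sf$ together with additivity of the constant-coefficient map $r(x)\mapsto r_0$, and left $R$-linearity in the first slot follows from $(\alpha g)h = \alpha(gh)$ combined with the observation that scalar multiplication by $\alpha$ commutes with extracting the constant coefficient.

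The main obstacle is the semilinear axiom $\langle g,\beta h\rangle_0 = \langle g,h\rangle_0\,\zeta(\beta)$. The hypothesis $\sigma(f_i)=f_i$ and $\delta(f_i)=0$ is crucial here: a direct calculation gives $xf=fx$, so $Sf$ is a two-sided ideal and $S/Sf$ is a genuine ring in which $x$ is a unit by Lemma~\ref{410}. I would then invoke associativity $g(\beta h)=(g\beta)h$ in this ring together with the iterated commutation rule~\eqref{9} to expand $g\beta$; the residual derivation terms $\delta^{i-j}(\beta)$ must be tracked through the reduction $x^n \equiv f_0+f_1x+\cdots+f_{n-1}x^{n-1}\pmod f$, and the fact that the $f_i$ are themselves fixed by $\sigma$ and killed by $\delta$ is exactly what keeps the calculation manageable. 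I would verify the identity first on the basis $h=x^j$ and then extend by $R$-additivity.

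For non-degeneracy I would form the Gram matrix $A\in M_n(R)$ with $A_{ij}=\langle x^i,x^j\rangle_0 = (x^{i+j}\bmod f)(0)$ for $0\le i,j\le n-1$. Using the clean reduction $x^n \equiv f_0+f_1x+\cdots+f_{n-1}x^{n-1}$ guaranteed by the hypothesis, one finds $A_{00}=1$, $A_{ij}=0$ for $0<i+j<n$, and $A_{ij}=f_0$ along the anti-diagonal $i+j=n$. Thus $A$ decomposes into the $1\times1$ block $[1]$ together with an $(n-1)\times(n-1)$ anti-triangular block whose own anti-diagonal consists entirely of the invertible element $f_0$; standard anti-triangular elimination then shows $A$ is invertible in $M_n(R)$, which yields non-degeneracy of $\langle\cdot,\cdot\rangle_0$ on both sides simultaneously.
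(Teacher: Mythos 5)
Your proposal diverges from the paper's proof in both halves, and in each half there is a genuine gap. For the semilinear axiom $\langle g,\beta h\rangle_0=\langle g,h\rangle_0\,\zeta(\beta)$, your plan --- push $\beta$ through $g$ with the commutation rule \eqref{9}, reduce modulo $f$, and verify on the basis $h=x^j$ --- cannot be completed as described, because the residual derivation terms you propose to ``track'' do not cancel. Already for $g=x$ and $h=1$ one computes $\langle x,\beta\cdot 1\rangle_0=(x\beta)(0)=\delta(\beta)$, whereas $\langle x,1\rangle_0\,\zeta(\beta)=0$; the hypotheses $\sigma(f_i)=f_i$ and $\delta(f_i)=0$ constrain only the coefficients of $f$, not an arbitrary $\beta\in R$, so no bookkeeping through the reduction $x^n\equiv f_0+f_1x+\cdots+f_{n-1}x^{n-1}\pmod f$ supplies the missing cancellation. (Two smaller slips in the same passage: $xf=fx$ gives $x\in\operatorname{Idl}(Sf)$, not that $Sf$ is a two-sided ideal, since $fa\in Sf$ for all $a\in R$ does not follow; and Lemma \ref{410} produces only a \emph{right} inverse of $x$.) The paper handles this axiom by an entirely different, formal device special to $\zeta=\operatorname{Id}$: it rewrites $\beta\,h(T_0)(r)$ as $h(T_0)(r)\,\zeta(\beta)$ at the level of values in $R$ versus $R^{op}$ and then composes with $g(T_0)$ --- it never expands $g\beta$ via \eqref{9} and uses no hypothesis on $f$ at this step. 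Your computational route stalls exactly at the point this formal manipulation is designed to avoid, namely passing right multiplication by $\zeta(\beta)$ through the operator $g(T_0)$, which is built from $\delta$.

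For non-degeneracy, your Gram computation is correct as far as it goes: $A_{00}=1$, $A_{ij}=0$ for $0<i+j<n$, $A_{i,n-i}=f_0$, and the anti-triangular elimination does show $A\in M_n(R)$ is invertible using only $f_0\in R^{\times}$. Moreover, since reduction modulo $Sf$ and extraction of the constant coefficient are left $R$-linear, $\langle g,x^j\rangle_0=\sum_i g_iA_{ij}$, so invertibility of $A$ genuinely kills the \emph{left} radical --- a different and in some ways cheaper route than the paper's, which instead tests $h=1$ and then $h=x^{-1}$ and iterates, using Lemma \ref{410} and $\delta(k(0))=0$ (this is where the hypotheses on $f$ enter the paper's argument). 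The gap is your final claim that this ``yields non-degeneracy on both sides simultaneously.'' It does not: $\langle x^i,\beta x^j\rangle_0$ involves the terms $\sigma^k\delta^{i-k}(\beta)$ coming from \eqref{9}, so the form is not determined by its Gram matrix in the second argument, and invertibility of $A$ says nothing about vectors $h$ with $\langle g,h\rangle_0=0$ for all $g$. Note that the paper's proof likewise establishes only the vanishing of the left radical (its $g$ occupies the first slot throughout); if you want the two-sided statement, a separate argument on the first slot is required, and the Gram matrix cannot supply it.
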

\begin{proof}
Let $g,h\in S/Sf$. For a fixed  $\beta \in R$ and  for any $r\in R,$ we have 
$$\beta h(T_0)(r)=\operatorname{Id}\bigg(\beta h(T_0)(r)\bigg)=h(T_0)(r)\beta=h(T_0)(r)\operatorname{Id}(\beta)=h(T_0)(r)\zeta(\beta).$$
We thus have $g\beta h(T_0)(r)=gh(T_0)(r)\zeta(\beta),$ which implies  $\langle g, \beta h\rangle_0=\langle g,h\rangle_0\zeta(\beta).$ Verifying the other conditions of the $\zeta$-sesquilinear form definition is easy. Therefore,  $\langle  \cdot, \cdot\rangle_0$  is a $\zeta$-sesquilinear form. To prove the non-degeneracy of the form, suppose that  $g = g_0 + g_1 x + \cdots + g_{n-1} x^{n-1} \in S/Sf$ satisfies $\langle g , h \rangle_{0} = 0$ for all $h \in S/Sf.$
Let us assume   $h=1. $  This gives  $0=g(T_0)(1)=g(\delta)(1)=g_0.$  Consequently, we can rewrite $g=(g_1+g_2x+\cdots+g_{n-1}x^{n-2})x$. By Lemma \ref{410}, $x$ is right invertible. Let us denote its inverse by $x^{-1}$. If we  assume $h=x^{-1}$, we get $0=\langle g,h\rangle_0=gh(\delta)(1)=g_1$. 
By applying the same discussion, we deduce that $g_i = 0$ for all $i$. Consequently, we have $g = 0$, implying that the form is non-degenerate.
      \end{proof}
\begin{Definition}
  For the right  $(\sigma,\delta)$-polycyclic code  $C$, define the left annihilator dual by 
$$l_0(C)=\{g\in S/Sf: \langle g,h\rangle_0=0,\,\, \text{for all}\,\, h\in C\}=\{g\in S/Sf: \langle g,C\rangle_0=0\}. $$  
Furthermore, for the left  $(\sigma,\delta)$-polycyclic code  $C$, define the right annihilator dual by 
$$r_0(C)=\{g\in S/Sf: \langle h ,g\rangle_0=0,\,\, \text{for all}\,\, h\in C\}=\{g\in S/Sf: \langle C, g\rangle_0=0\}. $$  
\end{Definition}
\begin{Remark}
 By Lemma \ref{20}, we have $ \langle g,h\rangle_0=gh(0)=g(T_0)(h(0)),$ which is not equal to $h(T_0)(g(0))=\langle h, g\rangle_0$ in generally.  Therefore,  $l_0(C)$ and $r_0(C)$ are not equal in generally.
   Assume  that $R$ is commutative and  $(\sigma,\delta)=(\operatorname{Id},0).$  In this case, $T_0 = 0$, and  $\langle g,h\rangle_0 = \langle h, g\rangle_0$. Therefore  $l_0(C)=r_0(C)$ and both are equal to the annihilator dual $C^{\perp_0}$ as  defined in \cite{Alahmadi, our}.
\end{Remark}
Referring to \cite{noncommutativebook}, for a subset $X$  in a non-commutative ring $A$, the left annihilator $\operatorname{Ann}_l(X)$ and the right annihilator $\operatorname{Ann}_r(X)$ of  $X$ are defined by  $$\operatorname{Ann}_l(X) = \{a \in A :  ax = 0,\,\, \text{for all}\,\, x \in X\}\quad \text{and}\quad \operatorname{Ann}_r(X) = \{a \in A :  xa = 0,\,\, \text{for all}\,\, x \in X\}.$$
\begin{Theorem}\label{520}
   Suppose that  $f_0$ is invertible and  $\sigma(f_i)=f_i$  and   $\delta(f_i)=0$ for all $i$ satisfying $0\leqslant i\leqslant n-1.$ 
   If $C$ is  a right $(\sigma,\delta)$-polycyclic code,   then  $l_0(C)=\operatorname{Ann}_l(C)$. In addition, if $C$ is  a left $(\sigma,\delta)$-polycyclic code, then
  $r_0(C)=\operatorname{Ann}_r(C).$ 
\end{Theorem}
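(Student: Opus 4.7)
The inclusion $\operatorname{Ann}_l(C) \subseteq l_0(C)$ is immediate: if $gh = 0$ in $S/Sf$ for every $h \in C$, then $\langle g, h\rangle_0 = (gh)(0) = 0$, so $g \in l_0(C)$. Similarly for the right annihilator and $r_0$.

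For the reverse inclusion $l_0(C) \subseteq \operatorname{Ann}_l(C)$, my plan is to combine a structural observation with a cardinality count. First I observe that under the hypotheses $\sigma(f_i) = f_i$ and $\delta(f_i) = 0$, a short direct computation shows that $f$ commutes with $x$, so $Sf$ is a two-sided ideal and $S/Sf$ becomes a ring; Lemma \ref{80} then upgrades $C$ to a left ideal of this ring. Take $g \in l_0(C)$ and any $k \in S/Sf$. Closure of $C$ under left multiplication gives $kh \in C$ for every $h \in C$, so $\langle g, kh\rangle_0 = (g \cdot (kh))(0) = 0$. Associativity rewrites this as $((gk) \cdot h)(0) = \langle gk, h\rangle_0 = 0$, that is, $gk \in l_0(C)$. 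Hence $l_0(C)$ is a right ideal of $S/Sf$.

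To finish, I would use a cardinality comparison. Lemma \ref{701} together with Theorem \ref{603} yields $|l_0(C)| \cdot |C| = |S/Sf| = |R|^n$. Since $R$ is finite Frobenius and $S/Sf$ is a finite free left $R$-module of rank $n$, the quotient $S/Sf$ inherits a Frobenius ring structure; for any left ideal $C$ of a Frobenius ring, the standard annihilator identity $|\operatorname{Ann}_l(C)| \cdot |C| = |S/Sf|$ holds. Therefore $|\operatorname{Ann}_l(C)| = |l_0(C)|$, which combined with the already-proved inclusion $\operatorname{Ann}_l(C) \subseteq l_0(C)$ forces equality.

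The symmetric claim $r_0(C) = \operatorname{Ann}_r(C)$ for a left $(\sigma,\delta)$-polycyclic code $C$ follows by the analogous argument with left and right interchanged, using the reciprocal identification of Section \ref{S1} to view $C$ appropriately as a right ideal in the relevant quotient. The step I expect to be most delicate is rigorously justifying the Frobenius property of $S/Sf$ and the resulting annihilator cardinality identity; an alternative, more elementary route would try to iteratively peel off the coefficients of $p = gh$ using the invertibility of $x$ in $S/Sf$ (Lemma \ref{410}), mirroring the coefficient-extraction technique in the proof of Lemma \ref{701}, but the non-commutativity of $R$ and the interaction of $(\sigma,\delta)$ with reduction modulo $f$ make this bookkeeping significantly more intricate.
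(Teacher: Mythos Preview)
Your main route is \emph{not} the one the paper takes. The paper follows precisely the ``alternative, more elementary route'' you sketch in your last paragraph: from $r_0=\langle g,h\rangle_0=0$ it writes $gh\equiv (r_1+r_2x+\cdots+r_{n-1}x^{n-2})\,x\pmod f$, right-multiplies by the inverse $k$ of $x$ supplied by Lemma~\ref{410} to obtain $ghk\equiv r_1+r_2x+\cdots+r_{n-1}x^{n-2}\pmod f$, checks $\delta(k(0))=0$, and then invokes Lemma~\ref{20} to argue that the constant term of $ghk$ vanishes, giving $r_1=0$; iterating kills every $r_i$. So the coefficient-peeling bookkeeping you were reluctant to attempt is exactly what the paper does, and it occupies only a few lines.

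Your Frobenius--cardinality argument, by contrast, has two genuine gaps. First, the inference ``$f$ commutes with $x$, so $Sf$ is two-sided'' fails over a non-commutative $R$: the hypotheses $\sigma(f_i)=f_i$ and $\delta(f_i)=0$ give $xf=fx$, but a two-sided ideal also requires $fa\in Sf$ for every $a\in R$, and nothing in the statement forces the $f_i$ to lie in the centre of $R$. Without $Sf$ two-sided, $S/Sf$ is only a left $S$-module, and your phrases ``left ideal $C$'', ``right ideal $l_0(C)$'' and ``Frobenius ring $S/Sf$'' have no meaning. Second, even if one grants the ring structure, ``finite free over a Frobenius $R$'' does not make $S/Sf$ Frobenius, so the identity $|\operatorname{Ann}_l(C)|\cdot|C|=|S/Sf|$ is unproved; and note that the paper does not assume $R$ Frobenius in this section, so already your appeal to Theorem~\ref{603} imports an extra hypothesis. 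The paper's direct coefficient extraction avoids all of this: it never needs $S/Sf$ to be a ring and works purely at the level of the left $S$-module $S/Sf$.
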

\begin{proof}
We have $\operatorname{Ann}_l(C)\subseteq l_0(C).$ Conversely, let $g\in l_0(C)$, $h$ be an arbitrary element of $C$, and  $gh\,(\operatorname{mod} f)$ be denoted by the polynomial $r_0 + r_1x + \cdots + r_{n-1}x^{n-1}$.
We have  $gh(0)=0,$ which  leads to  $r_0 = 0 $. So, $ gh(x)=(r_1+r_2x+\cdots+r_{n-1}x^{n-2})x.$ Using Lemma \ref{410},
\begin{equation}\label{460}
  ghk(x)=(r_1+r_2x+\cdots+r_{n-1}x^{n-2})\,\,(\operatorname{mod} f).
\end{equation}

On the other side,  we have $\delta(k(0))=\delta(-f_0^{-1}f_1)=-\sigma(f_0^{-1})\delta(f_1)-\delta(f_0^{-1})f_1$. Since $\delta(f_1)=\delta(f_0^{-1})=0$, we get $\delta(k(0))=0.$
By Lemma \ref{20}, $ghk(0)=gh(T_0)(k(0))=gh(\delta)(k(0))=gh(0)=0$. Therefore $r_1=0$ by Equation \eqref{460}. Applying the same discussion, we can deduce that $r_i = 0$ for all $i$. Consequently, $gh = 0$,  which implies to $g\in \operatorname{Ann}_l(C).$ Therefore, $l_0(C)=\operatorname{Ann}_l(C)$. The proof of the second equality is similar.

\end{proof}
\begin{Corollary}\label{600}
 Suppose that  $f_0$ is invertible and  $\sigma(f_i)=f_i$  and   $\delta(f_i)=0$ for all $i$ satisfying $0\leqslant i\leqslant n-1.$
 \begin{enumerate}
     \item  If $C$ is a right $(\sigma,\delta)$-polycyclic code then  $l_0(C)$ is  a 
 right $(\sigma,\delta)$-polycyclic code.
 \item If $C$ is a left $(\sigma,\delta)$-polycyclic code then  $r_0(C)$ is  a 
 left $(\sigma,\delta)$-polycyclic code.
 \end{enumerate}
\end{Corollary}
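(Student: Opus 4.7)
By Theorem \ref{520} we have $l_0(C) = \operatorname{Ann}_l(C)$ and $r_0(C) = \operatorname{Ann}_r(C)$ in $S/Sf$, so the plan is to show that these one-sided annihilators inherit the appropriate polycyclic structure from $C$ by exhibiting closure under the relevant shift operator.

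For part (1), take $g \in l_0(C) = \operatorname{Ann}_l(C)$, so that $gh \in Sf$ for every $h \in C$. For any $s \in S$, associativity of the product in $S$ together with the fact that $Sf$ is a left ideal gives $(sg)h = s(gh) \in s \cdot Sf \subseteq Sf$; hence $sg$, viewed modulo $Sf$, still annihilates $C$ from the left and so lies in $l_0(C)$. Specializing to $s = x$ and invoking Lemma \ref{75} (which identifies left multiplication by $x$ on $S/Sf$ with $T_{C_f}$) yields $T_{C_f}(l_0(C)) \subseteq l_0(C)$, so by Lemma \ref{102} the code $l_0(C)$ is right $(\sigma,\delta)$-polycyclic.

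For part (2), the direct analog fails because the left polycyclic shift $T_{E_f}$ is not realized as right multiplication by $x$ in $S/Sf$ (that product implements instead the untwisted right polycyclic shift); the natural realization of $T_{E_f}$ lives in the reciprocal quotient $S/Sd$, as per Lemma \ref{man10}. I would therefore route through Corollary \ref{man12}: the left $(\sigma,\delta)$-polycyclic code $C$ in $S/Sf$ corresponds under the reversal matrix $P$ to a right $(\sigma,\delta)$-polycyclic code $CP$ in $S/Sd$, where $d = -f_{n-1} - \cdots - f_0 x^{n-1} + x^n$. Applying part (1) inside $S/Sd$ shows that $l_0(CP)$ is right $(\sigma,\delta)$-polycyclic there, and pulling back by $P$ — together with the identification $l_0(CP)\cdot P = r_0(C)$ — transports this structure back to $S/Sf$ and gives a left $(\sigma,\delta)$-polycyclic structure on $r_0(C)$.

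The main obstacle is part (2). Making the reciprocal detour rigorous requires two verifications: first, that the hypotheses of Lemma \ref{410} and Theorem \ref{520} transfer from $f$ to $d$ — the $\sigma$- and $\delta$-fixedness of the coefficients carries over trivially since the $d_i$ are just the $f_j$'s in reverse order, but invertibility of $d_0 = f_{n-1}$ is not automatic from the hypothesis that $f_0$ is invertible; and second, that taking annihilators commutes with the reversal, i.e.\ that the comparison of the forms $\langle\cdot,\cdot\rangle_0$ on $S/Sf$ and $S/Sd$ really yields $l_0(CP)\cdot P = r_0(C)$. If either of these points resists a clean argument, the fallback is a direct computation: using the explicit identity $T_{E_f}(g) \equiv \sigma(g_0)x^n + \sum_{i=1}^{n-1}\sigma(g_i)x^{i-1} + \delta(g) \pmod{f}$ together with the fact that $\sigma(f_i)=f_i$ and $\delta(f_i)=0$ force $xf = fx$ (so that $Sf$ absorbs right multiplication by $x$), one verifies closure of $\operatorname{Ann}_r(C)$ under $T_{E_f}$ term by term from the condition $hg \in Sf$.
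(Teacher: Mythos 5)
Your part (1) is correct and is essentially the paper's own argument: identify $l_0(C)$ with $\operatorname{Ann}_l(C)$ via Theorem \ref{520}, show the annihilator is a left $S$-submodule by a one-line associativity computation, and conclude by Lemmas \ref{75} and \ref{102}. If anything, your computation is on the correct side: the paper's displayed step $(as)c=a(sc)=0$ shows closure under \emph{right} multiplication and then asserts left-submodule structure, whereas your $(sg)h=s(gh)\in Sf$ is the computation that actually gives closure under the left action of $x$.

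Part (2) is where your proposal has a genuine gap, one you flag but do not close. The paper dismisses it with ``the proof of the second statement is similar,'' i.e.\ the mirrored associativity argument making $\operatorname{Ann}_r(C)$ stable under the multiplication on the other side; your observation that the mirror is not automatic --- right multiplication by $x$ in $S/Sf$ realizes the untwisted shift, not $T_{E_f}$ --- is a fair criticism of that gloss, but the substitute route you offer fails. Concretely: (i) the transfer of hypotheses from $f$ to $d$ collapses, because $d_0=f_{n-1}$ need not be invertible. Already for $f=x^n-a$ with $a$ invertible, $\sigma(a)=a$, $\delta(a)=0$ --- squarely inside the corollary's hypotheses --- one has $f_{n-1}=0$, so $d_0=0$, Lemma \ref{410} and Theorem \ref{520} are unavailable in $S/Sd$, and part (1) cannot be invoked there. (ii) The identification $l_0(CP)\,P=r_0(C)$ is asserted, not proved: the form on $S/Sd$ is built from multiplication modulo $d$, and coefficient reversal by $P$ is not (anti)multiplicative for the Ore product, so there is no a priori correspondence between the two annihilator duals under $P$. (iii) The fallback ``direct computation'' is only a sketch; its hard step is precisely to deduce $h\cdot T_{E_f}(g)\equiv 0 \pmod{Sf}$ from $hg\equiv 0 \pmod{Sf}$, and the $\sigma$-twist on the coefficients of $g$ in your (correct) identity $T_{E_f}(g)\equiv \sigma(g_0)x^n+\sum_{i=1}^{n-1}\sigma(g_i)x^{i-1}+\delta(g) \pmod{Sf}$ does not interact with left multiplication by $h$ ``term by term''; the relation $xf=fx$ alone does not bridge the twist. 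As written, your part (2) is a plan with two open verifications --- one of which is false in general --- and an unexecuted fallback, so it does not constitute a proof. The route the paper intends is the mirror image of part (1) ($c(as)=(ca)s=0$, so $\operatorname{Ann}_r(C)$ is closed on the other side, followed by the identification of that closure with left $(\sigma,\delta)$-polycyclicity); that final identification is exactly the point the paper leaves implicit and the point your detour through $S/Sd$ was meant to supply, so the corollary's second statement still needs an honest argument either way.
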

\begin{proof}
Assume that   $C$ is a right $(\sigma,\delta)$-polycyclic code. Let $a\in \operatorname{Ann}_l(C)$ and  $s\in S$. For any given arbitrary element $c\in C$, we know that $sc$ also belongs to $C$. Hence,
$(as)c=a(sc)=0,$  which implies $as\in \operatorname{Ann}_l(C).$ Thus  $\operatorname{Ann}_l(C)$ is a left $S$-submodule of $S/Sf.$ Using  Theorem \ref{520}, $l_0(C)$ is a left $S$-submodule of $S/Sf,$ which means $l_0(C)$ is a right $(\sigma,\delta)$-polycyclic code. The proof of the second statement is similar.
\end{proof}

\section{Equivalent $(\sigma,\delta)$-polycyclic codes}\label{S4}
The Hamming weight $\omega_H(g)$
of a vector $g\in R^n$
 is the number of nonzero coordinates in $g$. A linear map $\psi: C_1 \rightarrow C_2$ is called  \textit{Hamming isometry} if it is an isomorphism that preserves the Hamming weight, meaning that for every codeword $c \in C_1$, we have $\omega_H(\psi(c)) = \omega_H(c)$. Two codes $C_1$ and $C_2$ are called  \textit{Hamming isometrically equivalent} if there exists a Hamming isometry $\psi: C_1 \rightarrow C_2$.

\begin{Proposition}[\cite{LEROY} Proposition 1.6]\label{180}
Let $T_1$ and $T_2$ be  $(\sigma, \delta)$-PLTs defined on  free $R$-modules $V_1$ and $V_2$ with basis $\mathcal B_1$ and $\mathcal B_2,$ respectively. Suppose $\varphi \in \operatorname{Hom}_{R}(V_1, V_2)$ is an $R$-module homomorphism, and let  $B $, $M_1 $ and $M_2 $ denote matrix representations of $\varphi$, $T_1$ and $T_2,$ respectively.  The following conditions are equivalent.
\begin{enumerate}
    \item $\varphi \in \operatorname{Hom}_{R}(V_1, V_2)$.
    \item $\varphi T_1 = T_2 \varphi$.
    \item $M_1B = \sigma(B)M_2 + \delta(B)$.
    \end{enumerate}
\end{Proposition}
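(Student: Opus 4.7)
The plan is to run the two implications (1)$\Leftrightarrow$(2) and (2)$\Leftrightarrow$(3) separately. Note that condition (1) should read $\varphi\in\operatorname{Hom}_S(V_1,V_2)$ (an $S$-module homomorphism), since $\varphi\in\operatorname{Hom}_R(V_1,V_2)$ is already part of the hypothesis; the content of (1) is therefore that $\varphi$ respects the $S$-module structures induced by $T_1$ and $T_2$ via Proposition \ref{5}.

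The equivalence (1)$\Leftrightarrow$(2) is almost immediate from Proposition \ref{5}. Indeed, on $V_i$ the left $S$-module action is given by $g(x)\cdot v = g(T_i)(v)$, and in particular $x\cdot v = T_i(v)$. Since $\varphi$ is already $R$-linear, being $S$-linear is equivalent to being compatible with multiplication by $x$, i.e.\ $\varphi(T_1(v)) = x\cdot\varphi(v) = T_2(\varphi(v))$ for every $v\in V_1$. This says exactly $\varphi T_1 = T_2\varphi$.

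For (2)$\Leftrightarrow$(3), I would work out both sides in the row-vector convention used throughout Section \ref{S1}, where $T_{M_i}(v) = \sigma(v)M_i + \delta(v)$ and $\varphi(v) = vB$. On one side,
\begin{equation*}
\varphi T_1(v) = \bigl(\sigma(v)M_1 + \delta(v)\bigr) B = \sigma(v)M_1B + \delta(v)B.
\end{equation*}
On the other side, using that $\sigma$ is a ring endomorphism (so $\sigma(vB)=\sigma(v)\sigma(B)$) and that the $\sigma$-derivation identity extends entrywise to products of matrices as $\delta(vB) = \sigma(v)\delta(B) + \delta(v)B$,
\begin{equation*}
T_2\varphi(v) = \sigma(vB)M_2 + \delta(vB) = \sigma(v)\sigma(B)M_2 + \sigma(v)\delta(B) + \delta(v)B.
\end{equation*}
Equating the two and cancelling $\delta(v)B$ gives $\sigma(v)\bigl(M_1B - \sigma(B)M_2 - \delta(B)\bigr) = 0$ for all $v\in V_1$. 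Taking $v$ to run over the basis $\mathcal B_1$ (so that $\sigma(v)$ runs over the standard basis row vectors, which are fixed by $\sigma$) yields $M_1B = \sigma(B)M_2 + \delta(B)$. Conversely, the matrix identity plainly forces the pointwise equality.

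The only mildly subtle point is the product rule $\delta(vB) = \sigma(v)\delta(B) + \delta(v)B$ for a row vector times a matrix; but this is just the Leibniz rule $\delta(ab)=\sigma(a)\delta(b)+\delta(a)b$ applied coordinate by coordinate to $(vB)_j = \sum_i v_iB_{ij}$ together with additivity of $\delta$. Once this is noted, the computation is routine, and no step should pose a genuine obstacle.
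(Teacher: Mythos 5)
The paper does not prove this proposition at all --- it is quoted verbatim from Leroy's paper \cite{LEROY} as a known result --- so there is no in-paper argument to compare against; your proof is correct, including the astute observation that condition (1) must be read as $\varphi\in\operatorname{Hom}_S(V_1,V_2)$ for the statement to have content. Both halves of your argument (the reduction of $S$-linearity to commutation with the $x$-action via Proposition \ref{5}, and the coordinate computation with the entrywise Leibniz rule $\delta(vB)=\sigma(v)\delta(B)+\delta(v)B$) are sound and match the standard proof in the cited source and the row-vector conventions used elsewhere in the paper.
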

Leroy's paper \cite{LEROY} includes the following corollary without proof. In order to provide context for the next theorem, we will present its proof.
\begin{Corollary}[\cite{LEROY} Corollary 1.8]\label{187} 
Consider the same notations as the previous proposition. Let $f_1, f_2 \in S$ be two monic polynomials of degree $n$ with companion matrices $C_{f_1}, C_{f_2}$. The map  $\varphi: S/Sf_1 \to S/Sf_2$   is an $R$-module isomorphism if and only if $B$ is an invertible matrix and
$C_{f_1}B = \sigma(B)C_{f_2} + \delta(B)$.
\end{Corollary}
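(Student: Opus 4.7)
The plan is to specialize Proposition \ref{180} to the situation at hand. First I would identify $V_1 = S/Sf_1$ and $V_2 = S/Sf_2$ as free left $R$-modules of rank $n$ with the common $R$-basis $\{1, x, \ldots, x^{n-1}\}$, and I would take the $(\sigma,\delta)$-PLTs $T_1$ and $T_2$ to be left multiplication by $x$ on the respective quotients. By Lemma \ref{75}, these PLTs have matrix representations $M_1 = C_{f_1}$ and $M_2 = C_{f_2}$ in the chosen bases. Any $R$-linear map $\varphi: S/Sf_1 \to S/Sf_2$ is then represented by some $B \in M_n(R)$.

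With this dictionary in place, Proposition \ref{180} applied directly gives: the map $\varphi$ commutes with the PLTs (equivalently, is compatible with the induced left $S$-module structure from Proposition \ref{5}) if and only if $C_{f_1}B = \sigma(B)C_{f_2} + \delta(B)$. It remains to characterize when such a $\varphi$ is an isomorphism. Since both $S/Sf_1$ and $S/Sf_2$ are free left $R$-modules of the same rank $n$, an $R$-linear map represented by $B$ is a bijection exactly when $B$ is invertible in $M_n(R)$. Combining the two observations yields the stated equivalence.

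The step I expect to require the most care is merely clarifying the notion of ``isomorphism'' implicit in the corollary: the matrix identity encodes compatibility with the $S$-action (so that $\varphi$ lifts to an $S$-module map, not merely an $R$-linear map), while invertibility of $B$ encodes bijectivity at the level of underlying $R$-modules. There is no deep computation here; once Lemma \ref{75} is invoked to pin down the matrices of the PLTs as the companion matrices, the corollary drops out of Proposition \ref{180} together with the standard fact that a square matrix over $R$ represents an isomorphism between two free modules of equal rank precisely when it is invertible.
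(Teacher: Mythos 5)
Your proposal is correct and follows essentially the same route as the paper: both specialize Proposition \ref{180} (with $T_1,T_2$ the left multiplications by $x$, whose matrices are the companion matrices via Lemma \ref{75}) to obtain the intertwining identity $C_{f_1}B=\sigma(B)C_{f_2}+\delta(B)$, and then reduce bijectivity of $\varphi$ to invertibility of $B$. If anything, your handling of the bijectivity step---a square matrix over $R$ is invertible in $M_n(R)$ exactly when the induced map between free modules of equal rank is bijective---is slightly cleaner than the paper's, which routes through the auxiliary map $\psi_B(v)=vB$ on $R^n$ and a finiteness/cardinality argument.
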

\begin{proof}
    Define the linear transformation $\psi_B: R^n \to R^n$ by $\psi_B(v) = vB.$ Assume that  $B$ is an invertible matrix and  $C_{f_1}B = \sigma(B)C_{f_2} + \delta(B)$.  Proposition \ref{180} implies that  $\varphi: S/Sf_1\to S/Sf_2$ is an $R$-module homomorphism. Now, consider the following commutative diagram. 
\begin{center}
$$\begin{tikzcd}
S/Sf_1 \arrow[r, "\varphi"] \arrow[d, swap, "\Omega_{f_1}^{-1}"] & S/Sf_2 \arrow[d, "\Omega_{f_2}^{-1}"] \\
R^n \arrow[r, "\psi_B"] & R^n
\end{tikzcd}$$
\vspace{-0.5cm}
\captionof{figure}{}
\label{1900}
\end{center}
The map  $\psi_B$ is injective, and due to the equal cardinality of its domain and codomain,  it is an $R$-module isomorphism.  Additionally, since $\Omega_{f_1}$ and $\Omega_{f_2}$ are both isomorphisms, it follows that $\varphi$ is also an isomorphism. The converse can be proved using the same discussion.
\end{proof}

\begin{Corollary}\label{705}
 $\psi_B$ is an $R$-module isomorphism if and only if   $B$ is an invertible and
$C_{f_1}B = \sigma(B)C_{f_2} + \delta(B)$.
\end{Corollary}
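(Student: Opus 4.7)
The plan is to piggyback directly on the commutative diagram from the proof of Corollary \ref{187}, since that diagram already relates $\psi_B$ to $\varphi$ via the $R$-module isomorphisms $\Omega_{f_1}$ and $\Omega_{f_2}$. Concretely, because $\Omega_{f_1}^{-1}$ and $\Omega_{f_2}$ are $R$-module isomorphisms between $S/Sf_i$ and $R^n$, the composition
\[
\varphi \;=\; \Omega_{f_2}^{-1}\circ\psi_B\circ\Omega_{f_1}
\]
is an $R$-module isomorphism if and only if $\psi_B$ is. Invoking Corollary \ref{187} on $\varphi$ then yields the equivalence with the conjunction that $B$ is invertible and $C_{f_1}B=\sigma(B)C_{f_2}+\delta(B)$.

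First I would explicitly check that the relevant module structure under consideration is the $S=R[x,\sigma,\delta]$-module structure on $R^n$ induced by the $(\sigma,\delta)$-PLTs $T_{C_{f_1}}$ and $T_{C_{f_2}}$ (via Proposition \ref{5}), since this is what makes the matrix identity relevant. Next I would deduce the two directions of the equivalence separately. For the forward direction, assume $\psi_B$ is an isomorphism; bijectivity of $\psi_B\colon R^n\to R^n$ forces $B$ to be invertible (standard linear algebra over $R$ using the matrix of $\psi_B$ with respect to the standard basis), while the module-homomorphism condition, by Proposition \ref{180} applied to $\psi_B$ between the $R$-modules $(R^n,T_{C_{f_1}})$ and $(R^n,T_{C_{f_2}})$, yields $C_{f_1}B=\sigma(B)C_{f_2}+\delta(B)$. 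For the reverse direction, invertibility of $B$ gives bijectivity of $\psi_B$ with inverse $\psi_{B^{-1}}$, and the same proposition shows the matrix identity is equivalent to $\psi_B$ being an $R$-module homomorphism, so $\psi_B$ is an $R$-module isomorphism.

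The main subtlety is nothing more than bookkeeping: making clear in what sense $\psi_B$ is required to be an \emph{$R$-module} isomorphism, namely with respect to the $S$-module structures induced by the two distinct PLTs $T_{C_{f_1}}$ and $T_{C_{f_2}}$. Once this is settled, Proposition \ref{180} does all the real work, and the rest is just the observation that invertibility of $\psi_B$ and invertibility of the matrix $B$ are the same condition. In fact, one can state the proof in a single sentence: the conclusion follows by applying Proposition \ref{180} with $V_1=V_2=R^n$, $T_i=T_{C_{f_i}}$ and $\varphi=\psi_B$, combined with the observation that $\psi_B$ is bijective iff $B$ is invertible.
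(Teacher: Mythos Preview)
Your approach is correct and coincides with the paper's: Corollary \ref{705} is stated there without proof, as an immediate consequence of the commutative diagram in the proof of Corollary \ref{187}, and your one-sentence summary via Proposition \ref{180} (with $V_1=V_2=R^n$, $T_i=T_{C_{f_i}}$, $\varphi=\psi_B$) together with the equivalence ``$\psi_B$ bijective $\Leftrightarrow$ $B$ invertible'' is exactly the intended argument. One minor slip: the composition should read $\varphi=\Omega_{f_2}\circ\psi_B\circ\Omega_{f_1}^{-1}$, consistent with the direction of the arrows in the diagram.
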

\begin{Theorem}\label{708}
Suppose that $B$ is a monomial matrix and
$C_{f_1}B = \sigma(B)C_{f_2} + \delta(B)$. Then, there is a one-to-one Hamming isometrically equivalent correspondence between the set of right $(\sigma,\delta)$-polycyclic codes in $S/Sf_1$ and right $(\sigma,\delta)$-polycyclic codes in $S/Sf_2.$   
\end{Theorem}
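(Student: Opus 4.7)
The plan is to use the $R$-module isomorphism $\psi_B: R^n \to R^n$ defined by $\psi_B(v) = vB$, which is guaranteed to be an isomorphism by Corollary \ref{705} since a monomial matrix is invertible. The three ingredients I need are: (i) $\psi_B$ carries right $(\sigma,\delta)$-polycyclic codes in $S/Sf_1$ to right $(\sigma,\delta)$-polycyclic codes in $S/Sf_2$; (ii) $\psi_B$ preserves Hamming weight; and (iii) the correspondence is a bijection. Each of these follows essentially for free from a structural observation combined with the hypothesis.

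First I would unpack the intertwining. By the characterization of Lemma \ref{102}, a linear code $C \subseteq R^n$ (viewed through $\Omega_{f_i}$) is a right $(\sigma,\delta)$-polycyclic code with respect to $f_i$ if and only if $T_{C_{f_i}}(C) \subseteq C$. The key computation is then
\begin{align*}
T_{C_{f_2}}(vB) &= \sigma(v)\sigma(B)C_{f_2} + \sigma(v)\delta(B) + \delta(v)B \\
&= \sigma(v)\bigl(\sigma(B)C_{f_2} + \delta(B)\bigr) + \delta(v)B \\
&= \sigma(v)C_{f_1}B + \delta(v)B = \bigl(\sigma(v)C_{f_1} + \delta(v)\bigr)B = \psi_B\bigl(T_{C_{f_1}}(v)\bigr),
\end{align*}
where the third equality uses the hypothesis $C_{f_1}B = \sigma(B)C_{f_2} + \delta(B)$ and the fact that $\delta$ extends entrywise as a $\sigma$-derivation to vectors and matrices. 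Thus $\psi_B \circ T_{C_{f_1}} = T_{C_{f_2}} \circ \psi_B$, and if $T_{C_{f_1}}(C) \subseteq C$ then $T_{C_{f_2}}(\psi_B(C)) = \psi_B(T_{C_{f_1}}(C)) \subseteq \psi_B(C)$; this exhibits $\psi_B(C)$ as a right $(\sigma,\delta)$-polycyclic code in $S/Sf_2$.

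Next, because $B$ is a monomial matrix (one nonzero entry per row and column, each being a unit), right multiplication by $B$ sends any vector $v \in R^n$ to a vector obtained by permuting the coordinates and scaling each by a unit. Zero coordinates remain zero and nonzero coordinates remain nonzero, so $\omega_H(\psi_B(v)) = \omega_H(v)$. Hence the restriction $\psi_B|_C : C \to \psi_B(C)$ is a Hamming isometry for every right $(\sigma,\delta)$-polycyclic code $C$ in $S/Sf_1$.

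Finally, to establish the one-to-one correspondence, I note that $\psi_B$ is bijective on $R^n$ (since $B$ is invertible), so $C \mapsto \psi_B(C)$ is injective on codes. For surjectivity, the inverse map $\psi_{B^{-1}}$ sends a right $(\sigma,\delta)$-polycyclic code in $S/Sf_2$ back to one in $S/Sf_1$: applying $\sigma$ and $\delta$ to $C_{f_1}B = \sigma(B)C_{f_2} + \delta(B)$ and rearranging (multiplying by $B^{-1}$ on the right and $\sigma(B^{-1})$ on the left) produces the corresponding identity $C_{f_2}B^{-1} = \sigma(B^{-1})C_{f_1} + \delta(B^{-1})$, so the previous steps apply symmetrically. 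I expect that verifying this reversed identity, while purely mechanical, is the only place where one must be careful: the derivation $\delta$ does not commute with inversion, and the identity $\delta(B^{-1}) = -\sigma(B^{-1})\delta(B)B^{-1}$ (obtained from $\delta(B^{-1}B) = 0$) is what makes the rearrangement go through cleanly. Combining these three steps yields the claimed bijective Hamming-isometric correspondence.
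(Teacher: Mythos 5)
Your proposal is correct and follows essentially the same route as the paper: the identity $C_{f_1}B=\sigma(B)C_{f_2}+\delta(B)$ is used as the intertwining relation $\psi_B\circ T_{C_{f_1}}=T_{C_{f_2}}\circ\psi_B$ (which the paper obtains by citing Leroy's Proposition~\ref{180} rather than computing it directly), monomiality of $B$ gives weight preservation, and invertibility gives the bijection. Your explicit verification of the reversed identity $C_{f_2}B^{-1}=\sigma(B^{-1})C_{f_1}+\delta(B^{-1})$ via $\delta(B^{-1})=-\sigma(B^{-1})\delta(B)B^{-1}$ is a detail the paper leaves implicit, and is a welcome addition.
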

\begin{proof}\

\textbf{Claim 1}: \textit{$\varphi$ is a Hamming isometry if and only if $\psi_B$ is a Hamming isometry.} To prove the claim, note that for every  $g_1(x)\in S/Sf_1$ (which associated with $g_1\in R^n$),  and for every  $g_2(x)\in S/Sf_2$ (which associated with $g_2\in R^n$), we have $\omega_H(g_1(x))=\omega_H(g_1)$ and $\omega_H(g_2(x))=\omega_H(g_2).$ Let $\varphi$ be a   Hamming isometry. From the commutativity of the diagram in Figure \ref{1900}, we have $\Omega_{f_2}^{-1}\varphi=\psi_B\Omega_{f_1}^{-1}.$ Therefore, $\omega_H(g_1)=\omega_H(g_1(x))=\omega_H(\varphi(g_1(x)))=\omega_H(\Omega^{-1}_{f_2}\varphi(g_1(x)))=\omega_H(\psi_B\Omega^{-1}_{f_1}(g_1(x)))=\omega_H(\psi_B(g_1)).$ Thus, $\psi_B$ is a Hamming isometry. The proof of the converse is straightforward.

\textbf{Claim 2}: \textit{$\psi_B$ is a Hamming isometry if and only if $B$ is a monomial matrix.} The proof from right to left is clear and the proof of another direction derives from  Proposition 6.1 in \cite{wood}.

To prove the theorem, let $C_1$  be a right $(\sigma,\delta)$-polycyclic code in $S/Sf_1$. Corollaries \ref{187}, \ref{705} and Claims 1, 2  imply that the restriction of $\psi_B$ to $C_1$, denoted as $\psi_B: C_1 \to C_2:=\psi_B(C_1)$, is a Hamming isometry.
\end{proof}
As an application of the above theorem, we have the next theorem, which says that both left and right $(\sigma,\delta)$-polycyclic codes lead to equivalent theories. For this reason,  we have only focused on the right $(\sigma,\delta)$-polycyclic codes.
\vspace{0.4cm}

\textbf{In the sequel of this paper, assume that $R$ is the field $\mathbb F_q$ and $\sigma\in \operatorname{Aut}(R)$. }

\section{$(\sigma,\delta)$-Mattson-Solomon transform for simple-root $(\sigma,\delta)$-polycyclic codes} \label{26}
The theory of Wedderburn polynomials in  Ore extension rings, extensively explored by Lam and Leroy in \cite{lamleroy1, lamleroy2}, forms a foundational basis for two next sections.
In this section, our first objective is to define simple-root $(\sigma,\delta)$-polycyclic codes, using the concept of  Wedderburn polynomials in the Ore extension $S=\mathbb F_q[x,\sigma,\delta]$.  Our second objective is to present a definition for $(\sigma,\delta)$-Mattson-Solomon transform for simple-root $(\sigma,\delta)$-polycyclic codes.
 
\subsection{Algebraic sets and minimal polynomials}\label{123}
For any  polynomial $g\in S$, define $V(g):=\{a\in \mathbb F_q :  g(a)=0\},$ and  for any  subset $X \subseteq \mathbb F_q$, define  $I(X):=\{g\in S : g(X)=0 \}.$ The  set $X$ is called  $(\sigma,\delta)$-\textit{algebraic}  if $I(X)\neq 0$. The set $V(g)$ is $(\sigma,\delta)$-algebraic because $g(V(g))=0$. Moreover, any finite set is $(\sigma,\delta)$-algebraic, see \cite{lamleroy4} Section $6$. If $X$ is $(\sigma,\delta)$-algebraic, then the monic generator of $I(X)$ is called the minimal polynomial of $X$ and is denoted by $g_{X}.$ Clearly $g_{X}(X)=0$. By the Remainder Theorem,  the minimal polynomial $g_X$ is the monic least right common multiple of the linear polynomials $\{x - a: a \in X\}$. So it has  the form $(x-a_0)(x-a_1) \cdots (x-a_{n-1}),$ where each $a_i$ is $(\sigma, \delta)$-conjugate to some element of $X$ and $n=\operatorname{deg}g_{X}$.  Furthermore, any right root of $g_{X}$ is also $(\sigma, \delta)$-conjugate to some element of $X$, see \cite{lamleroy3} Section 4 and \cite{lamleroy1} Section 2. 
An element $a\in \mathbb F_q$ is called \textit{P-dependent} on an algebraic set $X$ if $g(a)=0$ for every $g \in I(X)$. A $(\sigma,\delta)$-algebraic set $X$ is P-independent if no element $b \in X$ is P-dependent on $X \setminus {b}$, see \cite{lamleroy1}.

\subsection{Wedderburn polynomials}\label{news}
A monic polynomial $g \in S$ is called  \textit{Wedderburn polynomial} (or \textit{W-polynomial} for short) if the minimal polynomial of $V(g)$ is $g.$ If $(\sigma,\delta)=(\operatorname{Id}, 0)$, then  W-polynomials are in the form of $(x-a_0)\cdots(x-a_{n-1})$, where $a_i$s are distinct elements in $\mathbb F_q$, see \cite{lamleroy1} Example 3.5.

Recall from Section \ref{590} that if $g(x)=\sum_{i=0}
^{n-1} g_ix^i\in S$,  then $g(a)=\sum_{i=0}
^{n-1} g_iN_i(a).$

The Vandermonde matrix is a valuable tool for polynomial evaluation. Hence, in  \cite{vandermond}, the Vandermonde matrix in the $(\sigma,\delta)$-setting is naturally defined as follows. A $(\sigma,\delta)$-\textit{Vandermonde matrix} related to the elements $a_0,\ldots,a_{n-1} \in \mathbb F_q ,$  denoted by $V(a_0, a_1,\ldots,a_{n-1})$ or simply by $V$,  is 
    $$ V=V(a_0, a_1,\ldots,a_{n-1})=\begin{pmatrix}
1 & 1 & \cdots & 1 \\
N_0(a_0) & N_0(a_1) & \cdots & N_0(a_{n-1}) \\
\vdots & \vdots & \cdots & \vdots \\
N_{n-1}(a_0) & N_{n-1}(a_1) & \cdots & N_{n-1}(a_{n-1}) 
\end{pmatrix}.  $$

The following theorem, which plays a crucial role in both this section and the subsequent one, presents some equivalent conditions for W-polynomials.

\begin{Theorem}[\cite{lamleroy2} Theorem 5.2, Lemma 5.7 and Theorem 5.8]\label{4400}
Let  $f \in S$ be a  monic polynomial of degree $n$, then the following statements are equivalent.
\begin{enumerate}[i)]
\item $f$ is a W-polynomial.
\item There is a P-independent set $A = \{a_0, a_1, \ldots, a_{n-1}\} \subseteq \mathbb F_q$ such that $f = f_A$.
\item There is a set  $\{a_0, a_1, \ldots, a_{n-1}\} \subseteq \mathbb F_q$ such that the $(\sigma, \delta)$-Vandermond matrix  $V = V(a_0, a_1, \ldots, a_{n-1})$ is invertible and $$C_fV = \sigma(V) \operatorname{diag} (a_0, a_1, \ldots, a_{n-1}) + \delta(V).$$
\item  The companion matrix  $C_f$ is $(\sigma,\delta)$-diagonalizable.
\item $Sf= \bigcap_{i=0}^{n-1} S(x-a_i)$.
\item The left $S$-module $S/Sf$ is decomposed as $\bigoplus_{i=0}^{n-1} S/S{(x-a_i)}$ .
\end{enumerate}
\end{Theorem}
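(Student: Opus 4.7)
The plan is to prove the six equivalences in two blocks joined by condition (ii), which will serve as a bridge between the polynomial-theoretic conditions (i), (v), (vi) and the matrix-theoretic conditions (iii), (iv). The main tools will be the structure theorem identifying the minimal polynomial $g_X$ as the monic least right common multiple of $\{x-a : a \in X\}$, the recurrence $N_{k+1}(a) = \sigma(N_k(a))a + \delta(N_k(a))$ governing the $(\sigma,\delta)$-Vandermonde matrix, and a non-commutative Chinese Remainder argument.

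For the polynomial block, I would first show (i) $\Leftrightarrow$ (ii). If $f$ is a W-polynomial then $f = g_{V(f)}$ factors as $(x-b_0)\cdots(x-b_{n-1})$, and from $V(f)$ one extracts a maximal P-independent subset $A$; the matroid-type exchange property of P-independence developed by Lam--Leroy forces $|A| = \deg g_{V(f)} = n$ and $f_A = f$, yielding (ii). The reverse direction follows because for a P-independent $A$ the roots of $f_A$ are P-dependent on $A$, so $V(f_A)$ has $f_A$ as its minimal polynomial. Next, (ii) $\Rightarrow$ (v) is immediate from the lrcm description: $Sf = Sf_A = I(A) = \bigcap_{i=0}^{n-1} S(x-a_i)$. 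For (v) $\Rightarrow$ (vi), a non-commutative CRT applies: for $A$ in a P-independent configuration the left ideals $S(x-a_i)$ are pairwise comaximal, so the natural map $S/Sf \to \bigoplus_i S/S(x-a_i)$ has zero kernel $\bigcap_i S(x-a_i)/Sf$ and is surjective by comaximality. The cycle closes via (vi) $\Rightarrow$ (ii) by cardinality: both sides have size $|\mathbb{F}_q|^n$, forcing $\deg g_A = n$, so $A$ is P-independent and $f = f_A$.

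For the matrix block, the central ingredient is the equivalence between P-independence of $\{a_0, \ldots, a_{n-1}\}$ and invertibility of the associated $(\sigma,\delta)$-Vandermonde matrix $V$. Granting this, (ii) $\Rightarrow$ (iii) is a column-wise verification: the $i$-th column of $V$ is $v_i = (N_0(a_i), \ldots, N_{n-1}(a_i))^t$, and applying $C_f$ shifts entries upward while placing $\sum_j f_j N_j(a_i) = N_n(a_i)$ in the bottom slot (using $f(a_i) = 0$), which matches $\sigma(v_i) a_i + \delta(v_i)$ by the $N$-recurrence. The implication (iii) $\Rightarrow$ (iv) is definitional. For (iv) $\Rightarrow$ (iii), suppose $C_f$ is $(\sigma,\delta)$-similar to $\operatorname{diag}(a_0, \ldots, a_{n-1})$ via some invertible $P$; the sparse shape of $C_f$ together with the column equation $C_f p_i = \sigma(p_i)a_i + \delta(p_i)$ iteratively forces $p_i = (c_i N_0(a_i), c_i N_1(a_i), \ldots, c_i N_{n-1}(a_i))^t$ for some $c_i \in \mathbb{F}_q^\times$, and right multiplication by $\operatorname{diag}(c_i^{-1})$ produces a Vandermonde matrix inheriting invertibility from $P$.

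The main obstacle I expect is the foundational equivalence between P-independence of $\{a_0, \ldots, a_{n-1}\}$ and invertibility of the corresponding $(\sigma,\delta)$-Vandermonde matrix, which is the pivot on which every remaining step turns: it is used in the CRT comaximality check of the polynomial block and throughout the matrix block. It packages the combinatorial content of $(\sigma,\delta)$-algebraic set theory into a single linear-algebraic statement, after which the remaining implications reduce to direct computation with the $N$-recurrence or standard CRT-type arguments.
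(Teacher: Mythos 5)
The paper never proves Theorem \ref{4400}: it is imported wholesale from Lam--Leroy--Ozturk (\cite{lamleroy2}, Theorem 5.2, Lemma 5.7 and Theorem 5.8), so there is no in-paper argument to compare yours against; judged on its own terms, your proposal reconstructs the Lam--Leroy architecture correctly in outline but has two substantive defects.

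First, the closing implication (vi) $\Rightarrow$ (ii) ``by cardinality'' is not an argument. Every summand $S/S(x-a_i)$ has exactly $q$ elements, so both sides have cardinality $q^n$ for an \emph{arbitrary} choice of $a_0,\ldots,a_{n-1}$, P-independent or not; nothing is forced. What actually closes the cycle is cyclicity: an isomorphism as in (vi) makes $\bigoplus_i S/S(x-a_i)$ cyclic; a generator $(u_0,\ldots,u_{n-1})$ must have every $u_i\neq 0$ (otherwise it generates a proper submodule), and by the product formula (Lemma \ref{20}) $g\cdot u_i=0$ in $S/S(x-a_i)$ iff $g\bigl(a_i^{u_i}\bigr)=0$, so the annihilator of the generator is $\bigcap_i S\bigl(x-a_i^{u_i}\bigr)=Sf$. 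Then $\deg f=n$ forces the conjugated set $\bigl\{a_i^{u_i}\bigr\}$ to consist of $n$ P-independent elements with $f$ as its minimal polynomial, which is (ii). Relatedly, in (v) $\Rightarrow$ (vi), pairwise comaximality of one-sided ideals does not by itself yield surjectivity of the CRT map; you need the stronger statement $S(x-a_i)+\bigcap_{j\neq i}S(x-a_j)=S$, which holds exactly because $h_i(a_i)\neq 0$ for $h_i$ the minimal polynomial of $A\setminus\{a_i\}$ (this is P-independence) --- or, more cheaply, note the map is an injective $\mathbb{F}_q$-linear map between spaces of dimension $n$, hence bijective.

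Second, in (iv) $\Rightarrow$ (iii) the forced shape of the columns is wrong as stated. Iterating the column equation indeed gives $p_{k,i}=T_{a_i}^k(p_{0,i})$, but the correct closed form is $T_{a_i}^k(c)=N_k\bigl(a_i^{c}\bigr)\,c$, not $c\,N_k(a_i)$: even over the commutative field $\mathbb{F}_q$ one has $a^c=\sigma(c)ac^{-1}+\delta(c)c^{-1}\neq a$ in general when $(\sigma,\delta)\neq(\operatorname{Id},0)$. Consequently $P\operatorname{diag}\bigl(c_0^{-1},\ldots,c_{n-1}^{-1}\bigr)$ is the $(\sigma,\delta)$-Vandermonde matrix at the \emph{conjugates} $a_i^{c_i}$, and the diagonal matrix realizing (iii) is $\operatorname{diag}\bigl(a_0^{c_0},\ldots,a_{n-1}^{c_{n-1}}\bigr)$; the existential form of (iii) survives, but your formula does not. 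Finally, you openly defer the pivot ``P-independence iff $V$ invertible'' --- but that is precisely \cite{lamleroy2} Theorem 5.8, so as written your argument is a reduction to Lam--Leroy rather than a self-contained proof; to be fair, citing that machinery is exactly the level at which the paper itself operates.
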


\subsection{$(\sigma,\delta)$-Mattson-Solomon transform}

\begin{Definition}
A $(\sigma,\delta)$-polycyclic code in $S/Sf$ is called  simple-root $(\sigma,\delta)$-polycyclic code if $f$ is a W-polynomial.
\end{Definition}
A polycyclic code in the quotient ring $\mathbb F_q[x]/\langle f(x)\rangle$ is called simple-root if $f(x)$ is a simple-root polynomial in $\mathbb F_q[x]$, see \cite{our}.
In the case where $(\sigma,\delta)=(\operatorname{Id}, 0)$, W-polynomials are precisely  simple-root polynomials. Therefore,  simple-root $(\sigma,\delta)$-polycyclic codes are the generalization of simple-root polycyclic codes.

Let $f$ be a W-polynomial with a P-independent set $\{a_0, a_1, \ldots, a_{n-1}\}$ . Define  $T: S/Sf\to S/Sf$   by $T(\sum_{i=0}^{n-1} b_{i}x^{i})=\sum_{i=0}^{n-1} T_{a_i}(b_i)x^i$, where $T_{a_i}: R\to R$ is the $(\sigma,\delta)$-PLT given by $T_{a_i}(x)=\sigma(x)a_i+\delta(x)$. 

\begin{Lemma}\label{60005}
 The additive map $T$ is a $(\sigma,\delta)$-PLT.
\end{Lemma}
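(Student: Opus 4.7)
The plan is a direct verification from the definition. I need to show that for every $r \in R$ and every $v = \sum_{i=0}^{n-1} b_i x^i \in S/Sf$,
\[
T(rv) = \sigma(r)\,T(v) + \delta(r)\,v.
\]
Additivity of $T$ is immediate from the additivity of each $T_{a_i}$ and the fact that addition in $S/Sf$ is componentwise in the basis $\{1,x,\dots,x^{n-1}\}$, so I would dispose of it in one line and then focus on the Leibniz-type identity above.

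For the main computation, I would first observe that multiplication on the left by a scalar $r \in R$ in $S/Sf$ is componentwise on the $R$-basis $\{x^i\}$, i.e.\ $rv = \sum_{i=0}^{n-1} (rb_i)x^i$, so that $T(rv) = \sum_{i=0}^{n-1} T_{a_i}(rb_i)\,x^i$. Then for each index $i$ I would expand
\[
T_{a_i}(rb_i) \;=\; \sigma(rb_i)\,a_i + \delta(rb_i) \;=\; \sigma(r)\sigma(b_i)\,a_i + \sigma(r)\delta(b_i) + \delta(r)b_i,
\]
using the ring-endomorphism property of $\sigma$ and the $\sigma$-derivation property of $\delta$. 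Regrouping gives $T_{a_i}(rb_i) = \sigma(r)\bigl(\sigma(b_i)a_i + \delta(b_i)\bigr) + \delta(r)b_i = \sigma(r)\,T_{a_i}(b_i) + \delta(r)\,b_i$.

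Summing over $i$ and factoring yields
\[
T(rv) \;=\; \sigma(r)\sum_{i=0}^{n-1} T_{a_i}(b_i)\,x^i \;+\; \delta(r)\sum_{i=0}^{n-1} b_i\,x^i \;=\; \sigma(r)\,T(v) + \delta(r)\,v,
\]
which is exactly the $(\sigma,\delta)$-PLT condition. There is no real obstacle here; the only mild subtlety to flag is that everything must be checked on a free $R$-basis of $S/Sf$ (so the argument tacitly uses that left multiplication by $r \in R$ in the quotient respects the basis $\{1,x,\dots,x^{n-1}\}$, which is clear because $Sf$ is a \emph{left} $S$-submodule). After that, the proof reduces to the one-line scalar identity $T_a(rb) = \sigma(r)T_a(b) + \delta(r)b$ coordinatewise.
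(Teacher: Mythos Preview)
Your proof is correct and follows essentially the same approach as the paper: both compute $T(rv)$ coordinatewise, expand each $T_{a_i}(rb_i)$ using the multiplicativity of $\sigma$ and the $\sigma$-derivation rule for $\delta$, and then regroup to obtain $\sigma(r)T(v)+\delta(r)v$. Your added remark about left scalar multiplication respecting the basis $\{1,x,\dots,x^{n-1}\}$ is a small clarification the paper leaves implicit.
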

\begin{proof}
For all $a\in \mathbb F_q$ and $v=\sum_{i=0}^{n-1} b_ix^i\in S/Sf$ we get
\begin{align*}
    T(av)&= T\bigg(\sum_{i=0}^{n-1} ab_ix^i\bigg)=\sum_{i=0}^{n-1} T_{a_i}(ab_i)x^i\\
    &= \sum_{i=0}^{n-1}\bigg(\sigma(ab_i)a_i+\delta(ab_i)\bigg)x^i\\
    &=\sum_{i=0}^{n-1}\bigg(\sigma(a)\sigma(b_i)a_i+\sigma(a)\delta(b_i)+\delta(a)b_i\bigg)x^i\\
     &=\sigma(a)\sum_{i=0}^{n-1}\bigg(\sigma(b_i)a_i+\delta(b_i)\bigg)x^i+\delta(a)\sum_{i=0}^{n-1} b_ix^i=\sigma(a)T(v)+\delta(a)V.
\end{align*}
\end{proof}
 
For all  $v=\sum_{i=0}^{n-1} b_ix^i\in S/Sf$ and for all $k\in \mathbb N,$ we have $T^k(\sum_{i=0}^{n-1} b_ix^i)=\sum_{i=0}^{n-1}  T_{a_i}^k(b_i)x^i$. Therefore
\begin{equation}\label{60003}
   g(T)v=g(T)\bigg(\sum_{i=0}^{n-1} b_ix^i\bigg)=\sum_{i=0}^{n-1} g(T_{a_i})(b_i)x^i, \quad \text{for all} \,\, g\in S/Sf.
\end{equation}
Using Lemma \ref{60005} and Proposition \ref{5}, the  $\mathbb F_q$-module $S/Sf$  has the $S$-module structure induced by the product $g(x)\cdot v:=g(T)v$. Let us denote this structure by $(S/Sf, \cdot)$.
Furthermore, we know $S/Sf$  also possesses the $S$-module structure induced by the product $g(x) \bullet h(x) = g(x)h(x)\, (\operatorname{mod} f)$. This structure will be denoted by $(S/Sf, \bullet)$.
\begin{Definition}\label{60}
For a W-polynomial $f$  with an associated  P-independent set $A = \{a_0, a_1, \ldots, a_{n-1}\}$, define a $(\sigma,\delta)$-Mattson-Solomon transform as  
\begin{align*}
  \operatorname{MS}_{(\sigma,\delta)}: (S/Sf, \bullet)&\longrightarrow (S/Sf, \cdot)\\
   g(x) &\mapsto \sum_{i=0}^{n-1}g(a_i)x^i.
\end{align*}   
\end{Definition}
\begin{Theorem}
The $(\sigma,\delta)$-Mattson-Solomon transform $\operatorname{MS}_{(\sigma,\delta)}$ is a $S$-module isomorphism. 
\end{Theorem}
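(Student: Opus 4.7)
The plan is to establish, in order, (i) well-definedness of $\operatorname{MS}_{(\sigma,\delta)}$, (ii) additivity, (iii) compatibility with the two $S$-module structures, and (iv) bijectivity. Intuitively, the map is implementing (on the coordinate level) the decomposition $S/Sf \cong \bigoplus_{i=0}^{n-1} S/S(x-a_i)$ from Theorem \ref{4400}(vi), where the $i$-th summand records the value $g(a_i)$.

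First I would verify well-definedness. For $g \in Sf$ we must show $g(a_i)=0$ for every $i$. By Theorem \ref{4400}(v), $Sf = \bigcap_{i=0}^{n-1} S(x-a_i)$, so $g = k_i(x-a_i)$ for some $k_i \in S$, and then Lemma \ref{20} yields $g(a_i) = k_i(T_{a_i})((x-a_i)(a_i)) = 0$. Hence each coefficient of $\operatorname{MS}_{(\sigma,\delta)}(g)$ depends only on the class of $g$ modulo $f$. Additivity is immediate from additivity of the evaluation maps $g \mapsto g(a_i)$.

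Next, to prove $S$-linearity, I would show that for all $h, g \in S/Sf$,
\begin{equation*}
    \operatorname{MS}_{(\sigma,\delta)}(h \bullet g) \;=\; h \cdot \operatorname{MS}_{(\sigma,\delta)}(g).
\end{equation*}
The left-hand side equals $\sum_{i=0}^{n-1}(hg)(a_i)\, x^i$, and Lemma \ref{20} gives $(hg)(a_i) = h(T_{a_i})(g(a_i))$. The right-hand side, by the definition of the $\cdot$-action and Equation \eqref{60003} applied to $v = \operatorname{MS}_{(\sigma,\delta)}(g) = \sum_{i} g(a_i)\, x^i$, equals
\begin{equation*}
    h(T)\!\left(\sum_{i=0}^{n-1} g(a_i)\, x^i\right) \;=\; \sum_{i=0}^{n-1} h(T_{a_i})\bigl(g(a_i)\bigr) x^i.
\end{equation*}
Both sides coincide coefficient-wise, so $\operatorname{MS}_{(\sigma,\delta)}$ is an $S$-module homomorphism.

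Finally, for bijectivity, since $\mathbb F_q$ is finite and both the domain and the codomain have cardinality $|\mathbb F_q|^n$, it suffices to prove injectivity. Suppose $\operatorname{MS}_{(\sigma,\delta)}(g)=0$; then $g(a_i)=0$ for every $i$, so by the Remainder Theorem each $x-a_i$ is a right factor of $g$, i.e.\ $g \in S(x-a_i)$ for all $i$. Applying Theorem \ref{4400}(v) again, $g \in \bigcap_{i=0}^{n-1} S(x-a_i) = Sf$, so $g=0$ in $S/Sf$. I expect the only delicate step to be the $S$-linearity verification, where the noncommutative evaluation rule of Lemma \ref{20} has to be combined cleanly with the componentwise description \eqref{60003} of the action of $T$; once that identification is made, the remaining steps are routine.
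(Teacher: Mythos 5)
Your proof is correct, and its $S$-linearity step is essentially the paper's own argument: evaluate $hg$ at each $a_i$ via Lemma \ref{20} and match the result with the componentwise description \eqref{60003} of the $\cdot$-action. Where you genuinely differ is bijectivity. The paper passes to the coordinate form of the map, noting that $\operatorname{MS}_{(\sigma,\delta)}(g)=gV$ with $V=V(a_0,\ldots,a_{n-1})$ the $(\sigma,\delta)$-Vandermonde matrix, and concludes from item (iii) of Theorem \ref{4400} that $V$ is invertible, hence the map is bijective. You instead prove injectivity from the Remainder Theorem (each $g(a_i)=0$ makes $x-a_i$ a right factor of $g$) together with item (v), $Sf=\bigcap_{i=0}^{n-1}S(x-a_i)$, and then upgrade injectivity to bijectivity by comparing the equal finite cardinalities of domain and codomain. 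Both routes rest on the Wedderburn hypothesis, only through different equivalent characterizations in Theorem \ref{4400}. The paper's route buys the explicit matrix form $g\mapsto gV$, and hence an explicit inverse transform via $V^{-1}$, which is what one wants when actually using the Mattson--Solomon transform; your route is ideal-theoretic, avoids the Vandermonde machinery, and is arguably more elementary, at the mild cost of invoking finiteness of $\mathbb F_q$ (the Vandermonde argument would also work over an infinite division ring, the setting of Lam--Leroy). Two small remarks: well-definedness is automatic here because the paper realizes $S/Sf$ by degree-$<n$ representatives, so that check, while harmless, is not strictly needed; and both you and the paper silently use $f(a_i)=0$ together with Lemma \ref{20} to identify the value of $h\bullet g=hg\ (\operatorname{mod} f)$ at $a_i$ with $(hg)(a_i)$, so on that point you are on exactly the same footing as the paper.
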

\begin{proof}
Recall that if   $g(x)=\sum_{i=0}^{n-1} g_i x^i \in S$, then $g(a)=\sum_{i=0}^{n-1} g_iN_i(a)$. If we associate the polynomial $g(x)$  with the vector $g=(g_0,g_1,\ldots,g_{n-1}),$ then $gV=(g(a_0), g(a_1),\ldots, g(a_{n-1}))$, where $V$ is the $(\sigma,\delta)$-Vandermond matrix related to $a_0, a_1, \cdots, a_{n-1}.$ Hence, as depicted in the Figure \ref{6000}, the vector form of $\operatorname{MS}_{(\sigma,\delta)}$ is given by $\operatorname{MS}_{(\sigma,\delta)}(g) = gV$.
\vspace{-0.6cm}
\begin{center}
$$\begin{tikzcd}
S/Sf \arrow[r, " \operatorname{MS}_{(\sigma,\delta)}"] \arrow[d, swap, "\Omega_{f}^{-1}"] & S/Sf \arrow[d, "\Omega_{f}^{-1}"] \\
R^n \arrow[r, "\operatorname{MS}_{(\sigma,\delta)}(g)=gV"] & R^n
\end{tikzcd}$$
\vspace{-0.5cm}
\captionof{figure}{}
\label{6000}
\end{center}
Using Theorem \ref{4400}, $V$ is an invertible matrix,  and hence $\operatorname{MS}_{(\sigma,\delta)}$ is a bijection. Furthermore, $\operatorname{MS}_{(\sigma,\delta)}$ is a $S$-module homomorphism because
\begin{align*}
    \operatorname{MS}_{(\sigma,\delta)}(g(x)h(x))&= \sum_{i=0}^{n-1} gh(a_i)x^i\\
    &=  \sum_{i=0}^{n-1} g(T_{a_i})h(a_i)x^i\qquad \text{By Lemma \ref{20} }\\
    &= g(T)\bigg( \sum_{i=0}^{n-1} h(a_i)x^i\bigg)\qquad \text{By Equation \eqref{60003}}\\
    &= g(T) \bigg(\operatorname{MS_{(\sigma,\delta)}(h(x))}\bigg)=g(x)\cdot \operatorname{MS}_{(\sigma,\delta)}(h(x)).
\end{align*}
\end{proof}
\begin{Remark}
Let us equip $S/Sf$ with the component-wise product $(\sum_{i=0}^{n-1} b_ix^i)\star (\sum_{i=0}^{n-1} c_ix^i)=\sum_{i=0}^{n-1} b_ic_ix^i$.  Now, suppose that   $(\sigma,\delta)=(\operatorname{Id},0)$. Then 
\begin{align*}
   g(x)\cdot \sum_{i=0}^{n-1} h(a_i)x^i &= g(T)\bigg(\sum_{i=0}^{n-1} h(a_i)x^i\bigg)\\
    &= \sum_{i=0}^{n-1} g(T_{a_i})(h(a_i)) x^i \qquad \text{Equation \eqref{60003}}\\
    &= \sum_{i=0}^{n-1} gh(a_i)x^i \qquad \text{Lemma \ref{20} }\\
    &= \sum_{i=0}^{n-1} g(a_i)h(a_i)x^i=\sum_{i=0}^{n-1} g(a_i)x^i\star \sum_{i=0}^{{n-1}} h(a_i)x^i.
    \end{align*}
So, due to the bijectivity of  $\operatorname{MS}_{(\sigma,\delta)}$,  the two operations $\cdot$ and $\star$ are equivalent. Furthermore,  $$\operatorname{MS}_{(\sigma,\delta)}(g(x)h(x))=\operatorname{MS}_{(\sigma,\delta)}(g(x))\star \operatorname{MS}_{(\sigma,\delta)}(h(x)).$$ Thus,
   $\operatorname{MS}_{(\sigma,\delta)}$ is a ring homomorphism. Therefore, in the case where $(\sigma,\delta)=(\operatorname{Id},0)$, the $(\sigma,\delta)$-Mattson Solomon map  $\operatorname{MS}_{(\sigma,\delta)}:(S/Sf, \bullet) \to (S/Sf, \star)$ corresponds to the Mattson Solomon map of simple-root polycyclic codes defined in \cite{our}.
\end{Remark}
\section{Decomposition of simple-root $(\sigma,\delta)$-polycyclic codes}\label{MM5}
Let us recall that $S=\mathbb F_q[x,\sigma,\delta]$, and consider $C$ as a $(\sigma,\delta)$-polycyclic code. In this section, we aim to find the direct sum decomposition $C=C_0\oplus C_1\oplus\cdots \oplus C_{n-1},$ where each $C_i$ is a $(\sigma,\delta)$-polycyclic code. Additionally, recall the definition of the idealizer ring of $Sf$ as $\operatorname{Idl}(Sf) = \{g \in S : fg \in Sf\}$. A polynomial $g \in S$ is called \textit{right invariant} if $gS \subseteq Sg$ and is called \textit{invariant} if $Sg=gS.$ It is easy to see that $x \in \operatorname{Idl}(Sf)$ if and only if $fS\subseteq Sf$. Therefore, Proposition 4.1 in \cite{evaluation} gives the following lemma.

\begin{Lemma}\label{112}  If $x \in \operatorname{Idl}(Sf)$, then the following statements are true.
    \begin{enumerate}
    \item  $f$ is right invariant.
        \item \label{115} $f$ is invariant.
        \item \label{118} If $a\in \mathbb F_q$ and  $f(a)=0$, then $f(\Delta(a))=0.$
        \end{enumerate}
\end{Lemma}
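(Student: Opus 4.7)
My plan is to dispatch item (1) immediately from the equivalence stated just before the lemma, reduce item (2) to a degree-counting argument over the finite field $\mathbb{F}_q$, and prove item (3) by evaluating a single identity at $a$ using Lemma \ref{20}.

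For item (1), the sentence preceding the lemma records that $x \in \operatorname{Idl}(Sf) \iff fS \subseteq Sf$, and the right-hand condition is by definition ``$f$ is right invariant''. No further work is needed.

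For item (2), I must upgrade $fS \subseteq Sf$ to $Sf = fS$. Under the standing assumption that $R = \mathbb{F}_q$ is a field and $\sigma \in \operatorname{Aut}(\mathbb{F}_q)$, the ring $S$ is a domain, and $f$, being monic, is not a zero divisor. For each $g \in S$ the inclusion $fS \subseteq Sf$ gives a unique $g' \in S$ with $fg = g'f$; set $\phi(g) := g'$. Then $\phi$ is additive, preserves degrees (because $f$ is monic of degree $n$), and is injective (since $f$ is a non-zero-divisor). Consequently $\phi$ restricts to an injective self-map of the \emph{finite} set $S_{\leq k} := \{h \in S : \deg h \leq k\}$, finite because $\mathbb{F}_q$ is finite. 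Any injection of a finite set to itself is a bijection, so $\phi$ is surjective on each $S_{\leq k}$ and hence on all of $S$. Therefore every element $g'f$ lies in $fS$, giving $Sf \subseteq fS$, so $Sf = fS$ and $f$ is invariant.

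For item (3), fix $a \in \mathbb{F}_q$ with $f(a) = 0$ and pick any $c \in \mathbb{F}_q^\times$; I want $f(a^c) = 0$. By item (2), $fc = c' f$ for some $c' \in S$. Evaluate both sides at $a$ via Lemma \ref{20}. On the left, the constant polynomial $c$ has $c(a) = c \in \mathbb{F}_q^\times$, a unit, so the ``invertible value'' clause gives $(fc)(a) = f(a^{c})\, c$. On the right, $f(a) = 0$ triggers the ``zero value'' clause, giving $(c'f)(a) = 0$. Equating and cancelling the unit $c$ yields $f(a^c) = 0$. Letting $c$ range over $\mathbb{F}_q^\times$, $a^c$ ranges over $\Delta(a)$, so $f(\Delta(a)) = 0$.

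The main obstacle is item (2), where surjectivity of the shift map $\phi$ is a genuine question in an abstract setting; here it is free because $\mathbb{F}_q$ is finite, making each $S_{\leq k}$ finite. Once invariance is in hand, item (3) is a one-line manipulation that exploits precisely the two different clauses of Lemma \ref{20}—the invertible one on the left and the zero one on the right.
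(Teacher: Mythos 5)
Your items (2) and (3) are correct and, it is worth noting, they constitute the only actual argument on the table: the paper itself proves nothing here, it asserts the equivalence $x\in\operatorname{Idl}(Sf)\iff fS\subseteq Sf$ in the sentence before the lemma and then cites Proposition 4.1 of the Leroy reference. Your route for (2) is a genuinely different and more elementary one: since $\sigma\in\operatorname{Aut}(\mathbb F_q)$ makes $S$ a domain with additive degrees, the map $\phi$ defined by $fg=\phi(g)f$ is additive, injective and degree-preserving, hence bijective on each finite truncation $S_{\leq k}$ --- a pigeonhole argument that works precisely because $\mathbb F_q$ is finite. Over an infinite division ring (Leroy's setting) this shortcut is unavailable and surjectivity of $\phi$ must instead be extracted degree-by-degree from surjectivity of $\sigma$; so your argument buys brevity at the cost of generality, which is harmless in the paper's Section 7 setting. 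Your item (3) is exactly right, cleanly exploiting the two clauses of Lemma \ref{20}, and in fact it needs only the semi-invariance $fc\in Sf$ for constants $c$, which already follows from item (1).

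The genuine gap is item (1), precisely where you declare that ``no further work is needed.'' The quoted equivalence has an easy ``if'' direction, but its ``only if'' direction is not merely unproved --- it is false in general, so the hypothesis $x\in\operatorname{Idl}(Sf)$ alone does not yield right invariance. The point is that $\operatorname{Idl}(Sf)$ is a subring of $S$ containing $1$, $x$ and $Sf$, hence the subring generated by the prime field and $x$, but nothing forces all of $\mathbb F_q$ into it. Concretely, take $S=\mathbb F_4[x;\sigma]$ with $\sigma$ the Frobenius, $\delta=0$, and $f=x-1$. Then $fx=x^2-x=xf$, so $x\in\operatorname{Idl}(Sf)$; yet for $\omega\in\mathbb F_4\setminus\mathbb F_2$ one has $f\omega=\omega^2x-\omega$, and any degree-one element of $Sf$ has the form $cx-c$, so $f\omega\notin Sf$ and $f$ is not right invariant. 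The same example defeats conclusion (3) under the bare hypothesis: $f(1)=0$ while $\Delta(1)=\{\sigma(c)c^{-1}:c\in\mathbb F_4^{\times}\}=\mathbb F_4^{\times}$ and $f(\omega)=\omega-1\neq 0$. So the defect sits in the paper's ``it is easy to see'' sentence (and hence in the lemma as stated), which you inherited rather than introduced; your implications $(1)\Rightarrow(2)\Rightarrow(3)$ are sound, and your proof becomes complete the moment right invariance is taken as the hypothesis, or equivalently once $x\in\operatorname{Idl}(Sf)$ is supplemented by the semi-invariance condition $fa\in Sf$ for all $a\in\mathbb F_q$.
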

\begin{Lemma}[\cite{LEROY} Corollary 1.12]\label{104}
Let $f\in S$ be a polynomial of degree $n>1$. Then the following statements are equivalent.
\begin{enumerate}
\item $x \in \operatorname{Idl}(Sf)$.
\item For any $g \in S$, $g \in Sf$ if and only if $g(C_f) = 0$.
\item $f(C_f) = 0$.
\end{enumerate}
\end{Lemma}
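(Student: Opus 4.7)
The plan is to run the cycle $(3)\Rightarrow(1)\Rightarrow(2)\Rightarrow(3)$, hinging on a translation between the matrix equation $g(C_f)=0$ and the left $S$-module structure on $V=R^n\cong S/Sf$ provided by Proposition~\ref{5} and Lemma~\ref{75}.

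First, I would establish the \emph{row formula}: the $j$-th row of $g(C_f)=\sum_i g_i N_i(C_f)$ is exactly $g(T_{C_f})(e_j)$. Here $N_i$ is extended to matrices via the same recursion $N_0(A)=I$, $N_{k+1}(A)=\sigma(N_k(A))A+\delta(N_k(A))$, with $\sigma,\delta$ applied entry-wise. This is a short induction on $k$: the base cases give rows $e_j$ and $T_A(e_j)=e_jA$, and for the inductive step, since matrix multiplication and the entry-wise maps $\sigma,\delta$ act row-by-row, the $j$-th row of $\sigma(N_k(A))A+\delta(N_k(A))$ is $\sigma(T_A^k(e_j))A+\delta(T_A^k(e_j))=T_A^{k+1}(e_j)$. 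Summing against the coefficients $g_i$ finishes the claim. Combined with Lemma~\ref{75}, which identifies $g(T_{C_f})$ with left multiplication by $g$ on $S/Sf$ under $\Omega_f$, and with the correspondence $e_j\leftrightarrow x^j$, we obtain
\[
g(C_f)=0 \ \Longleftrightarrow\ gx^j\in Sf \text{ for every } 0\leqslant j\leqslant n-1.
\]

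With this dictionary the three implications become short. For $(3)\Rightarrow(1)$, applying the row formula to $g=f$ yields $fx^j\in Sf$ for $0\leqslant j\leqslant n-1$; the case $j=1$ is $fx\in Sf$, which by definition is $x\in\operatorname{Idl}(Sf)$. For $(1)\Rightarrow(2)$, from $fx\in Sf$ I would write $fx=q_1 f$ with $q_1\in S$ and iterate by associativity to get $fx^{j+1}=(fx^j)x=(q_1^j f)x=q_1^j(fx)=q_1^{j+1}f\in Sf$ for every $j\geqslant 0$. Hence any $g=g'f\in Sf$ satisfies $gx^j=g'(fx^j)\in Sf$, so $g(C_f)=0$ by the translation; the converse direction inside $(2)$ is immediate from the $j=0$ case, since $g(C_f)=0$ forces $g=gx^0\in Sf$. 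Finally $(2)\Rightarrow(3)$ is trivial by applying $(2)$ to $g=f\in Sf$.

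The main obstacle is the row formula itself: interpreting the matrix-valued symbol $g(C_f)$ through the $N_k$-recursion and matching it row-by-row with the PLT values $g(T_{C_f})(e_j)$ on the standard basis. The argument requires only bookkeeping, but it is the conceptual hinge of the proof. Once it is in hand, the rest follows from the cyclicity of $S/Sf$ as a left $S$-module with generator $1+Sf$ together with the very definition of the idealizer.
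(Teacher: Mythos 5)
Your proof is correct, but note that the paper contains no internal proof to compare against: this lemma is imported verbatim from Leroy (\cite{LEROY}, Corollary 1.12), so your argument is a self-contained reconstruction rather than a parallel to anything in the text. The conceptual hinge you identify, the row formula asserting that the $j$-th row of $g(C_f)=\sum_i g_iN_i(C_f)$ equals $g(T_{C_f})(e_j)$ --- equivalently $g(C_f)=[g(T_{C_f})]_{\mathcal D}$ in the standard basis --- is precisely Leroy's Lemma 1.11, which the paper itself invokes without proof inside the proof of Theorem \ref{140}; your induction on the $N_k$-recursion establishes it correctly, since entry-wise $\sigma$, entry-wise $\delta$, left scalar multiplication, and right multiplication by $A$ all act row-by-row, and $\sigma(e_j)=e_j$, $\delta(e_j)=0$. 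The resulting dictionary, $g(C_f)=0$ if and only if $gx^j\in Sf$ for all $0\leqslant j\leqslant n-1$, then carries the cycle $(3)\Rightarrow(1)\Rightarrow(2)\Rightarrow(3)$ exactly as you describe: the identity $fx^{j}=q_1^{j}f$ in $(1)\Rightarrow(2)$ is a valid induction by associativity, and you are right that the forward half of $(2)$ genuinely needs $(1)$, because $Sf$ is only a left ideal, so $fx^j\in Sf$ is not automatic, while the backward half is unconditional via the $j=0$ row. One small point worth making explicit: in $(3)\Rightarrow(1)$ you read off the case $j=1$, which lies in the admissible range $0\leqslant j\leqslant n-1$ only because $n>1$; this is the sole place the degree hypothesis of the lemma enters, and flagging it would complete an otherwise clean write-up.
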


\begin{Lemma}[\cite{lamleroy1} Proposition 2.6]\label{122} Let $f\in S$ be a polynomial of degree $n.$ Then the following statements hold.
\begin{enumerate}
    \item  The right roots of $f$ can exist in at most $n$ of $(\sigma,\delta)$-conjugacy classes of $\mathbb F_q$. 
    \item If $f(x) = (x-a_0)\cdots(x-a_{n-1})$, then each right root of $f$ in $\mathbb F_q$ is $(\sigma,\delta)$-conjugate to some $a_i$.
\end{enumerate}
\end{Lemma}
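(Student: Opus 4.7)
I would prove both parts by induction on $n=\deg f$, with the single computational workhorse being the product formula of Lemma \ref{20}: for $g,h\in S$ and $a\in \mathbb F_q$ with $h(a)$ invertible (equivalently, nonzero, since $\mathbb F_q$ is a field), one has $(gh)(a)=g\bigl(a^{h(a)}\bigr)h(a)$. The other key observation is that, by the very definition of the $(\sigma,\delta)$-conjugate $a^c=\sigma(c)ac^{-1}+\delta(c)c^{-1}$, the element $a^c$ always lies in $\Delta(a)$. Together these let me trade a right root of $f$ for a right root of a factor of smaller degree while preserving its conjugacy class.

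\textbf{Proof of (1).} The base case $n=0$ is vacuous, and if $f$ has no right roots in $\mathbb F_q$ the statement is trivial. Otherwise pick one right root $b$ of $f$; by the Remainder Theorem recalled after Equation \eqref{9}, $f=g\cdot(x-b)$ for some $g\in S$ with $\deg g=n-1$. Let $b'$ be any right root of $f$ with $\Delta(b')\neq\Delta(b)$, so in particular $b'\neq b$ and $(x-b)(b')=b'-b$ is a unit. Lemma \ref{20} then yields
$$0=f(b')=g\bigl((b')^{b'-b}\bigr)(b'-b),$$
which forces $g\bigl((b')^{b'-b}\bigr)=0$. Since $(b')^{b'-b}\in\Delta(b')$, every conjugacy class of right roots of $f$ different from $\Delta(b)$ already appears as a conjugacy class of right roots of $g$. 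The induction hypothesis bounds the number of such classes for $g$ by $n-1$, and restoring $\Delta(b)$ gives at most $n$ classes for $f$.

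\textbf{Proof of (2) and main difficulty.} The second part is essentially the same induction with the rightmost linear factor made explicit. Writing $f=g\cdot(x-a_{n-1})$ where $g=(x-a_0)\cdots(x-a_{n-2})$, let $b$ be a right root of $f$. If $b=a_{n-1}$ we are done; otherwise $b-a_{n-1}$ is a unit, so the identical application of Lemma \ref{20} shows $b^{b-a_{n-1}}$ is a right root of $g$. By induction, $b^{b-a_{n-1}}\in\Delta(a_i)$ for some $i\leq n-2$, and since $b\in\Delta\bigl(b^{b-a_{n-1}}\bigr)$ (the conjugacy relation is an equivalence, as noted in Section \ref{590}), transitivity gives $b\in\Delta(a_i)$. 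I do not expect a serious obstacle: the whole argument is a bookkeeping exercise around one invocation of the product formula, and the only subtle step is verifying that the replacement $b'\mapsto (b')^{b'-b}$ preserves the $(\sigma,\delta)$-conjugacy class, which is immediate from the definition of $a^c$.
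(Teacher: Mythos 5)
Your proof is correct: both inductions are sound, and every ingredient you use --- the product formula of Lemma \ref{20}, the Remainder Theorem for right roots, and the facts that $a^{c}\in\Delta(a)$ and that $(\sigma,\delta)$-conjugacy is an equivalence relation --- is available in the paper, while the invertibility hypothesis $h(a)\neq 0$ in Lemma \ref{20} is automatic over the field $\mathbb F_q$. The paper itself gives no proof of this lemma, citing it as Proposition 2.6 of \cite{lamleroy1}, and your argument is essentially the standard Lam--Leroy induction via the product formula used in that source, so your route coincides with the intended one.
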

 \begin{Lemma}\label{113}
    Let $x \in \operatorname{Idl}(Sf)$ and $f$ be a W-polynomial with a  P-independent set $A=\{a_0,a_1,\ldots, a_{n-1}\}.$ If $\Gamma$ denotes the set of all eigenvalues of $T_{C_f}$, then $\Gamma=\bigcup_{i=0}^{n-1} \Delta(a_i).$
\end{Lemma}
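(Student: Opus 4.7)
The plan is to exploit the left $S$-module decomposition $S/Sf \cong \bigoplus_{i=0}^{n-1} S/S(x-a_i)$ provided by Theorem \ref{4400}(vi), and to analyze how the operator $T_{C_f}$ acts through each summand. By Lemma \ref{75}, $T_{C_f}$ corresponds to left multiplication by $x$ on $S/Sf$. Transporting this along the decomposition, on the one-dimensional factor $S/S(x-a_i) \cong \mathbb{F}_q \cdot \bar 1$ the Ore rule $xc = \sigma(c)x + \delta(c)$ together with $x \equiv a_i \pmod{S(x-a_i)}$ yields $x \cdot (c\bar 1) = (\sigma(c)a_i + \delta(c))\bar 1 = T_{a_i}(c)\,\bar 1$. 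Hence the action of $T_{C_f}$ on the full sum is conjugate to the direct sum of the one-variable PLTs $T_{a_i} : \mathbb{F}_q \to \mathbb{F}_q$.

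Second, I would determine the spectrum of each $T_{a_i}$. For any $c \in \mathbb{F}_q^{\times}$, the identity $T_{a_i}(c) = \sigma(c) a_i + \delta(c) = (\sigma(c) a_i c^{-1} + \delta(c) c^{-1})\,c = a_i^{c}\cdot c$ shows that $c$ is an eigenvector of $T_{a_i}$ with eigenvalue $a_i^{c}$. Conversely, if $T_{a_i}(c)=\lambda c$ for some nonzero $c$, the same computation forces $\lambda = a_i^{c}$. Thus the set of eigenvalues of $T_{a_i}$ is precisely the $(\sigma,\delta)$-conjugacy class $\Delta(a_i)$.

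Third, the spectrum of a direct sum of operators equals the union of the spectra of the summands: an eigenvalue of a single $T_{a_i}$ extends to the sum by placing the eigenvector in the $i$-th coordinate, while any eigenvalue of the sum, witnessed by an eigenvector with a nonzero $i$-th coordinate, must be an eigenvalue of $T_{a_i}$. Combining the three steps gives $\Gamma = \bigcup_{i=0}^{n-1} \Delta(a_i)$.

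The principal technical point is the identification of left multiplication by $x$ on each summand $S/S(x-a_i)$ with $T_{a_i}$ on $\mathbb{F}_q$; this is a short but delicate Ore computation that has to respect the non-commutative interaction of $\sigma$ and $\delta$ with scalars. The hypothesis $x \in \operatorname{Idl}(Sf)$ is used through Lemma \ref{112}, which guarantees that $f$ is invariant and that the whole class $\Delta(a_i)$ sits inside $V(f)$; this is consistent with the eigenvalue description and keeps the module-theoretic manipulations above well-behaved.
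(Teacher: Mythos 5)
Your proof is correct, and it takes a genuinely different route from the paper's. The paper argues through factorizations and invariance: it writes $f=(x-b_0)\cdots(x-b_{n-1})$ with each $b_i$ $(\sigma,\delta)$-conjugate to some element of $A$, uses the hypothesis $x\in\operatorname{Idl}(Sf)$ (via Lemma \ref{112}, so that $Sf=fS$) to obtain a \emph{left} factorization $f=(x-b_i)h_i$, shows $h_i(T_{C_f})\neq 0$ by Lemma \ref{104} together with $\deg h_i<\deg f$, and thereby manufactures an eigenvector $w_i=h_i(T_{C_f})(v)$ for the eigenvalue $b_i$; it then rescales eigenvectors, $T_{C_f}(\beta v)=b_i^{\beta}\beta v$, to sweep out $\Delta(b_i)$, and for the reverse inclusion uses $0=f(T_{C_f})(v)=f(\alpha)v$ and Lemma \ref{122} to place any eigenvalue $\alpha$ in some $\Delta(b_i)$, finally passing from the $b_i$ to the $a_i$. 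You instead go through the diagonalization in Theorem \ref{4400}\,(iii)/(vi): transporting $T_{C_f}$ (left multiplication by $x$, Lemma \ref{75}) across the $S$-module isomorphism $S/Sf\cong\bigoplus_i S/S(x-a_i)$ conjugates it to $\bigoplus_i T_{a_i}$, and the one-dimensional computation $T_{a_i}(c)=a_i^{c}c$ identifies the spectrum of each summand as exactly $\Delta(a_i)$, with both inclusions falling out of the direct-sum spectral decomposition at once. Your transport step is legitimate because the CRT isomorphism is in particular $\mathbb F_q$-linear, so eigenvalue equations are preserved, and the direct-sum spectrum argument works for PLTs since $\sigma(0)=\delta(0)=0$. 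What each approach buys: yours avoids the bookkeeping that the conjugacy classes of the $b_i$ coincide with those of the $a_i$, and---contrary to your closing remark---it never actually uses $x\in\operatorname{Idl}(Sf)$: Theorem \ref{4400}, Lemma \ref{75}, and the spectral computation for $T_{a_i}$ all hold for an arbitrary W-polynomial, so your argument proves the lemma with a weaker hypothesis. The paper's route keeps the idealizer hypothesis in play and exercises exactly the machinery (invariance, Lemma \ref{104}, the evaluation identity $f(T_{C_f})(v)=f(\alpha)v$) that is reused immediately afterwards in Theorem \ref{140}, which is presumably why it is phrased that way.
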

\begin{proof}
 Applying  Theorem \ref{4400}, $f=f_A.$ According to Subsection 
 \ref{123},  $f(x)=(x-b_0) \cdots (x-b_{n-1}),$ where each $b_i$ is $(\sigma, \delta)$-conjugate to some element of $A$. 
 For a fixed  $i$, there is a non-zero polynomial $g_i(x)\in S$ such that $f(x)=g_i(x)(x-b_i).$ Using Lemma \ref{112},  we have $Sf=fS,$ which implies the existence of a non-zero polynomial $h_i(x)\in S$ such that $f(x)=(x-b_i)h_i(x).$  We claim that there is  an element $v\in \mathbb F_q^n$ such that $w_i:=h_i(T_{C_f})(v)\neq 0$. Otherwise, we  have $h_i(T_{C_f}) = 0$. However,  Lemma \ref{104} and the fact $\deg(h_i)<\deg(f)$ lead to $h_i(x) = 0$, which is a contradiction.
  Consequently, there is a non-zero $w_i$ such that $0=f(T_{C_f})(v)=(T_{C_f}-b_iI)h_i(T_{C_f})(v)=(T_{C_f}-b_iI)w_i=T_{C_f}(w_i)-b_iw_iI$,  showing that $b_i$ is an eigenvalue of $T_{C_f}$. Thus, we have $T_{C_f}(\beta v)=\sigma(\beta)T_{C_f}(v)+\delta(\beta)v=(\sigma(\beta)b_i+\delta(\beta))v$   for  $\beta\in \mathbb F_q^*$ and $v\in \mathbb F_q^n.$ Thus  $T_{C_f}(\beta v)=(\sigma(\beta)b_i\beta^{-1}+\delta(\beta)\beta^{-1})\beta v=b_i^\beta \beta v.$ Consequently,  $b_i^\beta$ is an eigenvalue of $T_{C_f}$ for all $\beta \in \mathbb{F}_q^*$, showing $\bigcup_{i=0}^{n-1} \Delta(b_i) \subseteq \Gamma$. Conversely, let $\alpha\in \Gamma$. Then,  there exists a non-zero vector $ v\in \mathbb F_q$ such that $T_{C_f}(v)=\alpha v$. By Proposition 4.3 in \cite{evaluation}, we have $0=f(T_{C_f})(v) = f(\alpha)v$, which implies $f(\alpha)=0.$  Utilizing Lemma \ref{122}, we conclude that  $\alpha\in \bigcup_{i=0}^{n-1} \Delta(b_i).$ Therefore, $ \Gamma=\bigcup_{i=0}^{n-1} \Delta(b_i).$ Since each $b_i$ is $(\sigma, \delta)$-conjugate to some element of $A$, we have $\Gamma=\bigcup_{i=0}^{n-1} \Delta(a_i).$
  
\end{proof}
A  $(\sigma,\delta)$-Pseudo Linear Transformation $T : V \to V$ is called \textit{algebraic} if there is a polynomial  $g(x)=\sum_{i=0}^{n-1}g_i x^i$  such that $g(T)=\sum_{i=0}^{n-1}g_i T^i=0.$ The minimal polynomial possessing this property is called the minimal polynomial of $T$ and is denoted by  $g_T$, see \cite{evaluation}. Let  $\mathbb F_q^n$ be equipped with the  basis $\mathcal D = \{e_1, e_2, \cdots, e_n\}$, where each $e_i$ is a vector with a $1$ in position $i$ and $0$ in all other positions. With the facts that $\delta(1) = 0 = \delta(0)$ and $\sigma(1) = 1$, it is easily to see that  $[T_{C_f}]_{\mathcal D} = C_f$. The main theorem of this section follows, providing a decomposition of $\mathbb F_q^n$ into  $T_{C_f}$-invariant submodules associated with distinct eigenvalues of $T_{C_f}$.
\begin{Theorem}\label{140}
    Let $x \in \operatorname{Idl}(Sf)$ and $f$ be a W-polynomial with a  P-independent set $A=\{a_0,a_1,\ldots, a_{n-1}\}.$  Additionally, let  $T_{C_f}: \mathbb F_q^n\to \mathbb F_q^n$ be a $(\sigma,\delta)$-PLT. Then, we have the following statements.
    \begin{enumerate}
        \item \label{120} $\mathbb F_q^n=\oplus_{i=0}^{n-1} V_{\Gamma_i},$ where $V_{\Gamma_i}$ is the vector space spanned by   eigenvectors of $T_{C_f}$ associated with an eigenvalue in $\Gamma_i:=\Delta(a_i)$.
        \item  Each  $V_{\Gamma_i}$ is a $T_{C_f}$-invariant left $S$-submodule of $\mathbb F_q^n$. 
        \item \label{MM2} There are linear transformations $E_i$  such that  $\sum_{i=0}^{n-1}E_i=\operatorname{Id},$ $E_iE_j=0$ for $i\neq j$, and the range of each  $E_i$ is $ V_{\Gamma_i}.$
           \end{enumerate}
    \end{Theorem}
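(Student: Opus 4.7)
The plan is to transport the direct-sum decomposition from Theorem \ref{4400}(vi) across the isomorphism $\Omega_f^{-1}$ and then identify each of its summands with one of the $V_{\Gamma_i}$.

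Since $f$ is a W-polynomial with P-independent set $A=\{a_0,\ldots,a_{n-1}\}$, Theorem \ref{4400}(vi) yields $S/Sf \cong \bigoplus_{i=0}^{n-1} S/S(x-a_i)$ as left $S$-modules. Transporting this decomposition through $\Omega_f^{-1}$ and using Lemma \ref{75} (left multiplication by $x$ on $S/Sf$ corresponds to $T_{C_f}$ on $\mathbb F_q^n$), one obtains $\mathbb F_q^n = \bigoplus_{i=0}^{n-1} M_i$, where each $M_i := \Omega_f^{-1}(S/S(x-a_i))$ is a one-dimensional $T_{C_f}$-invariant left $S$-submodule.

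The key step is to verify $M_i = V_{\Gamma_i}$. Let $v_i \in M_i$ be the image of $1 \in S/S(x-a_i)$. The identity $x \equiv a_i \pmod{x-a_i}$ gives $T_{C_f}(v_i) = a_i v_i$, and the PLT property then yields $T_{C_f}(\beta v_i) = (\sigma(\beta)a_i+\delta(\beta))v_i = a_i^\beta(\beta v_i)$ for any $\beta \in \mathbb F_q^*$. Hence every nonzero element of $M_i$ is an eigenvector of $T_{C_f}$ with eigenvalue in $\Delta(a_i)=\Gamma_i$, which proves $M_i \subseteq V_{\Gamma_i}$. For the reverse inclusion, given an eigenvector $w$ with $T_{C_f}(w)=\alpha w$ and $\alpha \in \Gamma_i$, I would write $w=\sum_j w_j$ with $w_j \in M_j$; since each $M_j$ is $T_{C_f}$-invariant and the decomposition is unique, $T_{C_f}(w_j)=\alpha w_j$ for every $j$. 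Any nonzero $w_j$ is then an eigenvector of the restriction $T_{C_f}\vert_{M_j}$ with eigenvalue $\alpha$, and the preceding computation forces $\alpha \in \Delta(a_j)$, so the disjointness of the conjugacy classes $\Delta(a_i)$ gives $w_j=0$ for $j\neq i$, whence $w \in M_i$. Parts (1) and (2) of the theorem follow.

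For (3), once (1) is established, the canonical projections $E_i : \mathbb F_q^n \to \mathbb F_q^n$ defined by $E_i\bigl(\sum_j w_j\bigr) = w_i$ with $w_j \in V_{\Gamma_j}$ automatically satisfy $\sum_i E_i = \operatorname{Id}$, $E_i E_j = 0$ for $i \neq j$, and $\operatorname{range}(E_i)=V_{\Gamma_i}$. The main delicate point I foresee is justifying the disjointness of the classes $\Delta(a_i)$; this should be accessible via Lemma \ref{113} together with the hypothesis $x \in \operatorname{Idl}(Sf)$ and the P-independence of $A$, since otherwise one would collapse summands in $\bigoplus M_i$ and contradict the existence of the Vandermonde-diagonalizing matrix supplied by Theorem \ref{4400}(iii).
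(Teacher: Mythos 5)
Your route---transporting the decomposition of Theorem \ref{4400}(vi) through $\Omega_f$ and identifying the one-dimensional summands $M_i$ with the spaces $V_{\Gamma_i}$---is genuinely different from the paper's, which proves parts (1)--(2) by invoking Leroy's diagonalizability theorem for pseudo-linear transformations (Theorem 4.9 of \cite{evaluation}) and proves part (3) by factoring the minimal polynomial as $f=\prod_i f_{\Gamma_i}$, establishing a Bezout identity $\sum_i h_i g_i=1$ with $h_i=\prod_{j\neq i}f_{\Gamma_j}$, and setting $E_i=(h_ig_i)(T_{C_f})$. Your forward inclusion $M_i\subseteq V_{\Gamma_i}$, the uniqueness-of-components step, and the canonical projections are all sound. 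However, the step you deferred---pairwise disjointness of the classes $\Delta(a_i)$---is a genuine gap: it does \emph{not} follow from the hypotheses. Take $q=4$, $\sigma$ the Frobenius $a\mapsto a^2$, $\delta=0$, and $f=x^2+1\in\mathbb F_4[x,\sigma]$. Then $N_2(a)=\sigma(a)a=a^3=1$ for every $a\in\mathbb F_4^{\times}$, so $f(a)=1+1=0$ on all of $\mathbb F_4^{\times}$ while $f(0)=1$; since no linear polynomial has two right roots, $f$ is the minimal polynomial of $V(f)=\mathbb F_4^{\times}$, hence a W-polynomial with P-independent set $A=\{1,\omega\}$, and $fx=xf\in Sf$ gives $x\in\operatorname{Idl}(Sf)$. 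Yet $a^c=\sigma(c)ac^{-1}=ca$, so $\Delta(1)=\Delta(\omega)=\mathbb F_4^{\times}$: the two classes coincide. Your proposed rescue also cannot work: the Vandermonde matrix $V(1,\omega)$ is invertible in this example, and more generally diagonalizability never forces the eigenvalue classes to be distinct (classically $\operatorname{diag}(1,1)$ is diagonalizable), so ``collapsing summands'' produces no contradiction with Theorem \ref{4400}(iii). In this example the eigenvectors of $T_{C_f}$ with eigenvalue in $\Gamma_0=\Gamma_1$ are $(1,\lambda^{-1})$, $\lambda\in\mathbb F_4^{\times}$, which span all of $\mathbb F_4^2$; thus $V_{\Gamma_0}=V_{\Gamma_1}=\mathbb F_4^2$ and your reverse inclusion $V_{\Gamma_i}\subseteq M_i$ fails outright.

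Two further remarks for fairness. First, the degenerate case above shows the theorem as literally stated is fragile as well; the paper's own proof of part (2) quietly renormalizes to $V_{\Gamma_i}=\operatorname{span}\{v:T_{C_f}(v)=a_iv\}$, and under the additional (unstated) assumption that the $\Delta(a_i)$ are pairwise distinct, your argument goes through and is a clean, more self-contained alternative to the paper's appeal to Leroy's Theorem 4.9 and the Bezout machinery. Second, for part (3) the paper obtains something strictly stronger than the statement asks: each $E_i$ is realized as a polynomial in $T_{C_f}$, namely $E_i=(h_ig_i)(T_{C_f})$, and this is exactly what Corollary \ref{MM1} later uses to conclude $E_i(\alpha)\in C$ for a $T_{C_f}$-invariant code $C$. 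Your set-theoretic projections satisfy the literal statement of (3), but without identifying them as polynomials in $T_{C_f}$ they would not support that downstream corollary.
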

\begin{proof}\
\begin{enumerate}
    \item Applying Theorem \ref{4400}, $C_f$ is $(\sigma,\delta)$-similar to the diagonal matrix $\operatorname{diag} (a_0, a_1, \cdots, a_{n-1})$. Let us define $\mathscr{D}({T_{C_f}}):=\big\{[T_{C_f}]_{\mathcal B}\,:\,\mathcal B\,\, \text{is a basis of }\,\,\mathbb F_q^n\big\}$. According to Proposition 2.4 in \cite{evaluation}, $\mathscr{D}({T_{C_f}})$ is a $(\sigma,\delta)$-similar class of matrices in $M_n(\mathbb F_q)$. Therefore, since $C_f=[T_{C_f}]_{\mathcal D}\in \mathscr{D}({T_{C_f}})$ we have  $\operatorname{diag} (a_0, a_1, \cdots, a_{n-1})\in \mathscr{D}({T_{C_f}})$. Now, applying Theorem 4.9 in \cite{evaluation}, we conclude that $T_{C_f}$ is diagonalizable, meaning that $\mathbb F_q^n=\oplus_{i=0}^{n-1} V_{\Gamma_i}$, where $V_{\Gamma_i}$ is the set of  eigenvectors of $T_{C_f}$ associated with an eigenvalue in $\Gamma_i:=\Delta(a_i).$
    
    \item \label{130} We know $V_{\Gamma_i}$ is the vector space spanned by   eigenvectors of $T_{C_f}$ associated with an eigenvalue in $\Gamma_i:=\Delta(a_i)$.  Let $x\in \mathbb F_q^*$  and  $T_{C_f}(v)=a_i^x v$, then $T_{C_f}(x^{-1}v)=\sigma(x^{-1})T_{C_f}(v)+\sigma(x^{-1})v=\big(\sigma(x^{-1})a_i^x+\sigma(x^{-1}) \big)v=\big(\sigma(x^{-1})a_i^x+\sigma(x^{-1}) \big)xx^{-1}v=(a_i^x)^{x^{-1}}x^{-1}v=a_ix^{-1}v.$ Therefore, without loss of generality, we can  take $ V_{\Gamma_i}=\operatorname{span}\{v\in \mathbb F_q^n: T_{C_f}(v)=a_iv\}$. To prove $V_{\Gamma_i}$ is a $S$-submodule of $\mathbb F_q^n$,  consider  two cases.  In the case  $a_i=0$, then $V_{\Gamma_i}=\operatorname{ker}(T_{C_f}),$ which is a $S$-submodule of $\mathbb F_q^n$ (notice that by Proposition   \ref{5}, $\mathbb F_q^n$ itself is a $S$-module). In the case $a_i\in \mathbb F_q^*,$ we have $T_{C_f}(a_iv)=\sigma(a_i)T_{C_f}(v)+\sigma(a_i)v=\big(\sigma(a_i)a_i+\delta(a_i)\big)a_i^{-1}a_iv=a_i^{a_i}a_iv$, and hence 
    $a_iv\in V_{\Gamma_i}$. Applying Proposition \ref{5},  $x\cdot v=T_{C_f}(v)=a_iv$ for all $v\in V_{\Gamma_i}$, which  follows  $x\cdot v\in V_{\Gamma_i}$. So, the linearity of $V_{\Gamma_i}$ implies that it is a $S$-submodule of $\mathbb F_q^n$. To complete the proof,  let $v\in V_{\Gamma_i}$. Then $T_{C_f}(v)=a_iv\in V_{\Gamma_i}.$ So $V_{\Gamma_i}$ is $T_{C_f}$-invariant.
    \item  By   Lemma 1.11 in \cite{LEROY} and Lemma \ref{104}, we have  $0=f(C_f)=f([T_{C_f}]_{\mathcal D})=[f(T_{C_f})]_{\mathcal D}$ and  $f$ is the minimal polynomial with this property. Therefore,  $f$ is the minimal polynomial of $T_{C_f}$. Now, since $T$ is diagonalizable (as proved in Part \ref{120}), Theorem 4.9 in \cite{evaluation} implies that $f=\prod_{i=0}^{n-1} f_{\Gamma_i},$ where $f_{\Gamma_i}$ is a minimal  polynomial of $\Gamma_i=\Delta(a_i).$ 
     
    Put $h_i=\prod_{j\neq i}f_{\Gamma_j},$ $i=0,\ldots, n-1.$ We claim that $\sum_{i=0}^{n-1} h_iS=S.$ By contradiction, assume that $\sum_{i=0}^{n-1} h_iS=gS$ or $\sum_{i=0}^{n-1} h_iS=Sg$ for some $g\in S$ with  $\operatorname{deg}(g)\geqslant 1.$ If $\sum_{i=0}^{n-1} h_iS=gS$, since $h_1\in \sum_{i=0}^{n-1} h_iS=gS$, then there is $g_1\in S$ such that $h_1=gg_1.$ We know that $h_1$ is the minimal polynomial of the set $\cup_{j\neq 1}\Gamma_j=\cup_{j\neq 1}\Delta(a_j).$ So $h_1$ vanished on $\cup_{j\neq 1}\Gamma_j.$  Thus,  there exists an element $d\in \cup_{j\neq 1}\Gamma_j$ such that $h_1(d)=0$ and $g_1(d)\neq 0.$ Otherwise, $g_1(d)=0$  for all $d\in \cup_{j\neq 1}\Gamma_j,$ that is $h_1$ can divide  $g_1,$ which contradicts with  $h_1=gg_1$ and degree of $g$. Take $x:=g_1(d)\neq 0.$ Using Lemma \ref{20}, we obtain $0=h_1(d)=gg_1(d)=g(d^x)x$, and hence $g(d^x)=0$.  Now assume that $d\in \Gamma _l,$ where $l\neq 1. $ With the same argument, we can say there is a $g_l\in S$ such that $h_l=gg_l.$ The polynomial $h_l$  is a minimal polynomial of $\cup_{j\neq l}\Gamma_j=\cup_{j\neq l}\Delta(a_j)$. Applying  Lemma 5.2 in \cite{lamleroy3},   $h_l$ is right invariant, and applying  Lemma \ref{112}, it is invariant.
      Thus, there is a polynomial $k_l$ such that $h_l=k_lg$. So,  $g(d^x)=0$ leads to $h_l(d^x)=0.$ Using   Lemma \ref{112}  and considering $h_l(d^x)=0$, we obtain $h_l(\Gamma_l)=0,$ which contradicts with  the definition  of $h_1$. With the same argument, we can deduct $\sum_{i=0}^{n-1} h_iS=Sg$ also leads to a contradiction. So, the claim is proved.

      Using this claim,  there are polynomials $g_0(x),\ldots,g_{n-1}(x)$ such that $\sum_{i=0}^{n-1} h_ig_i(x)=1.$  Let us define $E_i:=h_ig_i(T_{C_f})$. Therefore, $\sum_{i=0}^{n-1}E_i=\operatorname{Id}.$ Moreover,
      applying the same proof used for $h_l$, we can show that
      $h_i$ is the right invariant for all $i$.  So, there are polynomials $k_i(x)\in S$ such that $h_ig_i(x)=k_ih_i(x)$. By Corollary \ref{86}, we have 
$E_iE_j=h_ig_ih_jg_j(T_{C_f})=k_ih_ih_jg_j(T_{C_f})=k_i(T_{C_f})h_ih_j(T_{C_f})g_j(T_{C_f})$ for  $i\neq j$.  Since $h_ih_j$ is divided by $f$ and $f(T_{C_f})=0,$ we get $E_iE_j=0.$
      
   To complete the proof, we need to determine the range of $E_i$.  As proved in  Part \ref{130}, $  V_{\Gamma_i} =\operatorname{span}\{v\in \mathbb F_q^n: T_{C_f}(v)=a_iv\}=\operatorname{span}\{v\in \mathbb F_q^n: (T_{C_f}- a_iI)v=0\}.$
 We have $f=\prod_{i=0}^{n-1} f_{\Gamma_i}$  and $n=\deg f$. So, each $f_{\Gamma_i}$ is a linear  polynomial in the form  $(x-b_i)$. On the other side, 
   $f_{\Gamma_i}$ is the minimal polynomial of $\Gamma_i=\Delta_(a_i).$ 
    So $f_{\Gamma_i}(a_i)=0$, which follows that 
   $b_i=a_i$, that is $f_{\Gamma_i}(x)=x-a_i$. Thus, 
$V_{\Gamma_i}=\operatorname{ker}f_{\Gamma_i}(T_{C_f}).$ Now let  $\alpha$ be in the range of $E_i$. Since $\sum _{i=0}^{n-1} E_i=\operatorname{Id}$, $E_jE_i=0$, and $\alpha$ is in the range of $E_i$,   we have  $E_i(\alpha)=\alpha. $   Thus   $f_{\Gamma_i}(T_{C_f})(\alpha)=f_{\Gamma_i}(T_{C_f})E_i(\alpha)=f_{\Gamma_i}(T_{C_f})h_i(T_{C_f})g_i(T_{C_f})(\alpha)=0, $ because  $f=\prod_{i=0}^{n-1} f_{\Gamma_i}$ divides $f_{\Gamma_i}h_ig_i$ and  $f(T_{C_f})=0.$ Hence $V_{\Gamma_i}$ contains the range of $E_i$. Conversely, suppose  $\alpha\in V_{\Gamma_i}.$ Then $f_{\Gamma_i}(T_{C_f})(\alpha)=0.$ Note that for all  $j \neq i$,  $h_jg_j$ is divided by $f_{\Gamma_i}$. So  $E_j(\alpha)  = 0.$  Since $\sum _{i=0}^{n-1} E_i=\operatorname{Id}, $ we can immediately conclude that $E_i(\alpha)=\alpha$, which means  $\alpha$ is in the range of $E_i$.
\end{enumerate}
\end{proof}
\begin{Corollary}\label{MM1}
 Let $x \in \operatorname{Idl}(Sf)$ and $f$ be a W-polynomial with a  P-independent set $A=\{a_0,a_1,\ldots, a_{n-1}\}.$ 
 If $C$ is a $T_{C_f}$-invariant $S$-submodule of $\mathbb F_q^n$,  then 
 \begin{equation}\label{MM3}
   C=\big(C\cap V_{\Gamma_0}\big)\oplus \big(C\cap V_{\Gamma_1}\big)\oplus\cdots\oplus \big(C\cap V_{\Gamma_{n-1}}\big) 
 \end{equation}
and each $S$-submodule $C\cap V_{\Gamma_i}$ is $T_{C_f}$-invariant.

\end{Corollary}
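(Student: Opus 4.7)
The plan is to deduce the corollary directly from Theorem \ref{140}, exploiting the fact that the projections $E_i$ are polynomials in $T_{C_f}$.

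First, I will establish that each $C \cap V_{\Gamma_i}$ is a $T_{C_f}$-invariant $S$-submodule. This is immediate because both $C$ and $V_{\Gamma_i}$ are $S$-submodules of $\mathbb F_q^n$ (the latter by Theorem \ref{140}(2)), so their intersection is an $S$-submodule; likewise, both are $T_{C_f}$-invariant, and the intersection of two $T_{C_f}$-invariant subspaces is $T_{C_f}$-invariant.

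Next, I will show that the sum $\sum_{i=0}^{n-1}(C \cap V_{\Gamma_i})$ is direct. This is inherited from the direct sum decomposition $\mathbb F_q^n = \bigoplus_{i=0}^{n-1}V_{\Gamma_i}$ of Theorem \ref{140}(1): any expression $\sum_i c_i = 0$ with $c_i \in C \cap V_{\Gamma_i} \subseteq V_{\Gamma_i}$ forces each $c_i=0$ by independence of the $V_{\Gamma_i}$.

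The heart of the argument is proving the inclusion $C \subseteq \bigoplus_{i=0}^{n-1}(C \cap V_{\Gamma_i})$. Given $c \in C$, Theorem \ref{140}(3) yields $c = \operatorname{Id}(c) = \sum_{i=0}^{n-1} E_i(c)$, and since the range of $E_i$ is $V_{\Gamma_i}$, we have $E_i(c) \in V_{\Gamma_i}$. The crucial point is that, from the construction in the proof of Theorem \ref{140}(3), $E_i = (h_i g_i)(T_{C_f})$ for a polynomial $h_i g_i \in S$. Because $C$ is a $T_{C_f}$-invariant $S$-submodule, the $S$-action on $C$ (induced by $T_{C_f}$ via Proposition \ref{5}) sends $c$ to $(h_i g_i)(T_{C_f})(c) = E_i(c)$, which therefore lies in $C$. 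Hence $E_i(c) \in C \cap V_{\Gamma_i}$, giving the desired decomposition $c = \sum_{i=0}^{n-1} E_i(c) \in \bigoplus_{i=0}^{n-1}(C \cap V_{\Gamma_i})$. The reverse inclusion is trivial.

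There is no genuine obstacle here, since Theorem \ref{140} already provides all the machinery; the only point that needs to be articulated carefully is that the projections $E_i$, being polynomial expressions in $T_{C_f}$, automatically preserve every $T_{C_f}$-invariant $S$-submodule, which is what guarantees that the decomposition of each $c \in C$ afforded by the ambient direct sum actually has its components inside $C$.
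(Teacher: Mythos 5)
Your proposal is correct and follows essentially the same route as the paper's proof: both decompose each $c \in C$ via the projections $E_i$ of Theorem \ref{140} and use the key fact that each $E_i$ is a polynomial in $T_{C_f}$, hence maps the $T_{C_f}$-invariant $S$-submodule $C$ into itself, so that $E_i(c) \in C \cap V_{\Gamma_i}$. Your additional explicit checks (directness of the sum and $T_{C_f}$-invariance of each intersection) match what the paper states or leaves as immediate.
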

\begin{proof}\
According to  Theorem \ref{140} and its proof,  $\mathbb F_q^n=\oplus_{i=0}^{n-1} V_{\Gamma_i},$ and each projection $E_i$ is associated with a polynomial $p_i$ such that $E_i = p_i(T_{C_f})$.
Let $\alpha \in C$, then $\alpha = \alpha_0 + \alpha_1 + \cdots + \alpha_{n-1}$, where each $\alpha_i \in V_{\Gamma_i}$. Applying Statement \eqref{MM2} in Theorem \ref{140}, we deduce that $\alpha_i = E_i(\alpha) = p_i(T_{C_f})(\alpha)$. Since $C$ is invariant under $T_{C_f}$,  it follows that  $\alpha_i\in C.$ In other words,  $\alpha_i\in C\cap V_{\Gamma_i},$ and hence $C\subseteq \big(C\cap V_{\Gamma_0}\big)\oplus \big(C\cap V_{\Gamma_1}\big)\oplus\cdots\oplus \big(C\cap V_{\Gamma_{n-1}}\big),$ which completes the proof of Equation \eqref{MM3}. Applying  Theorem \ref{140},   each  $V_{\Gamma_i}$ is a $T_{C_f}$-invariant left $S$-submodule of $\mathbb F_q^n$. Thus each $C\cap V_{\Gamma_i}$ is $T_{C_f}$-invariant.
 \end{proof}
 \begin{Corollary}
     If $x \in \operatorname{Idl}(Sf)$ and $C$ is a simple-root $(\sigma,\delta)$-polycyclic code, then  there are simple-root $(\sigma,\delta)$-polycyclic codes $C_i$  such that $$C=C_0\oplus C_1\oplus \cdots\oplus C_{n-1}.$$
 \end{Corollary}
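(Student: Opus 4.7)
The plan is to combine Corollary \ref{MM1} with part (vi) of Theorem \ref{4400} (and inherit the implicit hypothesis $x \in \operatorname{Idl}(Sf)$ needed for Corollary \ref{MM1}). Since $C$ is simple-root, $f$ is a W-polynomial, so Theorem \ref{4400}(ii) furnishes a P-independent set $A = \{a_0, a_1, \ldots, a_{n-1}\}$ with $f = f_A$, and Theorem \ref{4400}(vi) yields the left $S$-module decomposition $S/Sf = \bigoplus_{i=0}^{n-1} S/S(x - a_i)$. By Lemma \ref{102} we may view $C$ equivalently as a $T_{C_f}$-invariant left $S$-submodule of $\mathbb F_q^n$.

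I would then apply Corollary \ref{MM1} directly to obtain
$$C = \bigoplus_{i=0}^{n-1} \bigl(C \cap V_{\Gamma_i}\bigr),$$
each $C \cap V_{\Gamma_i}$ being a $T_{C_f}$-invariant $S$-submodule. The key step is the identification of each eigenspace summand $V_{\Gamma_i}$ with the algebraic summand $S/S(x - a_i)$ as left $S$-modules. This is natural because the proof of Theorem \ref{140}(2) established that (up to rescaling by elements of $\mathbb F_q^*$) every vector $v \in V_{\Gamma_i}$ satisfies $T_{C_f}(v) = a_i v$, i.e.\ $x$ acts by $a_i$, which is precisely the module structure on $S/S(x-a_i)$. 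Counting dimensions, both $\mathbb F_q$-spaces are one-dimensional, so the identification is an isomorphism.

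Under this identification, each $C_i := C \cap V_{\Gamma_i}$ is a left $S$-submodule of $S/S(x - a_i)$, hence a $(\sigma,\delta)$-polycyclic code relative to $x - a_i$. Because the singleton $\{a_i\}$ is trivially P-independent with $f_{\{a_i\}} = x - a_i$, Theorem \ref{4400}(ii) guarantees that each $x - a_i$ is a W-polynomial, so each $C_i$ is a simple-root $(\sigma,\delta)$-polycyclic code. Combined with the decomposition above, this gives the desired direct sum $C = C_0 \oplus C_1 \oplus \cdots \oplus C_{n-1}$.

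The main obstacle is justifying cleanly that the geometric decomposition from Corollary \ref{MM1} (via eigenspaces in $\mathbb F_q^n$ under the PLT $T_{C_f}$) coincides with the algebraic decomposition of Theorem \ref{4400}(vi) (via quotients by the factors $x - a_i$). Once this correspondence is explicitly recorded, the statement about each $C_i$ being simple-root is immediate from the definition, since each $x - a_i$ is a W-polynomial.
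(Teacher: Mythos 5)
Your core argument is exactly the paper's proof: the paper disposes of this corollary in one line, taking $C_i = C\cap V_{\Gamma_i}$ and citing Corollary \ref{MM1}, just as you do. Where you diverge is in justifying that each $C_i$ is \emph{simple-root}: you route this through an identification of each eigenspace $V_{\Gamma_i}$ with the algebraic summand $S/S(x-a_i)$ from Theorem \ref{4400}(vi), so that $C_i$ becomes a code relative to the W-polynomial $x-a_i$. This detour is unnecessary, and it is also the weakest part of your write-up (you flag it yourself as ``the main obstacle''): the one-dimensionality of $V_{\Gamma_i}$ needs the observation that the direct sum in Theorem \ref{140} has $n$ nonzero summands inside the $n$-dimensional space $\mathbb F_q^n$, and reinterpreting each $C_i$ as a length-one code in $S/S(x-a_i)$ makes the direct sum $C = C_0\oplus\cdots\oplus C_{n-1}$ an external sum rather than a decomposition inside the ambient space, which is not quite what the statement asserts. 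The intended (and shorter) reading is that each $C_i$ stays a subcode of $S/Sf$ with the \emph{same} $f$: Corollary \ref{MM1} gives $T_{C_f}$-invariance, Lemmas \ref{102} and \ref{80} then make $C_i$ a right $(\sigma,\delta)$-polycyclic code in $S/Sf$, and since $f$ is a W-polynomial, $C_i$ is simple-root by definition — no identification with $S/S(x-a_i)$ required. So your proposal is correct in substance, but the paper's implicit justification of the simple-root property is both simpler and closer to the literal statement.
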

 \begin{proof}
     Take $C_i=C\cap V_{\Gamma_i} $ and use Corollary \ref{MM1}. 
 \end{proof}

\section{Conclusions and future work}

In this paper, we have explored $(\sigma,\delta)$-polycyclic codes in the Ore extension $S=R[x,\sigma,\delta]$, where $R$ is a finite, not necessarily commutative, ring. We have defined these codes as S-submodules in the quotient module $S/Sf$. 
We have established that the Euclidean duals of $(\sigma,\delta)$-polycyclic codes are $(\sigma,\delta)$-sequential codes.
Using $(\sigma,\delta)$-Pseudo Linear Transformation, we have introduced the annihilator dual of $(\sigma,\delta)$-polycyclic codes.  Moreover, we have demonstrated that the annihilator dual of $(\sigma, \delta)$-polycyclic codes preserves their $(\sigma, \delta)$-polycyclic nature. Furthermore,  we have classified the Hamming isometrical equivalence  $(\sigma,\delta)$-polycyclic codes, a crucial aspect in the study of each class of codes. Employing Wedderburn polynomials, we have defined simple-root $(\sigma,\delta)$-polycyclic codes and have constructed the $(\sigma,\delta)$-Mattson-Solomon transform for this class of codes. Finally,  we have provided a decomposition of the simple-root $(\sigma,\delta)$-polycyclic codes associated with distinct eigenvalues of $T_{C_f}$. To further our research we plan to explore $(\sigma,\delta)$-multivariable codes and derive optimal codes from $(\sigma,\delta)$-polycyclic codes.

\end{document}